\documentclass{article}

\usepackage{color}
\usepackage[dvipsnames]{xcolor}

\usepackage{moreverb}								
\usepackage{textcomp}								
\usepackage{lmodern}								
\usepackage{helvet}									
\usepackage[T1]{fontenc}							
\usepackage[english]{babel}							
\usepackage[utf8]{inputenc}							
\usepackage{amsmath}								
\usepackage{amssymb}								
\usepackage{graphicx}								
\usepackage{subfig}									
\usepackage{frame}									
\numberwithin{equation}{section}					
\numberwithin{figure}{section}						
\numberwithin{table}{section}						
\usepackage{xspace}


\setlength{\parindent}{0cm}

\usepackage[labelfont=bf, textfont=normal,
			justification=justified,
			singlelinecheck=false]{caption}

\usepackage{hyperref}								
\hypersetup{colorlinks, citecolor=black,
   		 	filecolor=black, linkcolor=black,
    		urlcolor=black}

\setcounter{tocdepth}{5}							
\setcounter{secnumdepth}{5}

\usepackage{titlesec}		
\titleformat{\chapter}[display]
  {\Huge\bfseries\filcenter}
  {{\fontsize{50pt}{1em}\vspace{-4.2ex}\selectfont \textnormal{\thechapter}}}{1ex}{}[]



\usepackage[textsize=tiny, disable]{todonotes}   
\setlength{\marginparwidth}{2.5cm}

\setlength{\headheight}{15pt}

\usepackage{latexsym,amsopn,amsfonts}
\usepackage[boxed,ruled,vlined,linesnumbered]{algorithm2e}

\newcommand{\VCalgSize}{footnotesize}

\newtheorem{requirement}{Requirement}[section]

\newtheorem{theorem}{Theorem}[section]
\newtheorem{lemma}[theorem]{Lemma}

\newtheorem{definition}{Definition}[section]

\newtheorem{assumption}[theorem]{Assumption}
\newtheorem{corollary}[theorem]{Corollary}

\newcommand{\true}{\mathsf{True}\xspace}
\newcommand{\false}{\mathsf{False}\xspace}

\newcommand{\noPriorityViolation}{\mathsf{noPriorityViolation}\xspace}
\newcommand{\noPriorityViolationStat}{\mathsf{noPriorityViolationStrict}\xspace}
\newcommand{\noPriorityViolationDyn}{\mathsf{noPriorityViolationDynamic}\xspace}
\newcommand{\hasLeftCriticalSection}{\mathsf{hasLeftCriticalSection}\xspace}
\newcommand{\ntfyEstimat}{\mathsf{ntfyEstimat}\xspace}
\newcommand{\grant}{\mathsf{grant}\xspace}
\newcommand{\revoke}{\mathsf{revoke}\xspace}
\newcommand{\doManeuver}{\mathsf{doManoeuvre}\xspace}
\newcommand{\getReachableAgents}{\mathsf{getReachableAgents}\xspace}
\newcommand{\getARSet}{\mathsf{getARSet}\xspace}
\newcommand{\getPossibleTurns}{\mathsf{getPossibleTurns}\xspace} 
\newcommand{\storeMR}{\mathsf{storeMR}\xspace}

\newcommand{\startTimer}{\mathsf{startTimer}\xspace}
\newcommand{\stopTimer}{\mathsf{stopTimer}\xspace}
\newcommand{\tryManeuver}{\mathsf{tryManoeuvre}\xspace}
\newcommand{\aID}{\mathsf{aID}\xspace}
\newcommand{\clock}{\mathsf{clock}\xspace}
\newcommand{\position}{\mathsf{position}\xspace}
\newcommand{\velocity}{\mathsf{velocity}\xspace}
\newcommand{\acceleration}{\mathsf{acceleration}\xspace}
\newcommand{\storeAR}{\mathsf{storeAR}\xspace}
\newcommand{\getMR}{\mathsf{getMR}\xspace}
\newcommand{\last}{\mathsf{last}\xspace}
\newcommand{\precedes}{\mathsf{myPriority}\xspace}

\newcommand{\getUnsafeAgents}{\mathsf{getHigherPriorityAgents}\xspace}

\newcommand{\ntfEst}{\mathsf{ntfyEstimat}\xspace}
\newcommand{\SM}{\mathcal{SM}\xspace}
\newcommand{\M}{\mathcal{M}\xspace}
\newcommand{\D}{\mathcal{D}\xspace}
\newcommand{\U}{\mathcal{U}\xspace}
\newcommand{\R}{\mathcal{R}\xspace}

\newcommand{\tRETRY}{\mathsf{tRETRY}\xspace}
\newcommand{\getSegmentData}{\mathsf{getSegmentData}\xspace}
\newcommand{\sID}{\mathsf{sID}\xspace}
\newcommand{\TdoForeverLoop}{{T_A}\xspace}

\newcommand{\messageForm}[1]{\langle #1 \rangle}

\usepackage{tikz}

\tikzstyle{state}=[shape=circle,draw=blue!50,fill=blue!20]
\tikzstyle{observation}=[shape=rectangle,draw=orange!50,fill=orange!20]
\tikzstyle{lightedge}=[<-,dotted]
\tikzstyle{mainstate}=[state,thick]
\tikzstyle{mainedge}=[<-,thick]

\SetKwInOut{Input}{input}
\SetKwInOut{Output}{output}

\newenvironment{proof}{\noindent{\textbf{ Proof.}}}{\hfill$\blacksquare$}

\newcommand{\sre}{$\mathcal{S}_{RE}$\xspace}
\newcommand{\smn}{$\mathcal{S}_{MN}$\xspace}
\newcommand{\sremn}{$\mathcal{S}_{RE+MN}$\xspace}

\newcommand{\tcno}{$\mathcal{TC}_{Normal}$\xspace}
\newcommand{\tcnno}{$\mathcal{TC}_{Noise}$\xspace}
\newcommand{\tccno}{$\mathcal{TC}_{Comloss}$\xspace}
\newcommand{\tcono}{$\mathcal{TC}_{Offender}$\xspace}
\newcommand{\tcre}{$\mathcal{ES}\,_{Normal}^{RE}$\xspace}
\newcommand{\tcnre}{$\mathcal{ES}\,_{Noise}^{RE}$\xspace}
\newcommand{\tccre}{$\mathcal{ES}\,_{Comloss}^{RE}$\xspace}


\newcommand{\tcomn}{$\mathcal{ES}\,_{Offender}^{MN}$\xspace}
\newcommand{\tcremn}{$\mathcal{ES}\,_{Normal}^{RE+MN}$\xspace}
\newcommand{\tcnremn}{$\mathcal{ES}\,_{Noise}^{RE+MN}$\xspace}
\newcommand{\tccremn}{$\mathcal{ES}\,_{Comloss}^{RE+MN}$\xspace}
\newcommand{\tcoremn}{$\mathcal{ES}\,_{Offender}^{RE+MN}$\xspace}

\newcommand{\vhigh}{$V_{H}$\xspace}

\newcommand{\vlow}{$V_{L}$\xspace}

\newcommand{\vhighttg}{$TTG\,\mbox{\footnotesize $(V_{H})$}$\xspace}

\newcommand{\vlowttg}{$TTG\,\mbox{\footnotesize $(V_{L})$}$\xspace}


\newcommand{\TTG}{$TTG$\xspace}


\newcommand{\chapter}{\section}
\newcommand{\Section}{\subsection}
\newcommand{\Subsection}{\subsubsection}
\usepackage{fullpage}
\begin{document} 


\title{Membership-based Manoeuvre Negotiation\\ in Autonomous and Safety-critical Vehicular Systems\footnote{This technical report is directly based on a master thesis~\cite{mastersthesis} written by Emelie Ekenstedt. The supervisor was Elad M.\ Schiller.}\\(preliminary report)}
\author{Ant\'onio Casimiro~\footnote{Faculdade de Ci\^encias, Universidade de Lisboa, Lisboa 1749-016, Portugal. E-mail: \texttt{casim@ciencias.ulisboa.pt}} \and Emelie Ekenstedt~\footnote{Department of Engineering and Computer Science, Chalmers University of Technology, Gothenburg, SE-412 96, Sweden, \texttt{emeeke@student.chalmers.se}} \and Elad M.\ Schiller~\footnote{Department of Engineering and Computer Science, Chalmers University of Technology, Gothenburg, SE-412 96, Sweden, \texttt{elad@chalmers.se}.}}
\maketitle

\begin{abstract}
A fault-tolerant negotiation-based intersection crossing protocol is presented. Rigorous analytic proofs are used for demonstrating the correctness and fault-tolerance properties. Experimental results validate the correctness proof via detailed computer simulations and provide a preliminary evaluation of the system performances. The results are compared to the ones that can be achieved via a risk estimator with and without combining the proposed protocol. Our fault model considers packet-loss, noisy sensory information and malicious driving. Our preliminary results show a reduction in the number of dangerous situations and vehicle collisions.   
\end{abstract}





\chapter{Introduction}

Autonomous vehicles need a way of assessing risk in order to keep their passengers and other road users safe. Without human inputs, the vehicle has to rely on information from its sensors, and communication with other vehicles and infrastructures. This information must then be interpreted and processed to form a view of the current traffic situation and estimate the likelihood of a future collision, so that the vehicle can take action if the risk gets too high.

Another method of reducing collisions is to let the vehicles follow a protocol that allows them to negotiate their way through traffic situations where lanes either run in parallel or intersect -- i.e. where \textit{critical sections} exist. One such protocol for \textit{manoeuvre negotiation} was developed by Casimiro and Schiller~\cite{ManoeuvreNegotiation} and an extended version of this protocol is presented in this thesis. The protocol defines a priority system where vehicles have to request permission from higher priority vehicles to enter a critical section. Granting a request implies giving the requester higher priority. Therefore, the negotiation process provides both a way for the requester to inform other vehicles about its intended manoeuvre and for the requester to ensure that the other vehicles will stop for it if something goes wrong.

Instead of only focusing on collision avoidance, the protocol also tries to prevent \emph{priority violations}. A \textbf{priority violation} occurs if a vehicle has to either slow down or stop for another vehicle that is considered to have lower priority according to the traffic rules. The protocol is also designed to keep vehicles safe in the event of network failures, such as when messages are delayed or lost.

\Section{Problem formulation}




Collision mitigation for autonomous vehicles is an inherently difficult task. The risk assessments needed for mitigating collisions have to be based on information from sensors where noise is always present. The sensors' accuracy may further be reduced by certain weather conditions and objects blocking their view. Allowing vehicles to broadcast their own position to other vehicles may increase the accuracy of the information available for assessing risks. However, for this setup to have a reliable source of information, it is required that no sensor is faulty and that no vehicle has malicious intentions.

The task of collision mitigation can also be performed by making vehicles reveal their intentions instead of trying to infer intentions from positional data. Inferring intentions is a classification problem and it is hard to correctly classify the intention of a vehicle that engages in unusual driving behaviour. Alternately, if vehicle intentions are disclosed via messages, then the system can be exploited by malicious drivers. 

To further reduce the number of collisions it is also important that vehicles can negotiate with each other and not only communicate state information. Enabling negotiation implies that the vehicles can collaborate to plan safe trajectories through the sections of road that they will share. Efficiency and fault tolerance can be increased if the negotiating process follows a protocol. To ensure safety, the protocol must be carefully designed and analysed, and undergo rigorous testing. Since the system requires that vehicles follow the protocol to mitigate collisions, this setup is also vulnerable to abuse by malicious drivers.

All approaches to collision mitigation have weak points. Therefore, offsetting these weaknesses by combining different systems with complimentary strengths is critically important if we wish to create a reliable safety system for autonomous vehicles.

\Section{Purpose} 

As vehicles with autonomous abilities become more and more common, the need for more robust safety systems increases. A lot of research have been made -- and still is being made -- in the area of predicting vehicle motion and risk estimation. These predictions and estimations can be used to adjust the ego vehicles trajectory in order to avoid collisions. Another approach to implementing a safety system is use communication, vehicle-to-vehicle or vehicle-to-infrastructure, with a protocol determining how messages should be sent. Less time have been spent in this field and in how the two types of safety systems can be combined to create a system less sensitive to noise and network failures etc. The purpose of this thesis is to contribute to the latter field by providing an analytical correctness proof of a safety system based on a communication protocol and by presenting results from running simulations with this safety system, a safety system based on risk estimation, and the two systems combined.

\Section{Delimitations}

The implementation of our safety system will only be tested in computer simulations and not on real vehicles. Due to time limitations, the simulations in this project focus only on the highly risky manoeuvre of left turn across the priority lane in an intersection. Another limitation considers the amount of vehicles that are allowed to be in the intersection at the sane time, which is reflected by the number of concurrently granted vehicles. The version of the communication protocol that is presented in this thesis only allows one held grant per vehicle. 

\Section{Disposition} 

A short literary review of various methods for performing risk estimation for autonomous vehicles is found in Chapter \ref{sec:background}. Chapter \ref{sec:Solution} presents the key concepts of the extension of the manoeuvre negotiation algorithm and is followed by a detailed algorithm description and correctness proof in Chapter \ref{sec:AlgDescrProof}. The software architecture for performing simulations are described in Chapter \ref{sec:implementation} and the evaluation setup and criteria is presented in Chapter \ref{sec:evaluation}. Results from the simulations can be found in Chapter \ref{sec:results} which is followed by a discussion of the results and suggested further research in Chapter \ref{sec:discussion}. Chapter \ref{sec:conclusion} finally presents the main conclusions of this thesis.

\Section{Contribution} 

Risk estimation for autonomous vehicles is currently a highly researched topic. A lot of effort has been put into develop individual safety systems for particular traffic situations that could meet the strict regulations on safety for autonomous vehicles. Combining individual safety systems to achieve higher safety has not received much attention yet and it is this area that this thesis will make a contribution to. 
This thesis presents a way to combine vehicles' individual risk estimation with a distributed, fault tolerant manoeuvre negotiation protocol to increase safety in intersections. The high-level abstraction of the protocol does not limit its usage to intersections and it could potentially be used in many different traffic situations.
This thesis also presents an analytic proof of correctness for the protocol. Results from simulations where communication loss and extra noise are added further give an insight into how well the combined safety systems preform in non-ideal situations.


\chapter{Related work}
\label{sec:background}






Risk Estimation involves three core components: inferring the behaviour of other vehicles, defining a useful risk metric, and computing the risk based on the inferred behaviour. The aforementioned behaviour may be a composition of the manoeuvre that the vehicle is likely to perform and whether or not it is following traffic rules and conventions. 

Detecting high-risk situations in time is vital for avoiding collisions. Another approach to collision mitigation is to design V2V-communication protocols that prevent risky situations from occurring. Here follows a summary of related work in the area of \emph{risk estimation} and \emph{collision mitigation}.

\Section{Behaviour Inference}

Machine learning approaches have been shown to function well as classifiers for driver behaviour. A \textbf{Recurrent Neural Network} (RNN) with a \textbf{Long Short Time Memory} (LSTM) was used by~\cite{LongShort} to infer manoeuvre intention at intersections. RNNs have a feedback loop which gives them a form of memory and LSTM cells can remember a value for an arbitrary amount of time, which is useful when predicting manoeuvres.

A RNN classifier was also used to predict driver intention in roundabouts using short segments of LiDAR tracking data~\cite{RNN}. 




A comparison of different classification methods for \textit{signalled} intersections was performed by \cite{IntersectionTTIRDPSDR}. The authors developed two classifiers: one based on a \emph{Support Vector Machine} (SVM) combined with a Bayesian filter and one based on \emph{Hidden Markov Models} (HMMs), having trajectory parameters such as vehicle speed, lateral position etc. 

Three classification algorithms based on static time to intersection, the acceleration needed to stop at the stop line, and if the vehicle has a high speed in relation to it's distance from the intersection -- compared to a fitted regression curve fit to normal stop-patterns -- were also evaluated. The authors found that the SVM performed much better than all the other classification methods, with the HMM in second place. The authors reasoned that the data sets that the HMM used to attempt to generalise driving patterns contained large outliers which strongly affected its results. The SVM, on the other hand, tries to find a \textit{separating boundary} between the sets and thus is less affected by these same outliers.

An \textbf{Intention Aware} method was developed by~\cite{Lefevre2013IntentionAwareRE} to detect unexpected behaviour in \textit{give-way} intersections. Their method uses a HMM with predefined transition probabilities and a speed model of constant speed. The behaviour of lower priority vehicles is modelled as following or not following the convention of giving way to higher priority vehicles\footnote{see Section \ref{sec:RE} for more details}. 

Other aspects of intention aware motion prediction involves the behaviour to follow other vehicles. An intelligent driver model was used by~\cite{PrecedingVehicle} to account for that vehicles usually adapt to the speed of a vehicle closely preceding it.

\Section{Risk Metrics}

In this context, the commonly applied definition of risk is \textbf{Time to Collision} (TTC), which refers to the time remaining until a collision will occur if no object changes its intended trajectory. 
The authors of \cite{MSM_lanechange} relied on both TTC and another risk metric called \textbf{Minimal Safety Margin} (MSM) -- which provides a measure of the distance to other vehicles -- and managed to detect all risky situations involving lane change on highways in their tests. Their method involved using a Bayesian network to both account for uncertainties and estimate risk for each of the different lanes. They combined this with time-window filtering to produce a robust risk assessment method.

The concept of \textbf{Looming} was introduced by \cite{Looming} to compute risk for \textit{general traffic situations} -- i.e. circumstances where predefined trajectories may not be available -- that would perform better than mere TTC. Looming measures how much space an object occupies in the \textit{visual field} of a vehicle. A positive loom rate signifies that the object is approaching the vehicle and vice versa. The authors used a \textbf{Support Vector Machine} (SVM) trained using a combination of TTC and loom data to detect collisions. 


Approaches have been made to not only detect collisions, but also distinguish more severe collisions from minor ones. For this purpose, G. Xi et al~\cite{Situational} presented a situational assessment method that includes the system's internal energy in the risk computation. Internal energy is computed with reference to the vehicle's reduced mass and relative velocity, and provides an indication of the severity of a potential impact.


An interaction-aware model similar to the one in \cite{Lefevre2013IntentionAwareRE} is combined with a \textbf{Rapidly-exploring Random Tree }(RRT) classifier by \cite{RRT-and-French} in an attempt to increase safety by paying more attention to the traffic environment. The classifier interprets a traffic situation as either ''dangerous'' or ''safe'', enabling a vehicle to be more cautious in situations that are more prone to collisions.

 Many Risk Estimation methods are designed for particularly defined traffic situations -- such as straight highways or 4-way intersections --  but in reality, an autonomous vehicle must be able to handle many different situations. A scenario-adaptive system was developed by Geng et al.~ \cite{Scenario_adaptive} to infer driver behaviour of vehicles in more than one traffic situation. A HMM was trained using collected data for a number of differing behaviour classes, with each class corresponding to distinct manoeuvres (e.g. left turn at an intersection, lane keeping etc.). With the introduction of \textit{priori} information (i.e. traffic rules and conventions) and the use of Bayesian inference, \cite{Scenario_adaptive} also showed that the behaviour prediction time horizon could be increased by on average  $56\%$ for lane change and $26\%$ for long time precision. Scenarios were modelled with an ontology model so that scenarios could be inferred using sensory information -- to identify features in the environment -- and rule-based reasoning.
 
  A model-based algorithm calculating the acceleration or steering angle needed to avoid a collision was presented by \cite{AvoidCollision}. An emergency break or lane-change manoeuvre could be triggered if the output values rise over a set threshold in order to avoid collisions. 

 \Section{Traffic Coordination Algorithms}

 Autonomous vehicles can make use of V2V-communication to negotiate and reach an agreement over who was the right to enter critical sections and thereby mitigate collisions. An approach using a \textbf{temporarily elected leader} was presented by \cite{VirtualTrafficLight} to direct traffic in intersections. The elected leader decides which direction will have right-of-way, analogous to a green light, until a new leader is elected. The approach introduces a way to control traffic flow through both signalled and unsignalled intersections and was shown to reduce traffic accidents by $70\,\%$ in simulations.
 
 Algorithms have also been developed to find an optimal of coordinating traffic through intersections. A \textbf{Sequential Quadratic Programming }(SQP) algorithm was developed by \cite{OptimalCoordination} for this purpose. The formulated \textit{distributed optimisation} problem considers the times when each vehicle enters and exits the intersection, as opposed to considering the complete trajectories, in order to reduce the amount of communicated data between vehicles.
 

 \Section{Cooperative Approaches}
 
 Detecting \textit{temporarily occluded objects} is inherently hard since most  (often light-based) sensors used in vehicles cannot penetrate opaque materials. Cooperation between vehicles could potentially increase safety, since the object that blocks the sensor's view may actually be another vehicle. Such an approach was developed by \cite{CooperativePerception}. This approach involved vehicles using V2V-communication to inform other vehicles of their own state as well as information about detected -- possibly hidden -- objects.
 



\chapter{Solution} 
\label{sec:Solution}

This section presents some of the main ideas that have been developed during this project and serves as an introduction to the Manoeuvre Negotiation Protocol presented in Chapter \ref{sec:AlgDescrProof}. The following sections will cover the main changes made to the original Manoeuvre Negotiation Protocol, introduce the concept of priorities, describe the granting process, and explain how risk estimation is used in our system.

\Section{Adding an explicit release of a received grant}
\label{sec:explicit}


The original protocol for manoeuvre negotiation was designed from a scheduling point of view with focus on both safety and throughput. A permission to change the traffic priorities was bound to a timer set to a predefined manoeuvre time and the permission expired when the timer expired. This approach ensures that no given permission last forever, thus preventing deadlocks, but it assumes that the granted vehicle can finish its manoeuvre within the given time frame. Consider, for example, a granted vehicle $p_i$ that has to stop in an intersection and another vehicle $p_j$ whose grant timer just expired. The protocol would then allow vehicle $p_j$ to be granted access to the intersection, resulting in a collision unless another safety system successfully detects the dangerous situation in time.

\Subsection{From scheduling to setting priorities}
\label{sec:scheduleToPrio}
Instead of viewing the intersection as a scheduling problem with equal focus on throughput and safety we wish to put more focus on safety by introducing negotiable priorities not bound to timers. The first step towards a priority oriented protocol is to remove the grant timer and to introduce an explicit $\messageForm{RELEASE}$ message to end a grant instead. A granted vehicle will send this release message after it has left the intersection to notify vehicles holding a grant for it to release the grant. This approach assures that a grant will not expire too early and is able to ensure safety even if the granted vehicle slows to a stop in the intersection. Explained from the new priority perspective, a granted vehicle is given a higher priority and none of the granting vehicles, with lower priority, are allowed to enter the intersection until assured that the granted vehicle has left the intersection and they themselves have received a higher priority. For more details on priorities\footnote{see Section \ref{sec:Priorities}}.

The original protocol used timers to prevent deadlocks in the presence of message loss. Removing the grant timer and introducing a new message type implies that an additional feature has to be introduced in order to keep preventing deadlocks. The solution we propose is to let the granting vehicles regularly calculate if the granted vehicle has left the critical section. In this project the calculations are based on positional and road map data sent or shared between vehicles, but it would also be possible to use LiDAR, cameras, and other perception tools for autonomous vehicles for this purpose. A granting vehicle that calculates that the granted vehicle has left the intersection will release the grant in the same way as if it had received a $\messageForm{RELEASE}$ message. A lost $\messageForm{RELEASE}$ message will thereby not cause a deadlock.

\Section{Priorities}
\label{sec:Priorities}

The concept of priorities is used by vehicles to determine if they have the right to perform certain manoeuvres when other vehicles are present. More specifically, priorities determine who has the right of way when the planned trajectories of two vehicles intersect. 





Priorities can be determined by traffic rules or be indicated by, for example, traffic lights. The former type of priorities do not change over time and we refer to them as \textit{static} priorities. The latter priorities are time dependent and we refer to them as \textit{dynamic} priorities. 

Here, we define priorities as the order in which vehicles will be allowed to enter the intersection. The priority order depends on which lane the vehicles are in and how far in space and time they are from the intersection. Our system is based on dynamic priorities where vehicles' priorities are negotiated through our manoeuvre negotiation protocol. Static priorities determined by traffic rules are used as a base state for our dynamic priority system. This section further explains the concept of priorities and how it is used in the Manoeuvre Negotiation Protocol.



\Subsection{Priority matrix}
\label{sec:prioMatrix}
The Manoeuvre Negotiation Protocol relies on manoeuvre specific memberships, which for a vehicle $p_i$ are all the vehicles with higher priority than $p_i$ when $p_i$ wishes to perform the manoeuvre bound to the membership. The default priorities in the negotiation protocol are decided by traffic rules. For an intersection, the traffic rules can be summarised in what we refer to as a \emph{priority matrix}.
Each lane leading to the intersection is assigned a priority matrix, where rows correspond to possible manoeuvres and columns represent the other directions into the intersection. Every cell takes on the value $\true$ or $\false$. A $\false$ implies that a vehicle coming from the direction associated with the matrix and intending to do the manoeuvre described by the row, must ask the agents coming from the lane corresponding to the column for permission to change the traffic priorities temporarily. An example of a priority matrix for vehicles on a straight priority road in a 4-way intersection can be found in Figure \ref{fig:PrioMatrix}. Vehicles in a lane on the priority road must ask for permission to temporarily change the traffic priorities from vehicles in the opposing lane only if they intend to do a left turn across path, indicated by the $\false$ at the top of the first column. The priority matrix for the opposing lane going in the opposite direction will in in this case be the same since the intersection is symmetrical. Vehicles approaching the intersection from a non-priority road will have priority matrix different from the one for vehicles on the priority road.

\begin{figure}
    \centering
\begin{minipage}[t]{0.40\textwidth}
\includegraphics[width=\textwidth]{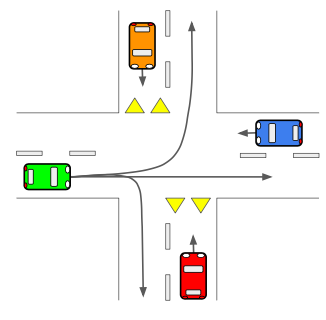}
\end{minipage}
\begin{minipage}[t]{0.50\textwidth}
\includegraphics[width=\textwidth]{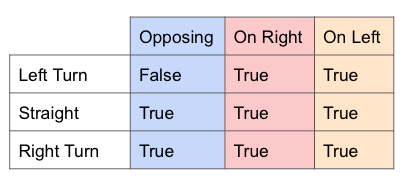}
\end{minipage}
\caption{A priority matrix for the green vehicle coming in from the left on the priority road. A $\false$ infers that the green vehicle has lower priority for the turn described by the row in relation to the lane described by the column. }
\label{fig:PrioMatrix}
\end{figure}

\Subsection{Priority levels}
All vehicles approaching an intersection can be put in a hierarchical order of priority. This order can be visualised as levels on a scale, where the top level determines the group of vehicles that currently have permission to enter the intersection, which are the vehicles with the highest priority. Vehicles in a group of lower priorities will have to wait for higher priority vehicles to exit the intersection before they can enter. This rule is just a direct translation from what the traffic priorities say but the introduction of this concept makes it easier to visualise and perhaps comprehend. The introduction of the Manoeuvre Negotiation Protocol enables vehicles to negotiate with the higher priority vehicles for a temporary rise in its priority and a safe passage through the intersection.

It is worth noting that there can be as many priority levels as there are vehicles on the road since the levels represent an observer's perspective. A vehicle's perspective, however, only needs a maximum of 3 levels corresponding to higher, equal or lower priority than the ego vehicle. This is because a vehicle only needs to compare its own priority to others to navigate through traffic since it just needs to know which vehicles to give way to.


 \begin{figure}
    \centering
\begin{minipage}[t]{0.40\textwidth}
\includegraphics[width=\textwidth]{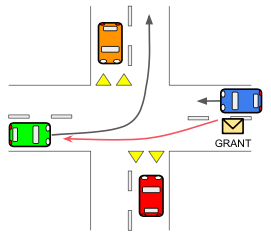}
\end{minipage}
\begin{minipage}[t]{0.40\textwidth}
\includegraphics[width=\textwidth]{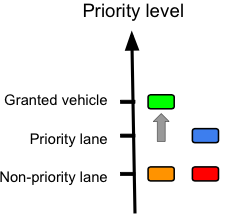}
\end{minipage}
\caption{The green vehicle coming from the left is granted to temporarily increase its priority level. The green vehicle is now allowed to enter the intersection }
\label{fig:Priority_grant}
\end{figure}

\Subsection{Priority violation}
\label{sec:prioViolation}
The two main ideas of the Manoeuvre Negotiation Protocol is firstly to let traffic flow in a way similar to when the protocol is not in use when there are no communication failures. Secondly, when communication failures are present, the protocol is meant to add an extra layer of safety by making the vehicles follow the negotiated dynamic priorities. If, for example, a vehicle has been granted to rise its priority but some circumstance causes the vehicle to stop in the intersection, then other vehicles will not be allowed to enter the intersection until they can confirm that the granted vehicle has left it. This may cause vehicles that initially had higher priority (as specified by traffic rules) to slow down or stop for initially lower priority vehicles. We refer to such events as \textit{priority violations}.

The first main idea can be reformulated to say that priority violations will be avoided when there are no communication failures. To achieve this, our negotiation protocol use the functions $\noPriorityViolationStat$ and $\noPriorityViolationDyn$ to determine if a manoeuvre can be performed without leading to a priority violation. The former takes normal and unusual driving patterns into account while the latter only considers normal driving patterns. More details of the two functions and their usage areas are provided in Section \ref{sec:unsafeAgents} (Section \ref{sec:nPV_Static_alt} for an alternative approach) and \ref{sec:nPV_Dynamic}.

\Subsection{Algorithm Details of \texorpdfstring{$\noPriorityViolationStat$}{Lg}}
\label{sec:unsafeAgents}

$\noPriorityViolationStat$ is used to calculate which vehicles $p_j$ that have higher priority than $p_i$ -- where priorities are defined by the priority matrix for the intersection-- and could possibly reach the intersection in the time $t_{max}$ it takes for $p_i$ to both request a priority change and perform its manoeuvre through the intersection. In order to account for unusual driving behaviour, we defined a threshold distance $d_{max}$ such that any vehicle starting from a distance $d>d_{max}$ away from the intersection will not be able to reach it within $t_{max}$ time units. 

The implementation of $\noPriorityViolationStat(p_i,p_j)$, for two vehicles $p_i$ and $p_j$, first determines if $p_i$ has higher priority than $p_j$ according to the priority matrix and if so returns $\true$. If $p_i$ has lower priority the procedure returns $\true$ if and only if $p_j$ is further than $d_{max}$ length units away from the intersection.


\Subsection{Algorithm details of \texorpdfstring{$\noPriorityViolationDyn$}{Lg}}
\label{sec:nPV_Dynamic}

The implementation approach of $\noPriorityViolationDyn$ makes an assumption about driver behaviour in order to try to increase throughput. This approach requires predefined trajectories and speed models corresponding to the manoeuvres going straight, turning left or right, and stopping. 

Let $p_i$ and $p_j$ be two vehicles approaching an intersection. A call to \\ $\noPriorityViolationDyn(p_i, turn_i, p_j)$ will return $\true$ if the vehicles are estimated to never be in the intersection at the same time. The estimation of time period spent in the intersection -- here called the \emph{occupation interval} -- is performed for both vehicles but the process is here described for $p_j$ only to avoid duplication. Normal driving pattern for $p_j$'s current intended manoeuvre is used to estimate the time it will take $p_j$ to reach the intersection ($TTI_j$) and the time it will take for it to leave the intersection ($TTE_j$). The \textbf{occupancy interval} for $p_j$ is then defined as $O_j = [TTI_j,TTE_j]$. The intention of $p_j$ may be unknown, in which case $O_j$ is a set consisting of the predicted intervals for each of the four driving manoeuvres. 
An additional margin of $\chi(TTI_j-T)$ -- where $T$ is the \textit{current time} and $\chi$ is a constant -- is subtracted from each $TTI_j$ and a margin of $\chi(T-TTE_j)$ is added to each $TTE_j$ to help avoid near collisions. The constant $\chi$ affects the uncertainty in the estimation of $O_j$. Thus, more uncertainty is added to $O_j$ for an agent $p_j$ that is further away in time from the intersection than an agent that is closer. This increase in uncertainty accounts for some deviations from the defined normal driving pattern. We define $\mathcal{O}_j$ as the \textbf{increased occupancy interval} for an agent $p_j$.

In the next step, the increased occupancy intervals for $p_i$ and $p_j$, $\mathcal{O}_i$ and $\mathcal{O}_j$, are compared to evaluate the traffic situation. No priority violation is estimated to occur between $p_i$ and $p_j$ if $\mathcal{O}_i \cap \mathcal{O}_j = \emptyset$. 

This approach cannot guarantee that no priority violation -- and thereby that no collisions -- will occur if vehicles exhibit behaviour that deviates extremely from the normal driving patterns and priorities are kept static. However, enabling priorities to be dynamic by introducing negotiation according to the \textit{Manoeuvre Negotiation protocol} ensures that no collision will occur as long as all vehicles follow the protocol. This implementation of $\noPriorityViolationDyn$ is therefore suitable for use during the negotiation process when agents make decisions on granting or denying another agent. The approach provides higher throughput than the approaches presented in Section \ref{sec:unsafeAgents} and \ref{sec:nPV_Static_alt}. 

\begin{algorithm}[ht!]
	\begin{\VCalgSize}
		\textbf{Structures:}\\
		$AgentState = (ta,x,v,a)$\tcp*[r]{timestamp, position, velocity, acceleration}
		\ \\
		\textbf{Interfaces:}\\
	    $getTimeToIntersection(p[, turn])$: \;
	    $getTimeToExit(p, TTI[, turn])$: \;
		\ \\
		\textbf{procedure} $getOccupationInterval(p, turn = None)$
		\Begin{
		    \textbf{let} $TTI = getTimeToIntersection(p, turn)$\;
		    \textbf{let} $TTE = getTimeToLeave(p, TTI, turn)$\;
		    \textbf{let} $TTI = TTI - (TTI - p.ts) \chi$\;
		    \textbf{let} $TTE = TTE + (TTE - p.ts) \chi$\;
		    \textbf{return} $[TTI, TTE]$
		}
		\textbf{procedure} $\noPriorityViolationDyn(p_i, turn_i, p_j)$ \Begin{
		    \textbf{let} $TTI_i, TTE_i = getOccupationInterval(p_i, turn_i)$\;
		        \textbf{let} $TTI_j, TTE_j = getOccupationInterval(p_j)$\;
		        \lIf{$((TTI_i \geq TTI_j) \land (TTI_i \leq TTE_j)) \lor ((TTI_j \geq TTI_i) \land (TTI_j \leq TTE_i))$}{
		        \textbf{return} $\true$
		       }\lElse{\textbf{return} $\false$}
			\textbf{return} $unsafeAgents$ \;
		}
	\end{\VCalgSize}
	\caption{$\noPriorityViolationDyn$ relying on dynamic priorities for safety}
	\label{alg:nPVDynamic}
\end{algorithm}


%


\Section{Deciding to grant or deny}
\label{sec:interval}

In our system the traffic priorities at any point in time are determined by the static traffic rules and the dynamic changes of priorities established by the manoeuvre negotiation protocol. A vehicle that wants to perform a manoeuvre that would change the current traffic priorities will therefore have to ask for permission to do so from all vehicles in its membership. 
The choice of reply to a request, either $\messageForm{GRANT}$ or $\messageForm{DENY}$, depends on the output of the function $\noPriorityViolationDyn$ which computes \emph{occupation intervals} -- time intervals in which the requester and requestee are predicted to be inside the intersection.

\Subsection{Predicting intersection occupation intervals}

The prediction of our occupancy intervals used in $\noPriorityViolationDyn$ are based on normal driving patterns defined as a speed model, see Figure \ref{fig:SpeedModel}. The speed model defines the ideal speed based on a vehicle's intention and its distance from the intersection. Starting from the most recently known position at time $T$ of a vehicle $p_i$, the time to reach the intersection, $TTI_i$, and time to exit the intersection, $TTE_i$, is computed using the speed $v$ determined by the speed model and by integrating $1/v$ from the current position up until reaching the intersection. The uncertainty constant $\chi$, which reflects how well we believe that vehicles follow the speed model, was set to 0.25 here since the vehicles in our simulation do not deviate much from the speed model.

\begin{figure}[!ht]
    \centering
    \includegraphics[width=\textwidth/5*4]{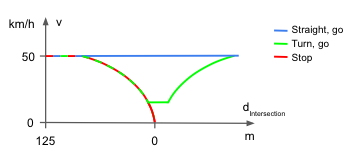}
    \caption{A speed model defining ideal speed v for a vehicle based on its distance from the intersection d$_{Intersection}$ and intention which can be either going straight, turning left or stopping. The model is based on alternating constant speed and constant acceleration.}
    \label{fig:SpeedModel}. 
\end{figure}
 
\Subsection{\texorpdfstring{$\noPriorityViolationDyn$}{Lg}'s role in the grant decision}

A Requestee calls the interface $\noPriorityViolationDyn$ presented in Algorithm \ref{alg:nPVDynamic} to determine if it's safe to grant the requester. The interface calculates the occupation intervals $\mathcal{O}_i$ and $\mathcal{O}_j$ for the requestee $p_i$ and the requester $p_j$ and calculates the gap $\Delta t_{gap}$ between them when put on the same time line, see Figure \ref{fig:intervals}. Any overlap in the intervals, $\Delta t_{gap}<0$, signifies a higher collision risk and will result in a returned $\false$ causing $p_i$ to reply with a $\messageForm{DENY}$. No overlap among the two time intervals results in a returned $\true$ which enables $p_i$ to continue checking the other state related criteria needed when making the grant decision\footnote{see Section \ref{sec:algDescrMNA}}.

\begin{figure}[!ht]
    \centering
    \includegraphics[width=\textwidth/5*4]{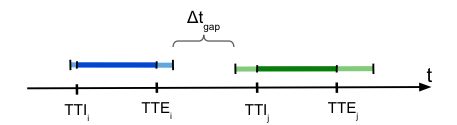}
    \caption{Two occupancy intervals for agent $p_i$ and agent $p_j$ with added safety margins. The time gap $\Delta t_{gap}$ is evaluated as the smallest gap between the two intervals outer margins. }
    \label{fig:intervals}
\end{figure}


\Section{Risk estimation interfaces}

The manoeuvre negotiation protocol needs risk estimation in order to provide safety, especially when deviations and failures are present in the system. Three main usage areas for risk estimation will be presented in this section where the risk estimation can be provided to the manoeuvre negotiation protocol in the form of interfaces. 

\Subsection{Decide when to grant or deny a request}

The receiver of a priority change request has to decide whether to reply with a $\messageForm{GRANT}$ or a $\messageForm{DENY}$. The decision has to be based on how much risk is involved with the proposed change of priority. In this case we define high risk as an overlap in the estimated time intervals of when the requester and requestee will be in the intersection\footnote{see Section \ref{sec:interval}}.



\Subsection{Decide when enough responses have been received}

The negotiation process relies on memberships to ensure that all agents necessary for ensuring safety are involved in a negotiation process.
The memberships are dynamic and are recalculated every $T_M$ time units. A membership retrieved just before sending a request may therefore change within the $2T_D$ time units that the requesting vehicle waits for replies. Some agents may be added and some removed from the membership. An agent that is removed during the waiting time may cause the requesting vehicle to wait unnecessarily, since an agent removed from the membership does not have higher priority anymore and their (possibly lost) reply is no longer relevant. A requesting agent should therefore filter the received replies to check if it has got replies from the intersection of the set of agents in the membership valid when the request was sent $\mathcal{D}$ and the current membership $MS$ to avoid this waiting period. More formally, the vehicle needs to check if $\mathcal{D} \cap M \cap \mathcal{R} = \phi$, where $\mathcal{R}$ is $\mathcal{D}$ minus the agents that have responded to the latest request and $\phi$ is the empty set.

\Subsection{Risk estimation as a background process}
\label{sec:RE_Background}
In the real world, outside simulations, one cannot assume that all vehicles will follow an ideal trajectory and follow the traffic rules at all times. An autonomous vehicle thus need to be able to detect a deviating and possible malicious vehicle and perform needed action in time to avoid a collision. Continuously running a risk estimation procedure, that can perform long-term predictions, as a background process increases the chances of detecting these deviating vehicles in time to avoid a collision. For this thesis we are using a slightly modified version of the risk estimation procedure presented in \cite{Lefevre2013IntentionAwareRE}\footnote{see Section \ref{sec:RiskEstimator}}. We refer to this process as the \emph{Risk Estimator}. When the Risk Estimator detects a risk over 75\% then it triggers an \emph{emergency break}\footnote{see Section \ref{sec:RiskEstimator} for its definition of risk}.


\subsubsection{Interconnected risk estimator and manoeuvre negotiation}

The change of priorities achieved with the manoeuvre negotiation protocol must be recognised by the risk estimator since the risk estimator uses priorities to calculate risk. For example, a vehicle $p_i$ on the priority road that have granted another vehicle $p_j$ should not detect a high risk when $p_j$ enters the intersection before $p_i$ since $p_i$ has agreed to give $p_j$ higher priority. If the risk estimator would not be made aware of the priority change then, depending on the time gap, it might detect a high risk since $p_j$ would be seen as a low-priority vehicle doing a turn in front of a high-priority vehicle.

The protocol calls the interface $\ntfEst$ defined in Figure \ref{alg:defs} with $\grant$ to tell the risk estimator about the priority change. The risk estimator puts the granted vehicle in a grant list identifying vehicles with higher priority, until the interface $\ntfEst$ is called again from the protocol with $\revoke$.



\chapter{The Manoeuvre Negotiation Algorithm}
\label{sec:AlgDescrProof}


\Section{System Settings}
\label{sec:systemSettings}
Traffic rules regulate entry allowances to avoid ambiguity and reduce collision risk when two or more vehicles arrive at an intersection's border. Each vehicle has to arrive at the intersection border in a way that allows the driver to observe all other road users that they must give way to. It is allowed to enter and cross the intersection before higher priority vehicles as long as it does not \emph{violate their priority} -- i.e. either cause a collision or force the higher priority vehicles to change speed, path or trigger their safety systems to perceive risk. 

We consider a road intersection with \emph{predefined priorities}. These priorities are determined by the \textit{give way} traffic rules, such that whenever two vehicles have intersecting paths, one of them has a predefined (strictly) higher priority that depends on the incoming lanes of the two vehicles. 

An \textbf{}{agent} is a \textit{computing and communicating entity} that assists in deciding whether a vehicle may cross the intersection. Such algorithmic agents can be components in a distributed automated driving system. Note that our model assumes that all vehicles are controlled by these agents and it does not focus on how to deal with malicious agents in a way that can represent (careless) human drivers, since we assume that all agents are attentive and follow the traffic rules. We also assume that the agents periodically transmit information about their whereabouts over communication channels that are prone to failures. 

Agents can perform manoeuvres concurrently, such that each agent performs a sequence of manoeuvres; one manoeuvre at a time. Agents have autonomous abilities that enable them to keep to their lanes and to not drive too close to the vehicles in front of them in the same lane. We assume that the agents have collision avoidance mechanisms. However, these mechanisms cannot facilitate (comfortable) intersection crossings since these mechanisms do not assess the risk of entering the intersection in a way that violates the traffic rules. For example, each agent must give way to higher priority traffic. This work focuses on agents' obligations to stop, if necessary, and let higher priority agents proceed. The exact decision as to whether a complete stop is needed depends on the traffic rules and situation. This can be determined by a separate algorithmic (non-distributed) component for risk estimation, such as the one by Lef{\`e}vre et al.~\cite{Lefevre2013IntentionAwareRE}, which in turn depends on manoeuvre prediction and other components. Our model assumes access to such a risk estimator and that the vehicle changes its speed (but not its path) whenever risk is perceived. 

\begin{definition}
	\label{def:priorityViolationOccurred}
	Suppose that agent $p_i$ must give way to agent $p_j$ since $p_i$ has lower priority than $p_j$. We say that a \emph{priority violation} has occurred if, and only if, $p_i$ decides to enter and cross the intersection before $p_j$ in a way that causes $p_j$'s risk estimator to observe the need to reduce speed.
\end{definition}


We note that in the presence of aberrant behaviour -- such as that caused by vehicular malfunction or malicious intent -- it is impossible to guarantee that no priority violation (Definition~\ref{def:priorityViolationOccurred}) will occur. For example, a vehicle may enter an empty intersection without any approaching vehicle in sight. Any aberrant event may then cause the vehicle to come to a permanent halt within the intersection, thereby blocking the way for other vehicles. For the sake of making the scope manageable, Assumption~\ref{req:noSafetyViolation} states that such events do not occur. Our model assumes that agent $p_k$ has access to the function $\noPriorityViolationDyn(@parameters)$, where $@parameters$ refers to timed information about the computational context, such that Assumption~\ref{req:noSafetyViolation} is respected.
The \textit{time aspects} defines a time horizon and concerns the time it will take a requesting agent to send and receive answers to a request and how long it will take to perform the requested manoeuvre. The function $\noPriorityViolationDyn$ must then evaluate the situation up until this time horizon.

We have considered two different implementations of $\noPriorityViolationDyn$ that provide different accuracy to throughput ratios. The first solution is based on patterns for normal driving in approaching an intersection. Speed models are used to estimate when vehicles will enter and exit the intersection. Vehicles deviating from the normal driving pattern are accounted for by adding an extra safety margin based on how far into the future the entry and exit occurs. Overlaps in the estimated intervals implies a potential priority violation. This approach is elaborated in Section \ref{sec:nPV_Dynamic}.

The second approach is based on occupancy prediction~\cite{Occupancy} and should be able to identify more situations in which priority violations can occur. However, the accuracy of predicting a future priority violation is favoured over keeping the throughput through the intersection high. Further details are given in Section \ref{sec:nPV_Static_alt}.

\begin{assumption} 
	\label{req:noSafetyViolation}
	Suppose that the \emph{requester} agent $p_i$ requests a permission at a time $t_i$ from all relevant \emph{requestee} agents that have a higher priority. Moreover, $p_j$ receives $p_i$'s request at time $t_j$, asserts that $\noPriorityViolationDyn(AR,turn)=\true$ and then replies to $p_i$ with a $\messageForm{GRANT}$, where $t_j$ is between $t_i$ and $t_i+ T_D$, and $T_D$ is a bound on the communication delay of a message sent from one agent to another. Furthermore, by the time $t'_i$, agent $p_i$ has received $\messageForm{GRANT}$ replies from all the relevant agents, where $t'_i$ is a time between $t_i$ and $t_i+2T_D$. In this case, we assume that the fact that agent $p_i$ starts crossing the intersection immediately after time $t'_i$, implies that it is safe to cross and no priority violation will occur.
\end{assumption}

We note that some benign violations of Assumption~\ref{req:noSafetyViolation} are not deemed to violate system safety as a whole. Specifically, consider the case where agent $p_i$, before time $t'_i$, stays outside the intersection and immediately after time $t'_i$ starts crossing the intersection and halts (unexpectedly and without a notification) before completing the manoeuvre. The use of a risk estimator, such as the one by Lef{\`e}vre et al.~\cite{Lefevre2013IntentionAwareRE}, allows ample time for the arriving agent $p_j$ to safety avoid a crash with $p_i$. To the end of accommodating the ability of avoiding such benign violations of Assumption~\ref{req:noSafetyViolation}, we assume that the algorithm can notify its (local) risk estimator module (by calling $\ntfyEstimat(state): state \in \{\grant,\revoke\}$) about temporary changes to the priority setting. E.g., if $p_j$ grants $p_i$ a permission to cross the intersection before itself then $p_j$ calls $\ntfyEstimat(\grant)$, and once $p_j$ learns about $p_i$'s exit from the intersection then $p_j$ calls $\ntfyEstimat(\revoke)$. In detail, Lef{\`e}vre et al.~\cite{Lefevre2013IntentionAwareRE} refer to behaviour that is expected from the agent $p_j$ when it has a priority over $p_i$. We simply assume that this priority information is an input to the risk estimator module that can change during the period in which $p_i$ is granted a permission to cross the intersection before $p_j$. We note that without this assumption, a risk can be falsely detected if $p_i$ starts to cross the intersection before $p_j$.

\Subsection{Synchrony in the presence of communication failures}
All agents have access to a universal clock. We assume that the system uses mechanisms for omitting messages that do not arrive in a timely manner. That is, a timely message is delivered at most $T_D$ time units after being sent. Non-timely messages are not processed by the algorithm. Let $E$ be a system execution in which any sent message is delivered in a timely manner, i.e., no message is lost or omitted, and all delivered messages are sent during $E$. In this case, we say that $E$ is a \emph{synchronous} execution. All other executions are simply referred to as \emph{non-synchronous}. We assume that during non-synchronous executions, the \emph{communication is fair}. That is, any message that is sent infinitely often arrives infinitely often in a timely manner. However, we do not bound the time between two consecutive timed message arrivals. We note that the latter implies unbounded, yet finite, message delays. Since we consider the general context of wireless communication, we assume that the system without notice can reach a non-synchronous execution for any unbounded (yet finite) period. 

\Subsection{Task requirements}
The correctness proof will demonstrate both safety (Requirements~\ref{req:MembershipService}, \ref{req:RequestInitialisation}, \ref{req:GrantingDenying}, \ref{req:explicitRelease}, and \ref{req:AllowingDeferring}) and progression (Requirement~\ref{req:Progression}) for synchronous executions. For the case of non-synchronous executions, the proof considers the demonstration of safety (without considering progression).

\Subsection{Limitations}
For the sake of simple presentation, we make the following assumptions. We assume that an agent can request only one manoeuvre at a time since we consider the task of multiple manoeuvre requests to be related to planning. We consider a system that at any time allows at most one grant to a single agent; systems that consider concurrent grants are relevant extensions to the proposed solution. Once an agent is granted a permission to perform a manoeuvre, we require the system to allow the agent to either (i) finish this manoeuvre within $T_{Man}$ time units or (ii) to communicate its position in a timely manner to all agents that might collide with it (and by that notify them about the delay in the termination of the manoeuvre). We note that a possible extension to the proposed solution exists in which the system can suspect any granted agent that does not fulfil (i) and (ii) to be faulty. Such extensions can, for example, trigger a recovery procedure that makes sure that the road (intersection) is clear before revoking the grant and resuming normal operation.

\Subsection{Cloud-based membership service}
Requirement~\ref{req:MembershipService} considers a Cloud-based service that computes new memberships every $T_M$ time units and performs the computation on state data sent periodically from all agents to a storage service. A \emph{membership} for an agent $p_i$ contains a set of identifiers of agents $p_j$ whose: (i) path can interest with the one of $p_i$'s paths, and whose (ii) lane has equal or higher priority than $p_i$'s lane, and whose (iii) distance to $p_i$ is at most $d_{max}$ length units. Since the priorities are determined by direction specific traffic rules, we assume that the membership information provides one membership per possible manoeuvre through the intersection.

The Cloud-based service associates a timestamp $ts$ with every instance $M$ of a membership. The timestamp of the oldest agent state data of the agents in $M$ determines $ts$. Moreover, we say that $M$ is \emph{fresh} if $T < ts + 2T_M$, where $T$ is the current time. When agent $p_j$ appears in $p_i$'s fresh membership $M$, we say that $p_j$ may cause a  \emph{priority violation} with $p_i$.

Every membership $M$ of an agent $p_i$ is also associated with a boolean value $MO$, which we call the \emph{Manoeuvre Opportunity} indicator. The membership service sets $MO = \true$ to indicate that all agents in the membership set are within the maximum communication range of $p_i$ or that the provided membership for some other reason should not be considered valid. Note that the latter assumption allows the Cloud-based service, when needed, to indicate that the functionality offered by the proposed solutions is temporarily unavailable due to some implementation constraints. A concrete implementation of the proposed solution might, for example, require the assumption of an upper bound, $N_{max}$, on the number of vehicles that can occupy the considered traffic scenario due to limitations related to space, and communication range. In addition to violations of $N_{max}$, other constraints could consider the occurrence of failures or limitations on velocities in which the offered functionality cannot be considered feasible and safe.

\begin{requirement}[Membership Service] 
\label{req:MembershipService}
Let $M$ be a membership for an \\agent $p_i$ that has a manoeuvre opportunity at a time $T$ and $M$ is fresh: $T < ts + 2T_M$. We require that $M$ contains all agents that could potentially lead to a priority violation with $p_i$ from $T$ to $T + 2T_D + T_{Man}$.
\end{requirement}

\Subsection{Requester-to-requestee dialogue} 
\label{sec:ProofRequirements}
We consider a pair of distributed state-machines that the requester and requestees execute; cf. agents $p_i$, and respectively, $p_j$ in Figure~\ref{fig:agentDialog}. We require that these agents use $\noPriorityViolationDyn()$ in a way that allows using Assumption~\ref{req:noSafetyViolation} safely and repeatedly. To that end, there is a need to take into  consideration both $\messageForm{DENY}$ messages and temporary timing failures due to the occurrence of non-synchronous executions. Therefore, the requester automaton includes a pair of states for facilitating a persistent transmission of request messages $\messageForm{GET}$. Thus, whenever it is time to send a request, the requester enters the state $GET$ directly or passes through state $TRYGET$ first before reaching state $GET$. From state $GET$ $p_i$ repeatedly multicast to all relevant requestees the message $\messageForm{GET}$. Moreover, once requester $p_i$ recognise that it has received a $\messageForm{DENY}$ message, $p_i$ defers communication by entering the state $TRYGET$ and stays there until it is time to return to $GET$. We note that all agents play both roles of requesters and requestees; the proposed solution does not distinguish between these roles. For the sake of simple presentation, the requirements presented here mostly follow the above role separation with one exception; Requirement~\ref{req:GrantingDenying} refers explicitly to the case in which a requester receives a request and thereby also has to play the role of a requestee.


\begin{requirement}[Request Initialisation] 
\label{req:RequestInitialisation}
Consider an agent $p_i$ that aims to perform a manoeuvre through an intersection and a membership $M$ that corresponds to the direction of that manoeuvre. We require $p_i$ to first request a permission from all agents in $M$ according to the required dialog presented in Assumption~\ref{req:noSafetyViolation}. Moreover, we allow $p_i$ to initiate such a request if (i) $M$ is fresh, and (ii) $MO = \true$. Otherwise, we require $p_i$ to check conditions (i) and (ii) within $T_A$ time units.
\end{requirement}

Requirement~\ref{req:GrantingDenying} uses the term \emph{status} when referring to the automaton state and assumes that the state of an agent can encode the identifier of the agent to which it has granted a permission. It also assumes the existence the function $\precedes(m)$ that can arbitrate between two concurrent requests. This function returns $\true$ if the current request of the calling agent has higher priority than the request encoded by message $m$. The implementation of $\precedes()$ can depend on the directions of the relevant agents, the time in which their requests were initiated as well as the agent identifiers. 

\begin{requirement}[Granting and Denying] 
	\label{req:GrantingDenying}
	Suppose that the requester agent $p_i$ multicasts a permission request, message $m=\messageForm{GET}$, at a time $t_i$, to all requestee agents in $\D = M$, where $M$ is $p_i$'s fresh membership at the time $t_i$ for which there is a positive indication for a manoeuvre opportunity. Moreover, $p_j$ receives $p_i$'s request at a time $t_j$, where $t_j$ is between $t_i$ and $t_i+ T_D$. We require agent $p_j$ to call the function $\ntfyEstimat(\grant)$ and reply with a $\messageForm{GRANT}$ to $p_i$ if, and only if, $\noPriorityViolationDyn()=\true$ and: (i) $p_j$'s status is $NORMAL$ or $TRYGET$, (ii) $p_j$'s status is $GRANT$ and its state encodes that the grant is already given to $p_i$, or (iii) $p_j$'s status is $GET$ ($p_i$ and $p_j$ have concurrent requests) and $p_i$ has the higher priory according to $\precedes(m)$. Otherwise, $p_j$ replies to $p_i$ with a $\messageForm{DENY}$ message. 


\end{requirement}

\begin{requirement}[Releasing] 
	\label{req:explicitRelease}
	Let $p_i$, $p_j$, $\D$, and $M$ be defined as in Requirement~\ref{req:GrantingDenying}. We require agent $p_i$ to multicast a release message, $\messageForm{RELEASE}$, to all requestee agents $p_j$ in $\D$ only in the following cases: (i) $p_i$'s status is $GET$ and it receives a message $\messageForm{DENY}$ from $p_j \in \D \cap M$, (ii) $p_i$'s status is $GET$ and its timer expires so that it needs to defer its requests by changing status to $TRYGET$, (iii) $p_i$'s status is $EXECUTE$ and it has just left the intersection, and (iv) case (iii) of Requirement~\ref{req:GrantingDenying} occurs. We also require agent $p_j$ to call $\ntfyEstimat(\revoke)$ only in two special occasions. The first occasion is when $p_j$ status is $GRANT$, its state encodes that it has granted permission to agent $p_i$ and it has just received the message $\messageForm{RELEASE}$ from agent $p_i$. The second occasion occurs when $p_j$'s status is $EXECUTE$ and $p_j$ has left the critical section. 
\end{requirement}
	  
\begin{figure*}[t!]
	\centering
	\includegraphics[page=1,scale=0.5]{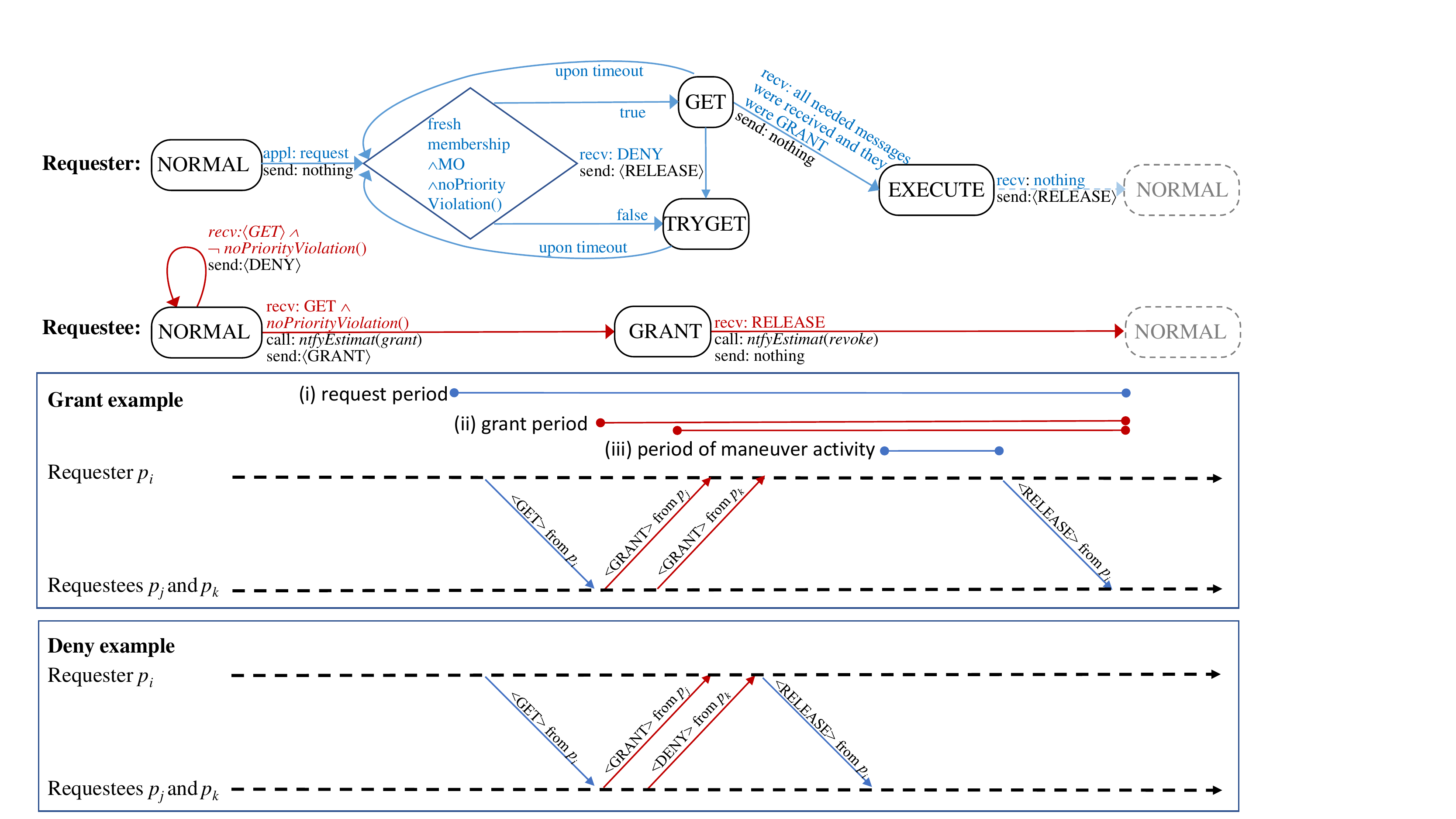}
	\caption{\label{fig:agentDialog}The dialog between the requester and the requestee depicted by the state-machines above and below respectively. Examples for grant and deny processes are described in the message sequence charts provided above and below respectively. The grant example also shows the periods in which (i) the request is ongoing, (ii) the grant has changed the state of the requestees so that the requester has the priory over the requestees, and (iii) when the manoeuvre is performed.}
\end{figure*}

Consider an agent $p_i$ that has requested a permission from all agents in $\D$ (when the value of $p_i$'s membership set was $\D$, cf. Requirement~\ref{req:RequestInitialisation}). Suppose that $G$ is the set of agents whose responses arrived to $p_i$ from $p_j \in \D$ within $2T_D-T_A$ time units, and $M$ is $p_i$'s current membership. We say that $p_i$ is \emph{fully-granted} when (i) $G \supseteq \D \cap M$ and (ii) all the agents in $G$ responded with a $\messageForm{GRANT}$ or if at least $2T_D$ time units have passed since the request was first sent (to the agents in $\D$). Whenever $G$ includes an agent that responded with a $\messageForm{DENY}$, we say that $p_i$ is \emph{fully-denied}. In all other cases, we say that $p_i$'s request is \emph{pending}.

\begin{requirement}[Allowing and Deferring Manoeuvres] 
	\label{req:AllowingDeferring}
We require the agent $p_i$ to: (i) only perform a manoeuvre that is fully-granted and to send a release message upon its completion, and (ii) in the case of a fully-denied manoeuvre, send a release message to all agents in $\D$ and restart the request process after $T_A$ time units.
\end{requirement}

Suppose that there is a (possibly unknown, yet finite) bound, $T_{admissible}$, such that a period of $T_{admissible}$ time units in which $p_i$'s membership is fresh and empty implies (i) a safe crossing opportunity for $p_i$ during that period (in the sense that $p_i$'s membership is empty, the manoeuvre opportunity indicator is $\true$ and it has not granted any request for a period that overlaps with this period of $T_{admissible}$ time units) and that (ii) agent $p_i$ has the time to perform the manoeuvre (in the sense that $p_i$ has the time to notice that its membership is fresh and empty, and the manoeuvre opportunity indicator is $\true$). We call such a crossing opportunity an \emph{admissible crossing opportunity}. We note that the fulfilment of Requirement~\ref{req:MembershipService} implies the existence of $T_{admissible}$.


	%
	%

\begin{requirement}[Progression] 
	\label{req:Progression}
	Let $E$ be a synchronous system execution in which admissible crossing opportunities occur infinitely often (for every direction of the intersection). Moreover, suppose that, infinitely often, the agents that are the closest to the intersection, aim at crossing the intersection. Then, a fully-granted agent crosses the intersection infinitely often. 
\end{requirement}

%

The constants considered in the proposed algorithms and that are mentioned in the Theorems and Lemmas are provided in Table~\ref{tab:constants}.

\section{Algorithms}

\label{sec:Pseudo}

The time constants used in the proposed algorithms are provided in Table~\ref{tab:constants}. The following internal size order is implied: $T_{Man} > T_D \geq T_A$ and $T_M>T_A$.

\begin{table}[!h]
	\centering
	\begin{tabular}{|l|l|} 
		\hline
		$T_M$ & Period of membership protocol 
		\\ \hline
		$T_A$ & Period of agent registry update 
		\\ \hline
		$T_D$ & Upper bound on transmission delay between vehicles 
		\\ \hline
		$T_{Man}$ & Upper bound on manoeuvre execution time 
		\\ \hline
	\end{tabular}
	\caption{Time constants used in Algorithm \ref{alg:membership} and \ref{alg:agent-events}. Observe that we assume that $T_{Man} > T_D \geq T_A$ and $T_M>T_A$.} 
	\label{tab:constants}
\end{table}


\begin{algorithm}[ht!]
	\begin{\VCalgSize}
		
		\textbf{Structures:}\\
		$Segment = (xs, xe)$\tcp*[r]{positions: start, end}
		$AgentState = (ta,x,v,a)$\tcp*[r]{timestamp, position, velocity, acceleration}
		$Agent = (\aID,AS)$\tcp*[r]{agent ID, AgentState structure}
		$Membership = (tm,MO,\SM)$\tcp*[r]{timestamp, MO flag, $\SM$ IDs of agents in the membership}
		\smallskip
		\textbf{Constants:}\\
		$\sID$: the segment ID\;
		$S := \getSegmentData(\sID)$\tcp*[r]{Segment registry}
		\smallskip
		\textbf{Variables:}\\
		$\mathcal{MR} \gets \emptyset$\tcp*[r]{Membership registry}
		\ \\
		\textbf{Interfaces:}\\
		$\clock()$: read current time\;
		$\getSegmentData(sID)$: read Segment registry for segment $sID$ from the storage service\;
		$\getARSet(sID)$: read all Agent Registries for segment $sID$ from the storage service\;
		$\storeMR(aID,MR)$: write $MR$ (Membership registry) for agent $aID$ on the storage service\;
		$\getUnsafeAgents (AR,t)$: uses $\noPriorityViolationStat$ (Section \ref{sec:unsafeAgents}) to obtain the sets of agents with whom the agent $\aID$ identified by $AR$ may cause a priority violation for the different possible turns defined by $\getPossibleTurns(AR)$. We assume that this set includes all the agents of higher priority than $\aID$ that could possibly arrive to the intersection by the time $t$\;
		$\getReachableAgents(AR,t)$: obtain the set of agents that will be within communication range from the agent identified by $AR$ until $t$.  \;
		$\getPossibleTurns(AR)$: obtain a list of possible turns for an agent\;
		\ \\
		
		\textbf{do forever (once in every $T_M$ time units)\label{ln:T_UPDATEExp}}
		\Begin{
			\textbf{let} $\mathcal{A} = \getARSet(\sID())$ \label{ln:MembershipGetARset}\;
			\ForEach{$AR$ in $\mathcal{A}$}{
				\If{$AR.AS.x > S.xs \land AR.AS.x < S.xe$ \label{ln:MembershipDistRestriction}}{ 
					\textbf{let} $t = \clock()$\;
					\textbf{let} $\mathcal{U}=\getUnsafeAgents (AR,t+2T_M+2T_D+T_{MAN})$ \label{ln:MembershipUnsafe}\;
					\textbf{let} $\mathcal{R}=\getReachableAgents(AR,t+2T_M+2T_D+T_{MAN})$ \label{ln:MembershipReachable}\;
					\ForEach{$turn$ \textnormal{in} $\getPossibleTurns(AR)$}{
						\textbf{let} $t_{min} = min(\alpha.AS.ta : ( \alpha \in \mathcal{A}) \land (\alpha.aID \in \mathcal{U}[turn]))$ \label{ln:MembershipClock}\;
						\leIf{$\mathcal{U}[turn] \subset \mathcal{R}$ \label{ln:MembershipUnsafeReachable}}{$MR[turn] \gets (t_{min},\true, \mathcal{U})$}{$MR[turn] \gets (t_{min},\false,\emptyset)$}
					}	
					$\storeMR(AR.aID,
					MR)$ \label{ln:StoreMembership}\; 
				}
			}
		}
	\end{\VCalgSize}
	\caption{The membership algorithm.}
	\label{alg:membership}
\end{algorithm}

\begin{figure*}[ht!]
	\LinesNotNumbered
	\fbox{
	\begin{minipage}{\textwidth}

		    \textbf{Data types:}\\
            \hspace*{1em} $MessageType \in \{GET, DENY, GRANT, RELEASE\}$ \\
            \hspace*{1em} $StatusType \in \{NORMAL, GET, GRANT, TRYGET, GRANTGET, EXECUTE \}$ \\		
            \smallskip
			\textbf{Structures:} \\
			$AgentState = (ta,x,v,a)$\tcp*[r]{timestamp, position, velocity, acceleration}
			$Agent = (\aID,AS)$\tcp*[r]{agent ID, AgentState structure}
			$AgentTag = (ts,\aID,turn)$\tcp*[r]{timestamp, agent ID, intended turn}
			$Membership = (tm,MO,\SM)$\tcp*[r]{timestamp, MO flag, $\SM$ set of agent IDs}
			$Message = (senderID,t,type,(data))$\tcp*[r]{agent ID,timestamp,message type,message data}
			\smallskip
			\textbf{Constants:} \\
			$\aID$: a hard-coded agent identifier\; \\
			\smallskip
			\textbf{Variables:} \\
			$AR \gets \emptyset$ \tcp*[r]{Agent registry} 
			$MR \gets \emptyset$\tcp*[r]{Membership registry, set of Memberships}
			$status \gets NORMAL$ \\
			$tag \gets \bot$\tcp*[r]{Unique tag for each manoeuvre}
			$grantID \gets \bot$\tcp*[r]{ID of the agent with an active GRANT}
			$\M \gets \emptyset$\tcp*[r]{Set of received messages}
			$\D \gets \emptyset$\tcp*[r]{Set of destination agent IDs}
			$\mathcal{R} \gets \emptyset$\tcp*[r]{Set of expected response agent IDs}
			$\tRETRY$\tcp*[r]{Timer for keeping track of when to next try starting a new request round}
			
			\ \\
			\textbf{Interfaces:}\\
			$\clock()$: read current time\; \\
			$\position()$: read current agent position\; \\
			$\velocity()$: read current agent velocity\; \\
			$\acceleration()$: read current agent acceleration\; \\
			$\startTimer(timer,delay)$: start named $timer$ for $delay$ time units\; \\
			$\stopTimer(timer)$: stop named $timer$\; \\
			$\storeAR(AR)$: write an $AR$ registry to the storage service \; \\
			$\getMR(aID)$: read from the storage service the $MR$ for agent $aID$\; \\
			$\doManeuver$: execute the manoeuvre by setting the intention to go\; \\
			$\noPriorityViolationDyn(AR,turn)$: check if the agent specified in $AR$ have sufficient time to cross the intersection without violating the priority of the ego vehicle, see Section \ref{sec:nPV_Dynamic}\; \\
			$\last()$: returns $\true$ if $\mathcal{R} \cap D \cap MR.\SM = \emptyset$, $\false$ otherwise\; \\
			$\precedes(m)$: returns $\true$ if $p_{\aID}$ has an on going request and that request has the priority over the request of message $m$, eg having an earlier time stamp, or if equal, lower agent ID.\\
			%
			%
			$\hasLeftCriticalSection(aID)$: returns $\true$ when the agent $p_{aID}$'s location is known to be outside the boundaries and heading out from the intersection.\;  \\
			$\ntfyEstimat(\grant,aID/\revoke)$: Notify the risk estimator about priority changes of either a grant of agent $aID$ or the revoke of a grant\;
			
			
		
		\caption{\label{alg:defs}Structures, variables and interfaces for the manoeuvre negotiation protocol}
		\end{minipage}
}

\end{figure*}

\begin{algorithm}[ht!]
	\begin{\VCalgSize}
		
		\textbf{do forever (once in every $\TdoForeverLoop$ time units)\label{ln:doForever}} 
		\Begin{
			
			

			$AR \gets (\aID,(\clock(), \position(), \velocity(), \acceleration()))$ \label{ln:getOwnStatus}\;
			$\storeAR(AR)$\label{ln:storeAR}\;
			$MR \gets \getMR(\aID)$ \label{ln:getMRdoForever}\;
			$\M \gets \{m \in \M: m.senderID \in (\D \cap MR.SM) \}$\label{ln:removeStale}\;
			\If(\tcp*[f]{All expected answers received}){$status=GET \land \last()$\label{ln:statusGETlast}}{
				$\stopTimer(\tRETRY) \label{ln:doForeverStopTRETRY}$\tcp*[r]{No need for a retry}
				\lIf{$\nexists {m \in \M}:m.type = DENY$\label{ln:nexistsMtypeDENY}}{
					$status \gets EXECUTE$; $\ntfyEstimat(\grant)$
				}\lElse{
					\{$status \gets TRYGET$;
					\textbf{multicast} $\langle \aID, \clock(), RELEASE, (AR, tag) \rangle$ \textbf{to} $\D$\label{ln:multReleaselast};					$\startTimer(\tRETRY,T_A)$\label{ln:nexistsMtypeDENYelse};\}}
			}
			\If{$status=EXECUTE \land \hasLeftCriticalSection(\aID,MR.\SM)$ \label{ln:EXECUTEHasLeftCS}}{
				\textbf{multicast} $\langle \aID, \clock(), RELEASE, (AR, tag) \rangle$ \textbf{to} $\D$\tcp*[r]{Explicit release}
				$status \gets NORMAL$; $\ntfyEstimat(\revoke)$;\label{ln:statusgetsNORMAL}
			}
			\If{$status \in {GRANT, GRANTGET} \land \hasLeftCriticalSection($grantID$,MR.\SM)$ \label{ln:GRANTHasLeftCS}}{
			    $\ntfyEstimat(\revoke)$\label{ln:revoke}; $grantID \gets \bot$\label{ln:GRANTClearID}\;
				\lIf{$status = GRANTGET$}{$status \gets TRYGET$; $\tryManeuver(tag.turn)$\label{ln:prevStatusGRANTGETtryManeuver}}
				\lElseIf{$status = GRANT$}{$status \gets NORMAL$\label{ln:GRANTtoNORMAL}}
			}
		}

		\textbf{procedure} $\tryManeuver(turn)$ \label{ln:tryManeuverTurn}
		\Begin{				                              
			\lIf{$status \in \{NORMAL, GRANT\}$}{$tag \gets (\clock(),\aID,turn)$ \label{ln:createTag}}
			\uIf{$status \in \{NORMAL, TRYGET\}$\label{ln:statusNORMALTRYGET}}{
				\textbf{let} $ts = \clock()$\;
				$(status,AR,MR) \gets (TRYGET,(\aID, (ts,\position(),\velocity(),\acceleration()),\getMR(\aID)[turn])$\label{ln:statusARMR}\;
				\uIf{$(ts < MR.tm + 2T_M \land MR.MO)$\label{ln:tsMRturntm2TMMRMO}}{
					$(\M,\D,\mathcal{R}, status) \gets (\bot,MR.\SM,MR.\SM, GET)$\label{ln:statusGET}\;
					\textbf{multicast} $\langle \aID, ts, GET, (AR, tag) \rangle$ \textbf{to} $\D$\;
					$\startTimer(\tRETRY,2T_D)$\label{ln:TRETRY2TD}\;
				}
				\lElse{$\startTimer(\tRETRY,T_A)$\label{ln:startTimerTRETRYTA}}
			}
			\lElseIf{$status=GRANT$}{$status \gets GRANTGET$\label{ln:statusGRANTstatusGRANTGET}}
		}
		\textbf{upon} timer $\tRETRY$ expires \label{ln:timerRETRYexpires}
		\Begin{
			\If{$status=GET$}{
				$status \gets TRYGET$;
				\textbf{multicast} $\langle \aID, \clock(), RELEASE, (AR, tag) \rangle$ \textbf{to} $\D$\label{ln:retryTimeRelease};
				$\startTimer(\tRETRY,2T_D)$; \label{ln:TRETRYRelease}}
			$\tryManeuver(tag.turn)$\label{ln:tryManeuvertagturn}\;
		}
		
			
		
		\textbf{upon} message $m$ received at time $tr$ \label{ln:MessageProcessing}
		\Begin{
			\If(\tcp*[f]{Message is timely}){$tr-m.t \leq T_D$ \label{ln:MeassageAge}}{
				\lIf{$m.type \in \{GRANT,DENY\} \land status=GET$ \label{ln:RecieveDeny}}{
					$(\M,\mathcal{R}) \gets (\M \cup \{m\},\mathcal{R} \setminus \{m.AR.aID\} )$
				}
				\uElseIf{$m.type=GET$ \label{ln:MessageGET}}{
					\uIf{$\noPriorityViolationDyn(requestee, m.AR) \land (status\in\{NORMAL,TRYGET\}  \lor (status\in\{GRANT,GRANTGET\} \land grantID = m.AR.aID) \lor (status=GET \land \neg \precedes(m)))$ \label{ln:GRANTDENYCondition}}{
						\lIf{$status=NORMAL$ \label{ln:MessageGETNORMAL}}{$(status,grantID) \gets (GRANT,m.AR.\aID)$}
						\ElseIf{$status \in \{GET,TRYGET\}$ \label{ln:MessageGETGETTRYGET}}{
							$\stopTimer(\tRETRY)$\label{ln:GETTRYGETstoptime}\;
							\lIf{$status=GET$}{\textbf{multicast} $\langle \aID, \clock(), RELEASE, (AR, tag) \rangle$ \textbf{to} $\D$ \label{ln:ReleaseWhenGET}}
							$(status,grantID) \gets (GRANTGET,m.AR.\aID)$\label{ln:GETTRYGETsave}\;
						}
						$\ntfyEstimat(\grant, m.AR.\aID)$; \textbf{send} $\langle \aID, tr, GRANT, \bot \rangle$ \textbf{to} $m.AR.\aID$ \label{ln:ntfyEstimatGrant}\;
					}
					\lElse{\textbf{send} $\langle \aID, tr, DENY, \bot \rangle$ \textbf{to} $m.AR.\aID$ \label{ln:SendDENY}}
				}
				\ElseIf{$m.type=RELEASE \land status \in \{GRANT,GRANTGET\} \land grantID = m.AR.\aID$}{
					$\ntfyEstimat(\revoke)$; $grantID \gets \bot$\ \label{ln:ReleaseNtfyEstimatRevoke}\;
					\lIf{$status=GRANT$}{$status \gets NORMAL$\label{ln:GRANTgetsNORMAL}}
					\lIf{$status=GRANTGET$}{
						$status \gets TRYGET$;
					    $\tryManeuver(tag.turn)$\label{ln:endMessageProccessing}
					}
				}
			}
		}
		
	\end{\VCalgSize}
	\caption{The Manoeuvre Negotiation protocol}
	\label{alg:agent-events}
\end{algorithm}


\Section{Algorithm description}
\label{sec:AlgorithmDescription}

This section provides a line by line description of the Membership Algorithm (Algorithm \ref{alg:membership}) and Manoeuvre Negotiation Algorithm (Algorithm \ref{alg:agent-events}).

\Subsection{Description of the Membership Algorithm}
\label{sec:AlgDescMembership}

The \textbf{Membership Algorithm} (Algorithm \ref{alg:membership}) periodically updates the membership set for every vehicle in a specified geographical section called a \emph{segment}. A \textbf{membership set} holds multiple memberships. A membership in the membership set for a vehicle $p_i$ consists of a set called Safety Membership ($\SM$) made up of the IDs of all other vehicles of higher priority in the same segment of the road, a flag called Manoeuvre Opportunity ($MO$) that indicates whether or not the membership is valid, and a time stamp. There has to be one membership per manoeuvre $p_i$ can perform in the segment since traffic priorities are manoeuvre-based and we want the protocol to preserve the initial traffic priorities if possible. The segment in this case is a 4-way intersection. The possible manoeuvres therefore correspond to turning left, going straight, and turning right.

The Membership Algorithm provides a periodic update of memberships every $T_M$ time units (line \ref{ln:T_UPDATEExp}). The procedure starts by retrieving all agent registries of the vehicles in the selected segment from a storage server (line \ref{ln:MembershipGetARset}). The algorithm iterates through these agent registers and for each agent $p_i$ checks if $p_i$ is still in the section (line \ref{ln:MembershipDistRestriction}). If so, the algorithm calculates which agents $\U$ have higher priority than $p_i$ and may reach the intersection in the time it would take for $p_i$ to reach and perform a manoeuvre through the same intersection (line \ref{ln:MembershipUnsafe})\footnote{For more details see Section \ref{sec:unsafeAgents}}. $\U$ will have 3 entries for the 3 possible manoeuvres described above. In addition to $\U$, the set of all vehicles $\R$ that are within communication distance from $p_i$ are determined (line \ref{ln:MembershipReachable}). The earliest time stamps $t_{min}$ of the agent states in $\U[turn]$ for each specific turn are also determined (line \ref{ln:MembershipClock}). These time stamps will be used as a measure of how current the memberships are. A $t_{min}$ far back in time indicates that the membership was calculated using old data and that the membership may be too old to be used safely. 

The next step is to determine the value of the $MO$s for the manoeuvre specific memberships. $MO$ is set to $\true$ \textit{if and only if} all agents in $\U$ are also in $\R$ (line \ref{ln:MembershipUnsafeReachable}), which implies that all agents that $p_i$ can cause a priority violation for are also reachable by $p_i$. It's also possible to use the $MO$-flag to indicate that the provided memberships are invalid due to, for example, limited capacity of the membership service. 

The value assigned to $\SM$ depends on the value of $MO$. $\SM$ is assigned the value of $\U[turn]$ for the specific manoeuvre described by $turn$ if $MO=true$, and is assigned the empty set if $MO=false$ (line \ref{ln:MembershipUnsafeReachable}). The use of $MO$ thereby provides a way to  both reduce the number of sent messages -- since vehicles will not send a request if $MO=false$ -- and a way to indicate if an empty membership is valid.

Finally, all memberships in the membership set are stored on the storage server accessible by all agents (line \ref{ln:StoreMembership}).

\Subsection{Description of the Manoeuvre Negotiation Algorithm}
\label{sec:algDescrMNA}

The \textbf{Manoeuvre Negotiation Algorithm} (Algorithm \ref{alg:agent-events}) consists of four main components: the core state-update procedure that is called periodically (lines \ref{ln:doForever} to \ref{ln:statusgetsNORMAL}), a procedure that is called when an agent seeks permission to perform a manoeuvre (lines \ref{ln:tryManeuverTurn} to \ref{ln:statusGRANTstatusGRANTGET}), a timer-triggered procedure used to ensure algorithm progression in the event of network failures (lines \ref{ln:timerRETRYexpires} to \ref{ln:tryManeuvertagturn}), and a message processing procedure(lines \ref{ln:MessageProcessing} to \ref{ln:endMessageProccessing}).

The \textbf{state-update procedure} itself has four main parts, the first of which handles communication with the storage server. Every vehicle in our system measures their current position, heading,  velocity, and acceleration every $T_A$ time units (line \ref{ln:getOwnStatus}) and then stores that information on the storage server (line \ref{ln:storeAR}). This vehicle information is used by the membership service to calculate memberships in Algorithm \ref{alg:membership}. Every vehicle also fetches the latest version of their membership set from the storage server and stores it locally (line \ref{ln:getMRdoForever}).

The second part of the procedure determines if a vehicle that is requesting a permission to perform a manoeuvre -- a \emph{requester} --  has been fully granted. Replies from vehicles that are not a part of the intersection between the membership that was valid when the request was initiated ($\D$) and the current membership ($MR.\SM$) are filtered out from the set of received replies $\M$ (line \ref{ln:removeStale}). Vehicles that in $\D$ but not $MR.\SM$ must have already left the intersection, so their replies are considered irrelevant. Vehicles that are in $MR.\SM$ but not $\D$ are considered too far away to have any impact on the requested manoeuvre. Vehicles in neither set are also either too far away, or have at most equal priority to the requester and therefore should not be allowed to affect the decision process. 

Next, the requesters -- having the status $GET$ -- confirm that all expected replies to the sent request have arrived by calling $last()$ (line \ref{ln:statusGETlast}). If all expected replies have arrived then the $\tRETRY$ is stopped to mark the end of the current request round (line \ref{ln:doForeverStopTRETRY}). In order to assess if the requester is fully granted, $\M$ is searched for any $DENY$ replies and, if none are found, the requester changes status to $EXECUTE$ and is thereby free to enter the intersection (line \ref{ln:nexistsMtypeDENY}). Otherwise the requester will send a $RELEASE$ message to all vehicles in $\D$ telling them to stop considering the requester granted and the $\tRETRY$ timer will be started again to continue the request into a new round (line \ref{ln:multReleaselast}).

The third part of the procedure determines when a fully granted vehicle should release its grant. A fully granted agent $p_i$ -- having status $EXECUTE$ -- will use their measured position and knowledge regarding the intersection's geometry to determine if it has left the intersection (line \ref{ln:EXECUTEHasLeftCS}) and should therefore send a $\messageForm{RELEASE}$ of the grant to every vehicle in $\D$ (line \ref{ln:EXECUTEHasLeftCS}). The agent $p_i$ will then return to status $NORMAL$ and call $\ntfyEstimat(\revoke)$ to also inform the RE about the revoked grant (line \ref{ln:statusgetsNORMAL}). 

The last part of the state-update procedure handles implicit release of a grant which enables a grant to be released even when the explicit release message is lost. An agent that is currently granting another agent, having status $GRANT$ or $GRANTGET$, checks if the latest measured position of the granted agent lies outside and is heading away from the intersection (line \ref{ln:GRANTHasLeftCS}). If the condition is $\true$ then the agent notifies the RE about the priority change by calling $\ntfyEstimat(\revoke)$ and clears $grantID$ (\ref{ln:GRANTClearID}). An agent previously prevented from starting a request round, denoted by having status $GRANTGET$, takes on status $TRYGET$ and calls $\tryManeuver$ to try to start a new request round (line \ref{ln:prevStatusGRANTGETtryManeuver}). An agent with status $GRANT$ instead goes back to status $NORMAL$ (line \ref{ln:GRANTtoNORMAL}). 

The request initiating procedure $\tryManeuver$ is invoked when starting a new round of a requests. During the first round of a request, where the vehicle has status $NORMAL$ or $GRANT$, the procedure generates an identification tag for the request, then stores: the time stamp of when the first request round started, the ID of the requester, and the manoeuvre the requester wants to perform (line \ref{ln:createTag}). Requesters that are not currently granting another agent, having status $NORMAL$ or $TRYGET$, proceed to measure their current agent state and fetch the latest membership from the storage server (line \ref{ln:statusARMR}). 

Starting a request round further requires a valid membership, implying that the membership is based on data no older than $2T_M$ time units and with a $MO=true$ (line \ref{ln:tsMRturntm2TMMRMO}). If the membership is valid, then the state variables $\M$, $\D$, and $\R$ will be assigned the values $\bot$, $MR.\SM$, and $MR.\SM$ respectively, where $\R$ is the set of agents that are expected to reply and the status is changed to $GET$ (line \ref{ln:statusGET}). A $\messageForm{GET}$ request is then sent to all agents in $\D$ along with the request tag. The $\tRETRY$ timer is set to $2T_D$ to limit the duration of the request round (line \ref{ln:TRETRY2TD}). An agent without a valid membership will wait for a new membership from the membership server by starting the $\tRETRY$ timer and try starting a request round at a later point in time \ref{ln:startTimerTRETRYTA}. A potential requester that is currently granting another agent, having status $GRANT$, is not allowed to start a request round until the grant is released, but a status change to $GRANTGET$ is made to signify that the agent wants to send out a request later when it is allowed (line \ref{ln:statusGRANTstatusGRANTGET}).

Upon expiry of the $\tRETRY$ timer, an agent currently in a request round, having status $GET$, will end it by sending a $RELEASE$ to all agents in $\D$, changing its status to $TRYGET$ and starting the $\tRETRY$ timer again (line \ref{ln:retryTimeRelease} and \ref{ln:TRETRYRelease}). The agent did not get all expected replies in the set time frame and must therefore stop the current round to avoid deadlocks. If the timer expires for an agent without an ongoing request round, having status $TRYGET$, then the agent will try to start a new request round by calling the $\tryManeuver$ procedure (line \ref{ln:tryManeuvertagturn}).

The final part of the Manoeuvre Negotiating algorithm determines how incoming messages are processed depending on the agent's current status and the agent state of the sender. Messages sent more than $T_D$ time units ago are discarded in order to avoid messages from previous request rounds interfering with the current round (line \ref{ln:MeassageAge}). 

The first type of messages that are processed are request replies on the form \\$\messageForm{GRANT}$ and $\messageForm{DENY}$. Any requester will store these messages in the set $\M$ and remove the sender ID from the set of agents whose expected responses have not yet arrived, $\R$, (line \ref{ln:RecieveDeny}). 

The second message type that is processed is $\messageForm{GET}$ which symbolises a permission request (line \ref{ln:MessageGET}). The receivers have to decide if the request should be granted or denied. Firstly, a call to $\noPriorityViolation$ determines if the sender can, while following normal driving patterns, cross the intersection without blocking the receiver (line \ref{ln:GRANTDENYCondition}). In addition to the non-blocking criteria, the receiver has to check its own status to see if it allows the agent to grant the sender. 

Status $NORMAL$ and $TRYGET$ imply that the agent does not have an active request round and is therefore allowed to grant another agent (line \ref{ln:GRANTDENYCondition}). 

Agents with status $GRANT$ or $GRANTGET$ are currently holding a grant for another agent. If this is the same agent as the sender then the receiver is allowed to reply with a $\messageForm{GRANT}$ (line \ref{ln:GRANTDENYCondition}). 

The last status that allows the receiver to reply with a grant occurs where the receiver both has status $GET$ --  meaning that it has an active request round -- and the sender has higher priority than the receiver -- determined by a call to the interface $\precedes$ (line \ref{ln:GRANTDENYCondition}). The interface $\precedes$ will return $\false$ if either the sender's request was started before the receiver's request or if they were started concurrently and the sender's ID is lower than the receiver's. 

A receiver fulfilling both the non-blocking and status criteria will store the ID of the granted agent if it has not yet done so, and change its status to either $GRANT$ -- if it has status $NORMAL$ -- or to $GRANTGET$ -- if it has status $GET$ or $TRYGET$ (lines \ref{ln:MessageGETNORMAL}, \ref{ln:MessageGETGETTRYGET}, and \ref{ln:GETTRYGETsave}). Changing from status $GET$ or $TRYGET$ to $GRANTGET$ also implies both that the $\tRETRY$ timer has to be stopped and that any ongoing round must be stopped by sending a $\messageForm{RELEASE}$ to all agents in $\D$, since an agent in $GRANTGET$ is not allowed to either start or have an ongoing request round (lines \ref{ln:GETTRYGETstoptime} and \ref{ln:ReleaseWhenGET}). The risk estimator is then notified with $\ntfyEstimat(\grant)$ that the priorities have changed and a $\messageForm{GRANT}$ reply is sent to the request sender (line \ref{ln:ntfyEstimatGrant}).
If these criteria are not fulfilled, a $\messageForm{DENY}$ is sent in reply (line \ref{ln:SendDENY}).  

The last type of messages to be processed are $\messageForm{RELEASE}$ messages. An agent with status $GRANT$ or $GRANTGET$ that receives a release message from the holder of the grant will call $\ntfyEstimat(\revoke)$ to inform the risk estimator of the expiry of the granted vehicle's higher priority (\ref{ln:ReleaseNtfyEstimatRevoke}). A $GRANT$ agent will return to status $NORMAL$ while a $GRANTGET$ agent switches to status $TRYGET$ and calls $\tryManeuver$ in order to try to start a request round (line \ref{ln:GRANTgetsNORMAL} to \ref{ln:endMessageProccessing}).

\Section{Correctness Proof}
\label{sec:corrProof}
By the stature of Algorithm~\ref{alg:agent-events}, we can observe the correctness of Corollary~\ref{thm:timerConsistency}.

\begin{corollary}
\label{thm:timerConsistency}
Let $E$ be an execution of the proposed solution. At any system state, it holds that 
$(status \in \{TRYGET\} \implies  active(\tRETRY))$, where $active(t)$ is a predicate that represents the fact that timer $t$ has been started (by the function $\startTimer(t)$), and has not stopped (by the function $\stopTimer(t)$) or expired without triggering the event bounded to timer $t$. Furthermore, an active $\tRETRY$ timer expires within $B\_RETRY=\max\{2T_D, T_A\}$ time units.
\end{corollary}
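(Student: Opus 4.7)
The plan is to establish the corollary by a careful case analysis on the control flow of Algorithm~\ref{alg:agent-events}, treating the two claims separately. Since the statement is ultimately about invariance of the predicate $status = TRYGET \implies active(\tRETRY)$, plus a uniform bound on every started delay, the natural strategy is: (a)~enumerate every line at which the local variable $status$ is assigned the value $TRYGET$, and verify that immediately after that assignment $active(\tRETRY)$ holds; (b)~enumerate every $\stopTimer(\tRETRY)$ call and verify that at that point $status \ne TRYGET$; (c)~enumerate every $\startTimer(\tRETRY,\cdot)$ call and verify that the delay argument does not exceed $B\_RETRY = \max\{2T_D,T_A\}$.

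For the invariant part, I would first observe that the initial state has $status = NORMAL$, so the implication holds vacuously. Then I would walk through the assignments to $TRYGET$: line~\ref{ln:nexistsMtypeDENYelse} of the do-forever loop (followed immediately by $\startTimer(\tRETRY,T_A)$), line~\ref{ln:prevStatusGRANTGETtryManeuver} (followed by a call to $\tryManeuver$), line~\ref{ln:retryTimeRelease} in the $\tRETRY$-expiration handler (followed by $\startTimer(\tRETRY,2T_D)$ and then $\tryManeuver$), line~\ref{ln:statusARMR} inside $\tryManeuver$, and line~\ref{ln:endMessageProccessing} upon handling a $\messageForm{RELEASE}$ (followed by $\tryManeuver$). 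For the calls to $\tryManeuver$, I would show by inspection of lines~\ref{ln:statusNORMALTRYGET}--\ref{ln:startTimerTRETRYTA} that every terminating branch in which $status$ remains (or becomes) $TRYGET$ executes either $\startTimer(\tRETRY,2T_D)$ or $\startTimer(\tRETRY,T_A)$ before returning; in the branch where $status$ is promoted to $GET$ (line~\ref{ln:statusGET}) the final status is not $TRYGET$, so no obligation arises.

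For the $\stopTimer$ side, there are only two occurrences: line~\ref{ln:doForeverStopTRETRY} is guarded by $status = GET$ (from line~\ref{ln:statusGETlast}); line~\ref{ln:GETTRYGETstoptime} is entered while processing a $\messageForm{GET}$ with $status \in \{GET,TRYGET\}$, but line~\ref{ln:GETTRYGETsave} immediately reassigns $status \gets GRANTGET$, so after the handler returns $status \ne TRYGET$. Therefore no $\stopTimer(\tRETRY)$ ever leaves the system in a configuration with $status = TRYGET$ and an inactive timer. Finally, the implication has to be maintained under transitions that do not mention $status$ explicitly, but inspection of the remaining status-changing lines (\ref{ln:statusgetsNORMAL}, \ref{ln:GRANTtoNORMAL}, \ref{ln:statusGRANTstatusGRANTGET}, \ref{ln:MessageGETNORMAL}, \ref{ln:GETTRYGETsave}, \ref{ln:GRANTgetsNORMAL}) shows that none of them produces $TRYGET$, and the time-passage steps do not touch $status$ at all.

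For the bound part, I would simply observe that the four $\startTimer(\tRETRY,\cdot)$ sites in the algorithm use delays $T_A$ (line~\ref{ln:nexistsMtypeDENYelse}), $2T_D$ (line~\ref{ln:TRETRY2TD}), $T_A$ (line~\ref{ln:startTimerTRETRYTA}), and $2T_D$ (line~\ref{ln:TRETRYRelease}); each is trivially $\le \max\{2T_D,T_A\} = B\_RETRY$. The main obstacle is not conceptual but bookkeeping: $\tryManeuver$ is invoked from three distinct callers (the $\tRETRY$ handler, the $\messageForm{RELEASE}$ handler, and the do-forever loop via line~\ref{ln:prevStatusGRANTGETtryManeuver}), and one must be careful to track the value of $status$ at entry into $\tryManeuver$ together with its branching structure so that the obligation ``if status becomes/remains $TRYGET$ upon exit, the timer has just been (re)started'' is discharged in every combination, rather than only in the most common ones.
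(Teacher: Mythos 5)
Your proposal is correct and is essentially the argument the paper intends: the paper offers no explicit proof, merely asserting the corollary follows ``by the stature of Algorithm~\ref{alg:agent-events},'' and your systematic enumeration of the assignments to $TRYGET$, the $\stopTimer$/$\startTimer$ sites, and the delay arguments is exactly that inspection carried out in full. The only minor imprecision is in your framing of step~(b) --- at line~\ref{ln:GETTRYGETstoptime} the status \emph{can} be $TRYGET$ when $\stopTimer$ is called --- but your own subsequent observation that line~\ref{ln:GETTRYGETsave} immediately reassigns $status \gets GRANTGET$ within the same atomic step already closes that case.
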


\begin{lemma}[Membership] 
\label{thm:Membership}
Let $E$ be a synchronous execution of algorithms~\ref{alg:membership} and~\ref{alg:agent-events}. Requirement~\ref{req:MembershipService} holds throughout $E$. 
\end{lemma}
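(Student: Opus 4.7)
The plan is to track the temporal relationships between the time $T$ at which agent $p_i$ uses the membership $M$, the timestamp $ts$ carried by $M$, and the wall-clock time $t$ at which $M$ was produced by Algorithm~\ref{alg:membership}. I would show that the prediction horizon used when computing $\U$ inside the membership algorithm dominates $T + 2T_D + T_{Man}$, so that every agent who could cause a priority violation with $p_i$ during $[T, T + 2T_D + T_{Man}]$ is captured by $\U[turn]$, and then invoke $MO = \true$ to conclude that this set is actually recorded in $M.\SM$ rather than being replaced by $\emptyset$.

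First, I would fix an iteration of the outer loop on line~\ref{ln:T_UPDATEExp} of Algorithm~\ref{alg:membership} that produced $M$, and let $t$ be the value of $\clock()$ read at line~\ref{ln:MembershipUnsafe} in that iteration. Every agent registry $AR \in \mathcal{A}$ retrieved at line~\ref{ln:MembershipGetARset} carries a timestamp $AR.AS.ta \leq t$, since in a synchronous execution writes on line~\ref{ln:storeAR} of Algorithm~\ref{alg:agent-events} that are visible to $\getARSet$ must have been performed at or before $t$. By line~\ref{ln:MembershipClock} and the assignment following line~\ref{ln:MembershipUnsafeReachable}, $ts = t_{min} \leq t$ for the $turn$-branch associated with $M$. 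Combining with the freshness hypothesis $T < ts + 2T_M$ yields $T < t + 2T_M$, whence
\[
T + 2T_D + T_{Man} \;<\; t + 2T_M + 2T_D + T_{Man},
\]
which is precisely the horizon argument passed to $\getUnsafeAgents$ on line~\ref{ln:MembershipUnsafe}.

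Next, by the specification of $\getUnsafeAgents$ stated in Algorithm~\ref{alg:membership}, $\U[turn]$ contains every agent of priority higher than $p_i$ that could possibly reach the intersection by time $t + 2T_M + 2T_D + T_{Man}$; combined with the inclusion above, $\U[turn]$ therefore contains every agent that could cause a priority violation with $p_i$ between $T$ and $T + 2T_D + T_{Man}$. It remains to show that $M.\SM = \U[turn]$, rather than the empty set used on the opposite branch. This follows from the hypothesis $MO = \true$, since the conditional at line~\ref{ln:MembershipUnsafeReachable} only sets $MO \gets \true$ on the same branch that simultaneously sets $\SM \gets \U$, keyed by the check $\U[turn] \subset \R$.

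The main obstacle I anticipate is being precise about the correctness specification of $\getUnsafeAgents$ and how it composes with the possibly slightly stale agent registries stored on the cloud. In a synchronous execution each agent overwrites its registry on line~\ref{ln:storeAR} at least once every $T_A$ time units, so the physical positions seen by $\getUnsafeAgents$ at time $t$ are at most $T_A$ out of date; the interface assumption stated alongside $\getUnsafeAgents$ is that, given this staleness, the returned set still upper-bounds the agents of higher priority that could arrive at the intersection by the supplied horizon. Once this interface guarantee is invoked cleanly, the rest of the argument is a straightforward chase through the time constants in Table~\ref{tab:constants} and the size assumptions $T_{Man} > T_D \geq T_A$ and $T_M > T_A$.
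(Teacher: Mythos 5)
Your proposal is correct and follows essentially the same route as the paper's own proof: both rest on the observation that the horizon $t + 2T_M + 2T_D + T_{Man}$ passed to $\getUnsafeAgents$ on line~\ref{ln:MembershipUnsafe} dominates $T + 2T_D + T_{Man}$ whenever $M$ is fresh (because $ts = t_{min} \leq t$ by line~\ref{ln:MembershipClock}), combined with the fact that the $MO=\true$ branch of line~\ref{ln:MembershipUnsafeReachable} is exactly the branch that records $\U[turn]$ in $\SM$ rather than $\emptyset$. Your explicit inequality chain and your remark about the staleness of the stored agent registries are, if anything, slightly more careful than the paper's corresponding freshness argument.
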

\begin{proof}
To prove that Requirement \ref{req:MembershipService} holds throughout $E$ it needs to be shown that a fresh membership $M$ of an agent $p_i$ with $MO=\true$ contains \emph{all} agents that could potentially cause a priority violation with $p_i$ from the current time $T$ until $T + 2T_D + T_{MAN}$. We say that a membership is \emph{valid} if this property holds for the membership. We refer to all other memberships as non-valid. It needs to be proven that, when $t$ is the timestamp bound to $M$ (line~\ref{ln:MembershipClock}), $M$ can only be non-valid if $MO=\false$ or $M$ is no longer fresh, i.e., $t \geq T - 2T_M$.


Algorithm~\ref{alg:membership} (lines~\ref{ln:T_UPDATEExp} to~\ref{ln:StoreMembership}) defines the procedure in which memberships are created and stored. We review this procedure and show its key properties.

\emph{Algorithm~\ref{alg:membership} considers all potential priority violations.~~}
The procedure calls \\ $\noPriorityViolationStat(t + 2T_M + 2T_D + T_{MAN})$ to determine all agents $\U$ that have higher priority than $p_i$ and that could potentially reach the intersection at some point from $t$ to $t + 2T_M + 2T_D + T_{Man}$ (line \ref{ln:MembershipUnsafe}). These agents $\U$ are the ones that could potentially lead to a priority violation with $p_i$ according to our definition in Section \ref{sec:prioViolation}. 

\emph{Algorithm~\ref{alg:membership} assigns correct $MO$ values.~~}
For a given turn, the membership $M$ is assigned the set $\U[turn]$ if, and only if, $MO$ is assigned the value $\true$, otherwise $M$ is assigned the empty set (line  \ref{ln:MembershipUnsafeReachable}). The variable $MO$ thereby distinguishes a valid empty membership set from a non-valid empty membership set.

\emph{Algorithm~\ref{alg:membership} assigns correct time stamps.~~}
The time stamp associated with a membership is assigned the value of the earliest time stamp $t_{min}$ of the agents in $\U[turn]$ (line \ref{ln:MembershipClock}). This implies that a fresh membership is valid from $T$ to $T + 2T_D + 2T_M + T_{MAN} - (T-t_{min})$ where $T-t_{min} \leq 2T_M$ due to the freshness of $M$. This proves Requirement \ref{req:MembershipService} since we assume that vehicles do not drive backwards on highways.
\end{proof}

\begin{theorem}[Safety] 
	\label{thm:Safety}
	Let $E$ be an execution of algorithms~\ref{alg:membership} and~\ref{alg:agent-events} that is not necessarily synchronous in which Requirement~\ref{req:GrantingDenying} holds throughout. Requirements~\ref{req:RequestInitialisation},~\ref{req:GrantingDenying}, and~\ref{req:AllowingDeferring} hold throughout $E$. 
\end{theorem}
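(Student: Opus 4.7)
The plan is to proceed by inspection of Algorithm~\ref{alg:agent-events}, verifying each of the three requirements as an invariant of the code. Because the requirements constrain \emph{what an agent does} in response to its local state and received messages, rather than what the environment does, they should be provable without appealing to synchrony; this is what lets the theorem accommodate non-synchronous executions. I would organise the argument as three independent case analyses, one per requirement, and then close with a short paragraph reconciling the cases with the assumption that Requirement~\ref{req:GrantingDenying} holds throughout (so that granted agents in fact receive the replies the code treats as $\messageForm{GRANT}$s).

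For Requirement~\ref{req:RequestInitialisation}, I would trace the only entry point into the $GET$ status, namely $\tryManeuver$ (line~\ref{ln:tryManeuverTurn}). The multicast of $\messageForm{GET}$ to $\D \gets MR.\SM$ in line~\ref{ln:statusGET} is guarded by line~\ref{ln:tsMRturntm2TMMRMO}, which insists on $ts < MR.tm + 2T_M$ and $MR.MO$; this gives the freshness and manoeuvre-opportunity conditions of the requirement, while $\D=MR.\SM$ gives the ``all agents in $M$'' part. The ``otherwise check within $T_A$'' clause comes from the $\startTimer(\tRETRY,T_A)$ in line~\ref{ln:startTimerTRETRYTA}, together with Corollary~\ref{thm:timerConsistency}, which bounds the expiry time of $\tRETRY$ and re-enters $\tryManeuver$ through line~\ref{ln:tryManeuvertagturn}. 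One subtle point is that status $GRANT$ defers the request via line~\ref{ln:statusGRANTstatusGRANTGET}; I would observe that the $GRANTGET$ branch in the $do~forever$ loop (line~\ref{ln:prevStatusGRANTGETtryManeuver}) guarantees a re-invocation of $\tryManeuver$ when the grant is released, so the check is periodically re-performed.

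For Requirement~\ref{req:GrantingDenying}, I would point to the $m.type=GET$ branch (line~\ref{ln:MessageGET}). The conjunction in line~\ref{ln:GRANTDENYCondition} encodes exactly the three disjuncts (i)--(iii) of the requirement, gated by $\noPriorityViolationDyn(\cdot)=\true$. The $\ntfyEstimat(\grant,\cdot)$ and $\messageForm{GRANT}$ reply are co-located in line~\ref{ln:ntfyEstimatGrant}, and the $\messageForm{DENY}$ fallback is line~\ref{ln:SendDENY}. For Requirement~\ref{req:AllowingDeferring}, I would trace the ``fully-granted'' test: $\last()$ together with $status=GET$ (line~\ref{ln:statusGETlast}) embodies $G \supseteq \D \cap M$, while the $\tRETRY$ timer bounds the waiting period to $2T_D$, corresponding to the alternative condition in the definition of fully-granted. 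The branch $\nexists m \in \M : m.type=DENY$ (line~\ref{ln:nexistsMtypeDENY}) transitions to $EXECUTE$, and the $RELEASE$ after leaving the intersection comes from line~\ref{ln:EXECUTEHasLeftCS}; conversely, the fully-denied branch in line~\ref{ln:multReleaselast} sends $\messageForm{RELEASE}$ to $\D$ and re-arms $\tRETRY$ with $T_A$, which (by Corollary~\ref{thm:timerConsistency}) triggers $\tryManeuver$ again.

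The main obstacle I anticipate is the bookkeeping around the filtering in line~\ref{ln:removeStale} and the definition of ``fully-granted'' in the requirement, which speaks of $\D \cap M$ where $M$ is the \emph{current} membership, while the code tests $\D \cap MR.\SM$ with $MR$ freshly fetched at line~\ref{ln:getMRdoForever}. I would need to argue that the two coincide at the moment $\last()$ is evaluated, and that the filter on $\M$ does not prematurely drop replies whose senders are still in $\D \cap MR.\SM$. A secondary obstacle is ensuring that the invariants still hold when $status$ changes mid-processing (e.g.\ an incoming $\messageForm{GET}$ from a higher-priority agent causes the transition $GET \to GRANTGET$ at line~\ref{ln:GETTRYGETsave}, which must be consistent with Requirement~\ref{req:AllowingDeferring}'s restart semantics); I would handle this by showing that the accompanying $\messageForm{RELEASE}$ (line~\ref{ln:ReleaseWhenGET}) plus the eventual re-invocation of $\tryManeuver$ from the $GRANTGET$ release path (line~\ref{ln:prevStatusGRANTGETtryManeuver}) together satisfy the ``restart within $T_A$'' clause. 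Non-synchrony only aggravates these cases by allowing $\messageForm{GRANT}/\messageForm{DENY}$ losses, but the $\tRETRY$ bound from Corollary~\ref{thm:timerConsistency} ensures the agent never blocks indefinitely without re-entering the request-initialisation logic, which preserves the safety invariants even in the absence of progress.
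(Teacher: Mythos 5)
Your proposal is correct and follows essentially the same route as the paper's own proof: a line-by-line inspection of Algorithm~\ref{alg:agent-events} matching each clause of Requirements~\ref{req:RequestInitialisation}, \ref{req:GrantingDenying}, and~\ref{req:AllowingDeferring} to the guards at lines~\ref{ln:tsMRturntm2TMMRMO}, \ref{ln:GRANTDENYCondition}, \ref{ln:ntfyEstimatGrant}/\ref{ln:SendDENY}, and \ref{ln:statusGETlast}--\ref{ln:nexistsMtypeDENYelse}, with the $T_A$ re-check via $\tRETRY$ and Corollary~\ref{thm:timerConsistency}. The only divergence is that the paper's proof also verifies Requirement~\ref{req:explicitRelease} (Releasing) even though it is absent from the theorem's conclusion as stated, so your omission of it is consistent with the statement you were given.
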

\begin{proof}
\begin{enumerate}
	\item \textbf{Requirement~\ref{req:RequestInitialisation}:~~} By the structure of Algorithm~\ref{alg:agent-events},
	%
	%
	we can observe that an agent $p_i$ that aims to perform a manoeuvre, has to call $\tryManeuver()$ (line~\ref{ln:tryManeuverTurn}) and be granted a permission before reaching the status  $EXECUTE$. In detail, $\tryManeuver()$ is the only interface procedure for agent $p_i$, which includes lines~\ref{ln:statusGET} to~\ref{ln:TRETRY2TD} in which requests are sent. By line~\ref{ln:tsMRturntm2TMMRMO}, we see that $p_i$ sends such a request only when (i) its membership is fresh and (ii) there is an positive indication for a manoeuvre opportunity. Otherwise, $p_i$ checks conditions (i) to (iii) after $T_A$ time units (line~\ref{ln:startTimerTRETRYTA}). Thus, Requirement~\ref{req:RequestInitialisation} is fulfilled.

	\item \textbf{Requirement~\ref{req:GrantingDenying}:~~}

    Algorithm~\ref{alg:agent-events} defines how a $\messageForm{GET}$ message is sent (lines~\ref{ln:tsMRturntm2TMMRMO} to~\ref{ln:TRETRY2TD}) and received (line~\ref{ln:MessageGET} to~\ref{ln:SendDENY}). We note that $p_i$ sends a $\messageForm{GET}$ message only if the if-statement condition at line~\ref{ln:GRANTDENYCondition}, i.e., $(t_i<M.tm+2T_M \land MR.MO)$ which implies that the first assumption of Requirement~\ref{req:GrantingDenying} holds. To show that the second assumption holds, we note that $p_j$ processes a request message $\messageForm{GET}$ from $p_i$ at a time $t_j \in [t_i,t_i+T_D]$ only when its delivery is timely (line~\ref{ln:MeassageAge}).  Thus, we only need to prove that a request sent by a requester $p_i$ at a time $t_i$ to a requestee $p_j$ in $p_i$'s membership $M$ is replied to with a $\messageForm{GRANT}$ if, and only if, at a time $t_j \in [t_i, t_i+T_D]$ when the request is delivered to $p_j$, $\noPriorityViolationDyn(AR_j) = \true$ and $p_j$ is in one of the states presented in the cases (i) to (iii) described in Requirement~\ref{req:GrantingDenying}.
    
    From line ~\ref{ln:GRANTDENYCondition} we observe that $\noPriorityViolationDyn(AR_j) = \true$ is a mandatory condition for a request to be granted. 
    From line~\ref{ln:GRANTDENYCondition}, we can also observe that if (a) $p_j$ has status $NORMAL$ or $TRYGET$, (b) $p_j$'s status is $GRANT$ and the granted agent is $p_i$, or (c) $p_j$'s status is $GET$ and $p_i$ has higher priority according to $\precedes$, then $p_j$ will send a $\messageForm{GRANT}$ and call $\ntfyEstimat(grant)$ (line ~\ref{ln:ntfyEstimatGrant}). Moreover, $p_j$ reply with a $\messageForm{DENY}$ if $p_j$ is in a state different from cases (a) to (c) or if $\noPriorityViolationDyn() = False$ (line~\ref{ln:SendDENY}).
    The cases (a) to (c) indeed correspond to (i) to (iii), which Requirement~\ref{req:GrantingDenying} defines. Since the lines \ref{ln:ntfyEstimatGrant} and \ref{ln:SendDENY} are the only locations in the algorithm where a $\messageForm{GRANT}$ or $\messageForm{DENY}$ is sent, Requirement \ref{req:GrantingDenying} is satisfied.

	\item \textbf{Requirement~\ref{req:explicitRelease}:~~} 
	
	From Algorithm \ref{alg:agent-events} we see that a $\messageForm{RELEASE}$ message is sent to all agents in $\D$ only on the lines \ref{ln:nexistsMtypeDENYelse}, \ref{ln:TRETRYRelease}, \ref{ln:statusgetsNORMAL}, and \ref{ln:ReleaseNtfyEstimatRevoke}. We need to show that these lines are reached if, and only if, any of the cases (i) to (iv) presented in Requirement \ref{req:explicitRelease} hold. We further need to show that $\ntfyEstimat(\revoke)$ is only called when an agent $p_j$ receives a $\messageForm{RELEASE}$ from another agent $p_i$ and $p_j$ is currently holding a grant for $p_i$ or when $p_j$ is in state $EXECUTE$ and has left the intersection.
	
	To reach and execute line \ref{ln:nexistsMtypeDENYelse}, the requesting agent $p_i$ (with status $GET$) needs to receive replies before the time $T = t_i + 2T_D - \TdoForeverLoop$, where $t_i$ is the time when $p_i$'s request was sent, otherwise the $\tRETRY$ timer will expire before the if-state condition on line \ref{ln:statusGETlast} evaluates to $\true$. The agent $p_i$'s $\tRETRY$ timer will be stopped immediately after $p_i$ discover that all needed replies have arrived to stop the timer from interfering (line \ref{ln:doForeverStopTRETRY}). Furthermore, any received $\messageForm{DENY}$ message before time $T$ from another agent $p_j$ will be stored by $p_i$ (line \ref{ln:RecieveDeny}). However, $p_j$'s $\messageForm{DENY}$ will only be kept if $p_j \in (\D \cap M$), where $M$ is $p_i$'s current membership  (line \ref{ln:removeStale}). The agent $p_i$ will then execute line \ref{ln:nexistsMtypeDENYelse} since $p_j$'s $\messageForm{DENY} \in \mathcal{M}$ (line \ref{ln:nexistsMtypeDENY}). This case correspond exactly to case (i).
	
	If not all messages arrive before time $T$, then the $\tRETRY$ timer set by $p_i$ when sending the request (line \ref{ln:TRETRY2TD}) expires and $p_i$ will execute line \ref{ln:timerRETRYexpires}. The agent $p_i$ will then reach and execute line \ref{ln:TRETRYRelease} since $p_i$ has status $GET$. This corresponds to case (ii). An agent cannot reach line \ref{ln:TRETRYRelease} in any other way since all other cases in which $p_i$ starts the $\tRETRY$ timer (lines \ref{ln:nexistsMtypeDENYelse}, \ref{ln:startTimerTRETRYTA}, and  \ref{ln:TRETRYRelease}) the agent will have status $TRYGET$.
	

	To execute line \ref{ln:statusgetsNORMAL}, agent $p_i$'s status has to be $EXECUTE$ and $p_i$ has to have left the intersection so that $hasLeftCriticalSection(p_i, M)$ evaluates to $\true$ (line \ref{ln:EXECUTEHasLeftCS}). This correspond to case (iii).
	
	
	The last line in which a $\messageForm{RELEASE}$ can be sent, line  \ref{ln:ReleaseNtfyEstimatRevoke}, can be reached only if an agent $p_i$ has status $GET$ when receiving a $\messageForm{GET}$ message $m$ from another agent $p_j$ and if $\neg \precedes(m) = \true$, and $\noPriorityViolationDyn(AR_j) = True$ (lines \ref{ln:MessageGET}, \ref{ln:GRANTDENYCondition}, \ref{ln:MessageGETGETTRYGET}, and \ref{ln:ReleaseWhenGET}). This corresponds to case (iv).
	
 	We consider the case in which agent $p_j$ has the status $GRANT$ and holding a grant for $p_i$ while receiving a $\messageForm{RELEASE}$ message from $p_i$. In this case, $p_j$ calls $\ntfyEstimat(revoke)$ (line \ref{ln:statusgetsNORMAL}). The only other line where \\$\ntfyEstimat(revoke)$ is called is line \ref{ln:revoke}. This line can only be reached if an agent has status $EXECUTE$ and if $\hasLeftCriticalSection(\aID)$ evaluates to $\true$ (line \ref{ln:EXECUTEHasLeftCS}). Requirement \ref{req:explicitRelease} is thus satisfied.

	\item \textbf{Requirement~\ref{req:AllowingDeferring}:~~} 
	%
	%
	Note that $p_i$ does not remove messages that are needed (such as in line~\ref{ln:removeStale}). By line~\ref{ln:statusGETlast} to~\ref{ln:nexistsMtypeDENYelse} and the  definition of the function $\last()$ in Figure~\ref{alg:defs}, agent $p_i$ does not change its status to $EXECUTE$ before it receives all the needed responses. This change to $EXECUTE$ depends on the absence of a response that includes a $\messageForm{DENY}$, i.e.,  the case of being fully-granted. Otherwise, $p_i$ changes $status$ to $TRYGET$, sends a release message to all agents in $\D$ and restarts the request process after $T_A$ time units.  
\end{enumerate}
\end{proof}

	%
	%

\begin{theorem}[Progression]
	\label{thm:MembershipService}
	Let $E$ be a synchronous execution of algorithms~\ref{alg:membership} and~\ref{alg:agent-events} in which admissible crossing opportunities occur infinitely often (for every direction of the intersection). Requirement~\ref{req:Progression} holds throughout $E$. 
\end{theorem}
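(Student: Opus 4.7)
The plan is to exploit the strong guarantees of admissible crossing opportunities together with the synchronous execution assumption, and to show that whenever such an opportunity occurs at an agent that aims to cross, the agent rapidly reaches the status $EXECUTE$ with no communication required. Concretely, I would pick an agent $p_i$ that is closest to the intersection, aiming to cross during an admissible window of length $T_{admissible}$, and argue that within at most $T_A + 2T_A = O(T_A)$ time units after the window opens, $p_i$ becomes fully-granted and enters the intersection; this is enough because by the definition of admissibility $T_{admissible}$ already accounts for the time to notice the opportunity and perform the manoeuvre.

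First I would fix such an agent $p_i$ and show that at the start of the admissible window, $p_i$'s current status is in $\{NORMAL, TRYGET\}$. The status cannot be $EXECUTE$, $GRANT$, or $GRANTGET$ because the admissibility hypothesis explicitly excludes granted requests that overlap with the window. If the status were $GET$ from a previous attempt, then by Corollary~\ref{thm:timerConsistency} the timer $\tRETRY$ would fire within $\max\{2T_D, T_A\}$ time units and, by line~\ref{ln:retryTimeRelease}, move $p_i$ to $TRYGET$; since the window is long (by our assumption that $T_{admissible}$ suffices), I can assume without loss of generality that the status is indeed in $\{NORMAL, TRYGET\}$ at the moment $p_i$ calls $\tryManeuver$, which by hypothesis happens infinitely often.

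Next I would walk through the execution of $\tryManeuver$ at line~\ref{ln:tryManeuverTurn}. Since Lemma~\ref{thm:Membership} (and therefore Requirement~\ref{req:MembershipService}) holds, the fresh membership obtained at line~\ref{ln:statusARMR} satisfies $MR.\SM = \emptyset$ and $MR.MO = \true$ by admissibility. Hence the guard at line~\ref{ln:tsMRturntm2TMMRMO} evaluates to $\true$, and $p_i$ transitions to status $GET$ with $\D = \mathcal{R} = \emptyset$ and $\M = \bot$. No request message is actually delivered to anyone, so nothing can perturb this state externally.

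Finally I would examine the next iteration of the do-forever loop, which occurs within $T_A$ time units. At line~\ref{ln:statusGETlast}, the condition $status = GET$ holds, and $\last()$ returns $\true$ vacuously because $\mathcal{R} \cap \D \cap MR.\SM \subseteq \D = \emptyset$. Since $\M$ contains no $\messageForm{DENY}$ (it contains no messages at all), the branch at line~\ref{ln:nexistsMtypeDENY} fires and $p_i$ transitions to $EXECUTE$, calling $\ntfyEstimat(\grant)$; so $p_i$ is fully-granted and subsequently performs the manoeuvre (Requirement~\ref{req:AllowingDeferring}). Iterating this argument over the infinitely many admissible crossing opportunities yields infinitely many fully-granted crossings, establishing Requirement~\ref{req:Progression}. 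The main obstacle I anticipate is ruling out spurious status transitions during the admissible window (for instance, receiving a $\messageForm{GET}$ from another agent and being pushed into $GRANT$ or $GRANTGET$); this requires invoking admissibility to argue that no other agent's request can be directed at $p_i$ in such a way as to derail the transition, together with a careful case analysis of the message-processing branch at line~\ref{ln:MessageGET}.
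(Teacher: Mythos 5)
Your proposal follows essentially the same route as the paper's proof: fix the closest agent $p_i$ with an admissible window, use the fresh empty membership to make the guard at line~\ref{ln:tsMRturntm2TMMRMO} evaluate to $\true$ inside $\tryManeuver$, and then observe that on the next iteration of the do-forever loop $\last()$ holds vacuously (since $MR.\SM=\emptyset$) and line~\ref{ln:removeStale} leaves no $\messageForm{DENY}$ in $\M$, so line~\ref{ln:nexistsMtypeDENY} drives $p_i$ to $EXECUTE$ within a constant number of $T_A$ periods, which $T_{admissible}$ absorbs. The concern you raise at the end about being pushed into $GRANT$/$GRANTGET$ mid-window is already discharged by clause (i) of the admissibility definition (no granted request overlaps the window), so it is not a real obstacle.

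There are two places where your case analysis is weaker than the paper's. First, you dismiss the possibility that $p_i$ \emph{starts} the window in status $GRANTGET$ by appealing to admissibility, but the automaton status can lag behind the semantic end of a grant: the granted agent may have left the intersection before the window opens while $p_i$ has not yet executed the implicit-release check at line~\ref{ln:GRANTHasLeftCS} or received the $\messageForm{RELEASE}$. The paper therefore treats $GRANTGET$ as a live starting case and argues that the empty membership forces $\hasLeftCriticalSection$ to hold, so within $T_A$ time units $p_i$ revokes, moves to $TRYGET$, and calls $\tryManeuver$ (lines~\ref{ln:GRANTClearID} to~\ref{ln:prevStatusGRANTGETtryManeuver}); you should add this step rather than exclude the case. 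Second, for a pre-existing $GET$ status you route through an expiry of $\tRETRY$ into $TRYGET$, invoking Corollary~\ref{thm:timerConsistency}, which only asserts timer activity for status $TRYGET$; the cleaner (and the paper's) argument is that the do-forever loop itself reaches line~\ref{ln:statusGETlast} within $T_A$ time units and, with $\last()$ true and $\M$ purged of stale replies, sends $p_i$ directly to $EXECUTE$. Neither issue changes the overall strategy, but both need to be repaired for the case analysis to be exhaustive.
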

\begin{proof}
Let $p_i$ be an agent that is the closest to the intersection from its direction and that aims at crossing the intersection, i.e., it calls $\tryManeuver()$ (line~\ref{ln:tryManeuverTurn}). By lines~\ref{ln:statusARMR} and~\ref{ln:statusGRANTstatusGRANTGET}, immediately after the execution of $\tryManeuver()$, the state of $p_i$ cannot be in $\{NORMAL, GRANT\}$. This proof shows that the states $GET$, $TRYGET$, and $GRANTGET$ must lead to $EXECUTE$ before returning to the state $NORMAL$
%
%
by demonstrating that $p_i$ indeed reaches the status $EXECUTE$ and crosses the intersection.

Without loss of generality, suppose that $E$ is an execution in which $p_i$'s status is not in $\{NORMAL$, $GRANT\}$ during the first system state of $E$. Moreover, suppose that immediately after $E$'s starting system state, we have a period of $T_{admissible}>B\_RETRY+ 2\TdoForeverLoop$ time units in which $p_i$ has an admissible crossing opportunity, where $B\_RETRY=\max\{2T_D, T_A\}$. Generality is not lost due to the statement in the beginning of this proof and because Requirement~\ref{req:Progression}'s assumptions, which means that an admissible crossing opportunity for $p_i$ always occurs eventually (within a finite period). 

The rest of the proof considers $p_i$'s status in $E$'s starting system state, which can be one of $GET$, $TRYGET$, and $GRANTGET$, and shows that it indeed reaches status $EXECUTE$.	

\begin{itemize}
	\item \textbf{Suppose that $\mathbf{status_i=GET}$.~~} 
	Note that the do forever loop (line~\ref{ln:doForever}) of Algorithm~\ref{alg:agent-events} starts a new iteration once in every $\TdoForeverLoop$ time units. Therefore, within $\TdoForeverLoop$ time units, agent $p_i$ executes line~\ref{ln:statusGETlast}. Once that occurs, the if-statement condition in line~\ref{ln:statusGETlast} holds, due to the assumption of this case that the status of $p_i$ is $=GET$, the definition of the function $\last()$ (Figure~\ref{alg:defs}) and the assumption that during the first $T_{admissible}$ time units of $E$ it holds that $p_i$ has an admissible crossing opportunity, i.e., $MR_i.\SM=\emptyset$ and $MR_i.\SM$ is  fresh.   
	
	Note that the removal of stale information in line~\ref{ln:removeStale} implies that $\nexists {m \in \M}:m.type = DENY$ holds. This means that the if-statement condition in line~\ref{ln:nexistsMtypeDENY} holds and $p_i$ gets status $EXECUTE$. Thus, the proof is done for this case, because  $T_{admissible}> 2\TdoForeverLoop$. 
	
	\item \textbf{Suppose that $\mathbf{status_i=TRYGET}$.~~} 
	By Corollary~\ref{thm:timerConsistency}, it holds that $status = TRYGET$ implies that the timer $\tRETRY$ is active. Thus, within $B\_RETRY=\max\{2T_D, T_A\}$ time units agent $p_i$ executes the procedure associated with this timer expiration (line~\ref{ln:timerRETRYexpires}), which calls the procedure $\tryManeuver(tag.turn)$ (line~\ref{ln:tryManeuvertagturn}). During the execution of the procedure (line \ref{ln:tryManeuverTurn}), it holds that the if-statement condition in line~\ref{ln:statusNORMALTRYGET} holds (due to this case's assumption that $status =TRYGET$). By the assumption that during the first $T_{admissible}$ time units there is an admissible crossing opportunity in $E$ with respect to agent $p_i$, it holds that the if-statement condition in line~\ref{ln:tsMRturntm2TMMRMO} evaluates to $\true$. Once that happens, line~\ref{ln:statusGET} assigns $GET$ to $status_i$ before setting the timer $\tRETRY$ to a value larger than $2\TdoForeverLoop$ (line~\ref{ln:TRETRY2TD}). Thus, $p_i$ start and completes the execution of the do forever loop (line~\ref{ln:doForever} to~\ref{ln:statusgetsNORMAL}) with $status_i=GET$ before the expiration of the timer $\tRETRY$ (line~\ref{ln:timerRETRYexpires}). The proof is thereby done for this case by arguments similar to the case of $status_i=GET$, since $T_{admissible}>B\_RETRY+ 2\TdoForeverLoop$. 
		
	\item \textbf{Suppose that $\mathbf{status_i=GRANTGET}$.~~} 
	By the definition of an admissible crossing opportunity, $p_i$'s membership will be empty for at least $T_{admissible}>B\_RETRY+ 2\TdoForeverLoop$ time units. An empty membership implies that no other agents can cause a priority violation with $p_i$, which in turn implies that the agent $p_j$ that $p_i$ is granting must have left the intersection. Within $\TdoForeverLoop$ time units, $p_i$ evaluates the if-statement in line \ref{ln:GRANTHasLeftCS} to $\true$, revokes the held grant, changes status to $TRYGET$, and calls $\tryManeuver$ (line \ref{ln:GRANTClearID} to \ref{ln:prevStatusGRANTGETtryManeuver}). The rest of the proof is identical to the latter part of the $TRYGET$ case above. 
	
\end{itemize}
\end{proof}

\chapter{Implementation}
\label{sec:implementation}

\Section{The Risk Estimator}
\label{sec:RE}

The \textbf{Risk Estimator} (RE) used in the simulations is based on~\cite{Lefevre2013IntentionAwareRE}, and was modified in order to utilise  \textit{speed models} -- as opposed to \textit{constant speed} -- when making vehicle motion predictions. 

The RE represents every vehicle in the traffic scene with a \emph{particle filter}. Each \textbf{particle filter} has weighted particles, where each particle has a position, heading, speed, manoeuvre intention, and expectation and intention to go or stop, representing the state of the vehicle. These particles are projected one time step forward in time based on their current \emph{state vector} using simple motion formulas (position, heading, speed) and a \emph{Hidden Markov Model} (HMM) (Expectation and Intentions). New weights are then assigned to the particles based on their similarity to the last received AS from the respective vehicle. A multivariate normal distribution is used as a likelihood function to measure similarity and the measured entities are assumed to be independent.

After the batch of particles for every vehicle have been projected and re-weighted, the risk estimator then estimates the new expectations.  Expectation is based on the traffic priority rules and time gaps of when the vehicles are estimated to enter the intersection, we refer to this time as \textbf{Time To Entry} (TTE). Speed models for every manoeuvre and intention to go or stop are used to estimate TTEs. For non-priority vehicles, the difference between the TTE for each vehicle is fed into a gap model, which determines if the gap is large enough for the vehicle to be expected to go, or whether it will be required to stop and wait for higher priority vehicles. The cumulative distribution function for the gap model is presented in Equation \eqref{eq:gap_model},
 \begin{equation}\label{eq:gap_model}
     Y=\frac{1}{1+e^{b(a-x)}}.
 \end{equation}
where the constants $a$ and $b$ can be adjusted to change the \textit{threshold position} and \textit{steepness} of which time gaps are considered big enough with a high likelihood. Priority vehicles will have the expectation to go.

As mentioned previously, the \textit{manoeuvre intention} and \textit{intention to go or stop} is determined by a HMM, see Figure \ref{fig:HMMIntention}. The HMM is designed let a vehicle change the manoeuvre intention with a probability of 10\,\% in every time step, representing the fact that most drivers keep to their manoeuvre plan. Drivers are also modelled as usually adhering to the expectations, with a chance of 90\,\% to comply if the current expectation matches the previous intention and a chance of 50\,\% if they do not match.

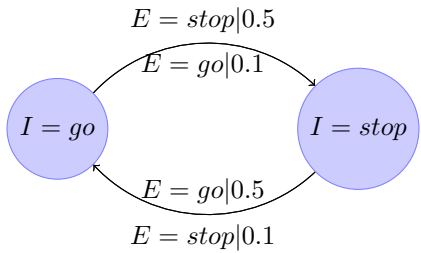
\begin{figure}[htbp]
\begin{center}
\begin{tikzpicture}[]
\node[state] (s1) at (0,4) {$I=go$};
\node[state] (s2) at (4,4) {$I=stop$}
    edge [<-,bend right=45] node[auto,swap] {$E=stop|0.5$} (s1)
    edge [<-,bend right=45] node[auto] {$E=go|0.1$} (s1)
    edge [->,bend left=45] node[auto,swap] {$E=go|0.5$} (s1)
    edge [->,bend left=45] node[auto] {$E=stop|0.1$} (s1);
\end{tikzpicture}
\end{center}
\caption{Visual representation of the HMM for the intention to go or stop. The transitions, labelled $a|b$, represent the likelihood $b$ of changing to the other value if the expectation is calculated to be $a$.}
\label{fig:HMMIntention}
\end{figure}

In the beginning of a new round of estimations, re-sampling will occur if there are too few particles with a high enough weight. Particles with a weight under a certain threshold will then be discarded and more likely particles will be duplicated. This process is called \emph{importance sampling} and is used to reduce the risk of \emph{sample depletion} when almost all particles have negligible weights, while still minimising the amount of lost information ~\cite{ParticleFilter}. If all weights equal $0$ after an estimation round, the particle filter is restarted with new particles spread around the latest received agent state.

Finally, the risk is computed for a vehicle by computing the sum of the normalised weights of the particles representing it where the intention is go but the expectation is to stop. A sum higher than the set risk threshold triggers an \textbf{emergency break} (EB) in which the vehicle sets its intention to stop. 

Having an intention to stop but expectation to go may also be considered risky when multiple vehicles are driving in a line but we decided to not include it in our calculations at this stage since we are studying collisions caused by vehicles coming from different directions.

\Section{Internal hierarchy for deciding vehicle intention}

Both the Manoeuvre Negotiation protocol and the  Risk Estimator influence a vehicle's intention to go or stop. According to the Manoeuvre Negotiation Protocol, all vehicles should have the intention to stop until they are fully granted. The Risk Estimator, on the other hand, sets a vehicle's intention based on a gap model to calculate the expectation and a HMM to, based on the previous intention and the current expectation, calculate the intention. The Risk Estimator also calculates a risk of a vehicle's intention not matching the expectation and triggers an emergency break if this risk passes a set threshold. Emergency break in this case is the same as setting the intention to stop. In addition to our safety system, traffic rules can also be used to set a vehicles intention. There are therefore 4 possible sources of information a vehicle can use to set its intention and setting up an internal hierarchy between them is necessary to avoid a arbitrary behaviour. 

The hierarchical order we decided upon with the most influential indicator first is
\begin{enumerate}
    \item Emergency break
    \item Priorities set by the Manoeuvre Negotiation Protocol
    \item Risk Estimator's HMM
    \item Priorities determined by traffic rules
\end{enumerate}
An emergency break will set the intention of a vehicle to stop no matter what the other system's say. If no emergency break is triggered then the priorities set by the Manoeuvre Negotiation Protocol decide if a vehicle should go or stop. Lastly, if no emergency break is triggered and the manoeuvre negotiation protocol is not in use, the output from the HMM decides the intention of a vehicle. 

If none of our safety systems are active then the traffic rules themselves can be used to set the intention. The traffic rules are placed last on the list but they influence the decisions made by both the Risk Estimator and the Manoeuvre Negotiation Protocol.

\Section{Architecture}
The architecture setup can be found in Figure \ref{fig:architecture}. An intersection traffic scenario, a membership service and two cars -- each following the manoeuvre negotiation protocol and having their own risk estimator -- were simulated using ROS. A storage server for vehicle state data and memberships was implemented as an Apache Zookeeper server.

\Subsection{Storage server}

The manoeuvre negotiation protocol relies on a storage server where agents regularly upload their current state and read their memberships computed by a membership service, here referred to as \emph{the cloud}. In our implementation, this storage server was chosen to be an Apache Zookeeper server because it provides a framework for building a highly reliable and distributed server. Communication to the server was done via TCP. 

The Zookeeper server has 2 main directories (called znodes): \emph{segment} and \emph{mr}. Segment stores the \emph{agent states} (time stamp, position, heading, speed) that all vehicles regularly update and that the cloud uses to compute memberships. The memberships come in sets, with one membership per possible manoeuvre through the intersection. Each membership is stored in \textit{mr} under the ID for every vehicle and has the form (Turn, Manoeuvre opportunity, Safety membership)\footnote{See Section \ref{sec:AlgDescMembership} for a more detailed explanation of the parts of a membership}. 

The vehicles periodically write their current state to the storage server and read their own membership set, whilst the cloud in turn reads the agents' agent states and then writes their computed memberships to the server (see Figure \ref{fig:architecture}).

\begin{figure}[ht!]
    \centering
    \includegraphics[width=\textwidth]{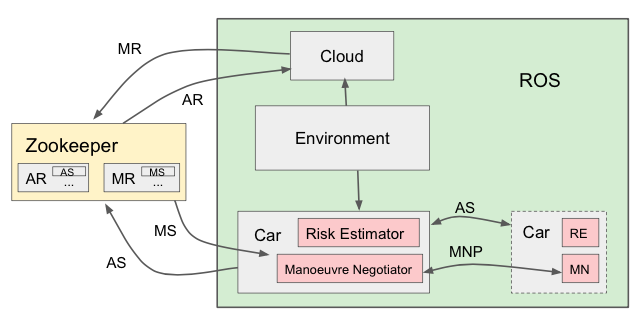}
    \caption{A high-level map of the simulation environment. One or two-headed arrows designate communication channels that are mono-respective bi-directional. Both the vehicles and the cloud were set up in ROS while a Zookeeper server was used as an external storage server. The environment block represents information regarding the geometry of the intersection.}
    \label{fig:architecture}
\end{figure}

\Subsection{Simulation environment}

Every vehicle is represented by a ROS node in the simulation environment and the vehicles publish their Agent State (AS) every $T_A$ time units to their own ROS topic (a publish-subscribe communication channel in ROS). The vehicles subscribe to other vehicle's topics and use the received states as inputs to their risk estimator. Every vehicle further sends their AS and retrieves their membership set from the storage server over TCP. The messages sent in the manoeuvre negotiation protocol uses UDP since the protocol itself is designed to handle packet loss and a much delayed (possibly retransmitted) packets are discarded in the protocol anyway.

The membership service is also implemented on a ROS node. It reads all the agent states (AR) from the \textit{segment} znode on the storage server and then writes all the computed memberships (MR) to another znode, \textit{mr}, on the same storage server. The memberships are computed using the ASs and intersection information, which includes the priorities and geometry set up at when the simulation was initiated.


\Subsection{The Risk Estimator}
\label{sec:RiskEstimator}

Every simulated vehicle was equipped with a \textbf{Risk Estimator}. The risk estimator chosen was based on one developed by by~\cite{Lefevre2013IntentionAwareRE} but which some modifications\footnote{see Section \ref{sec:RE}}.  The risk estimator was run as a background process throughout the simulation, as explained in further detail in Section \ref{sec:RE_Background}. Each particle filter used by the RE was set to have 625 weighted particles representing the state of a vehicle.

\Subsection{The Manoeuvre Negotiation Protocol}

Every vehicle node in ROS was equipped with an object to handle the Manoeuvre Negotiation Protocol in Algorithm \ref{alg:agent-events}. Another node, the \emph{Cloud}, was dedicated to running the Membership Algorithm in Algorithm \ref{alg:membership} in which memberships for every vehicle in the traffic scene are computed. 

More details on the Manoeuvre Negotiation Protocol are provided as line by line descriptions in Section \ref{sec:AlgorithmDescription}; a correctness proof is given in Section \ref{sec:corrProof}.

\chapter{Evaluation}
\label{sec:evaluation}

This chapter describes the different test scenarios, evaluation criteria and how tests were compared to validate our implementation of the risk estimator and the manoeuvre negotiation algorithm. Tests were also conducted to show liveliness of the updated protocol.

\Section{Simulation environment}

A 4-way intersection was set up in ROS with predefined paths for every incoming directions' manoeuvres, corresponding to turning left or right or going straight. The north-south lanes were given higher priority in respect to the crossing east-west lanes. The simulated vehicles were implemented on ROS nodes. A speed model decided the vehicles' speed depending on their distance from the intersection, see Figure \ref{fig:SpeedModel}. A \emph{PID-controller} adjusted the steering angle to keep the vehicles on the paths. The vehicles exchanged state information with added Gaussian noise every 0.5\,s using ROS-topics. The manoeuvre negotiation was conducted using UDP while the communication with the storage server used TCP. Every vehicle had their own instance of a risk estimator and manoeuvre negotiator.

\Subsection{Simulation scenario}

A great percentage of all traffic accidents with serious or fatal outcomes occur in intersections. Of all the possible turns through intersections, left-turn-across-path is considered one of the most collision-prone which is why we have decided to focus on this manoeuvre in our tests. The simulations involve two vehicles approaching an unsignaled intersection from opposite directions, where one vehicle \textbf{intends to do a left turn}, \vlow, while the other \textbf{intends to go straight}, \vhigh, see Figure \ref{fig:Test_scenarios_intersection}. Both vehicles are on the priority lane but they do not have equal priority in the intersection since intending to doing a left turn initially decreases a vehicle's priority. The test cases will therefore involve \vlow requesting permission from \vhigh to change the priorities.

\begin{figure}[!ht]
    \centering
    \includegraphics[width=0.8\textwidth/2]{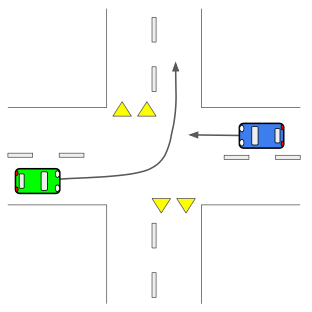}
    \caption{The scenario of a vehicle, \vlow, turning left across the priority lane with an opposing priority vehicle, \vhigh, approaching, used for testing our system.}
    \label{fig:Test_scenarios_intersection}
\end{figure}

\Subsection{System comparisons}

To evaluate our contribution to the Manoeuvre Negotiation protocol we want to compare its performance to tests conducted with the risk estimator alone and with the two systems combined. These tests are needed to check if the combined system performs better than any of the systems by itself. Values that will be compared are explained further in Section \ref{sec:Evaluation_criteria}. Reference values without emergency break and manoeuvre negotiation were calculated from the starting position in every case instead of simulating their runs since these vehicles deterministically following the speed model without stopping. The following setups were tested for every test case 
\begin{itemize}
    \item \textbf{RE}: Running the risk estimator alone with the emergency break feature on
    \item \textbf{MN}: Using the Manoeuvre Negotiation protocol alone, no emergency break
    \item \textbf{RE+MN}: Running the risk estimator with emergency break and the Manoeuvre Negotiator
\end{itemize}
Each of these setups were run with the same list of seeds for the random number generator to give the runs the same initial conditions. The seed affects the risk estimator -- which uses random numbers in the particle filter -- and the manoeuvre negotiation, which depends on a random function when a received request is processed. Every test case was run 10 times with 2 different seeds each time (one for each of the two libraries of random functions used). The quantitative evaluation criteria were computed by taking a mean of the 10 values with the highest and lowest value removed. The qualitative criteria were calculated as a mean over all 10 values, which results in a number between 0 to 1 indicating how likely an event is when certain start values are deployed.


\Section{Evaluation criteria}
\label{sec:Evaluation_criteria}
The tests were evaluated based on criteria that measure the:
\begin{itemize}
    \item Number of collisions and dangerous situations. A \textbf{collision} is here defined as a case where the vehicles both occupy the space shared by both their paths and a \textbf{situation} is defined as \textit{dangerous} if the distance separating the front of each vehicle is less than $4\,m$. Saved state data from the simulations was used to calculate these values.
    
    
    \item \textbf{Time Lost due to Priority Violations} (TLPV). TLPV is only computed for the initially higher priority vehicle \vhigh and will be positive that vehicle has to slow down due to the actions of the other (lower priority) vehicle \vlow. This would occur when, for example, either the emergency break is triggered or if the other vehicle is both granted and then releases the grant later than expected. TLPV is a qualitative measure of the implementation of $\noPriorityViolation$. Time to enter the intersection for each of the test cases were compared with a computed optimal value from a scenario where the priority vehicle does not have to interact with any other vehicle.
    
    \item Number of emergency breaks triggered by the risk estimator. The outcomes from the RE test case are used here to determine whether or not adding the Manoeuvre Negotiation Protocol will cause the risk estimator to give false alarms. 
    
    \item \textbf{Time To get fully Granted} (TTG) after the initial request was sent. This measure was recorded to demonstrate liveliness of the Manoeuvre Negotiation Protocol. It also provides a measure of throughput.
    
\end{itemize}
Emergency breaks and TLPV can both be signs of priority violations for the higher priority vehicle, caused by the risk estimator and Manoeuvre Negotiation Protocol separately. An emergency break signal will occur if the risk estimator computes that the lower priority vehicle is about to drive through the intersection before the higher priority vehicle when the time gap is considered too small. TLPV instead measures how much the priority vehicle is slowed down by having to break in order to let the granted lower priority vehicle finish its manoeuvre. It is therefore a measure of how well the manoeuvre negotiation protocol approximated the time of the manoeuvre execution for the lower priority vehicle. 

To measure the performance of our implementation of the risk estimator we further conducted some tests measuring the:
\begin{itemize}
    \item \textbf{Time To Collision Point} (TTCP) when the risk estimator first triggered an emergency break. This provides a measure of how early the risk estimator was able to detect a possible collision. The TTCP was computed using the vehicle's position in combination with the speed model shown in Figure \ref{fig:SpeedModel}, which produced an estimate as to when the vehicle would first reach the \emph{collision zone} (at the collision point). The collision zone is the part of the two vehicles' trajectories that overlap when also considering the size of the vehicles.
    
    \item Precision, which determines if the risk estimator is prone to producing a large number of false alarms. \textbf{Precision} is here defined as the number of test cases where an actual dangerous situation was detected, divided by the total number of the cases where the emergency break was triggered.
    
    \item Recall, which measures how well the risk estimation is able to detect potential collisions. \textbf{Recall} is here defined as the number of test cases where an emergency break was triggered in a dangerous situation divided by the total number of test cases where a dangerous situation occurred.
    \end{itemize}
    
Observe that TTCP is defined to be a non-negative measure, because any instance where an EB event is triggered and the vehicle is either in the collision zone of has left the collision zone are assigned a TTCP value of $0$, since at that point the vehicle has either reached or passed the collision point. It follows that any EB events after this point cannot help in preventing a collision.




\Section{Test Cases}
\label{sec:test_cases}



The distance parameters used during the simulation are all measured from the centre of the intersection, see Figure \ref{fig:parameter_expl}. The start distance for \vlow, $d_0$, and the request initiation distance, $d_{Init}$, were kept fixed to $65\,m$ and $30\,m$ respectively. The start distance for \vhigh, $d_1$, was varied \footnote{see Section \ref{sec:caseNormal} for more details}.

\begin{figure}[!ht]
    \centering
    \includegraphics[width=0.8\textwidth]{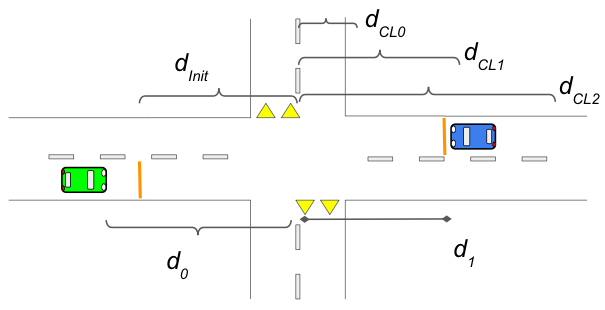}
    \caption{Distance parameters used during the simulations. All distances are measured from the middle of the intersection and all except $d_1$ are fixed. The values for $d_{CLX}$ can be found in Table \ref{tab:Comstop}, $d_{Init}=30\,m$, $d_0=65\,m$ and the values for $d_1$ are described in \ref{sec:test_cases}.}
    \label{fig:parameter_expl}
\end{figure}

Each test case focused on varying one parameter while keeping the others fixed or, as in the cases with added noise or communication loss, limited to a few values in order to try to determine how varying the parameter affects the results. Here follows a presentation on the different test cases and a motivation as to why they where chosen to evaluate our implementation. The letter combination used to refer to each test case is set between parenthesis in the test case's title.

\Subsection{Test Case: Normal (Normal)}
\label{sec:caseNormal}

Variations in starting distance result in different time gaps of when the two vehicles are estimated to enter the intersection and should therefore affect everything that is related to our risk estimation. The start distance $d_1$ for \vhigh was therefore varied in steps of 4\,m between 125 to 13\,m, measured from the centre of the intersection. The other vehicle's start distance $d_0$ was kept constant. 




\Subsection{Test Case: Noisy Measurements (Noise)}
\label{sec:caseNoise}

Noise in various levels is always present in real-life measurements. It is therefore of great importance to test how our implementation performs under variously noisy conditions. 

In this test case, extra noise was added to the measured agent state values including speed, position and angle before using the values in the system. Since the noise added in the Normal case in Section \ref{sec:caseNormal} was Gaussian with a true mean and standard deviation of $\sigma$, we decided to add two test cases here, one with a standard deviation for the noise of $1.5\sigma$ and the other of $2\sigma$. For each of these two settings, the different start position runs as described in Section \ref{sec:caseNormal} were performed.


\Subsection{Test Case: Lost communication (ComLoss)}
\label{sec:caseLostCom}

Message loss is also a common fault that occurs in real systems. To simulate message loss in our simulation we added a total lack of communication for \vlow at 3 different distances from the intersection $d_{CL0}$, $d_{CL1}$, and $d_{CL2}$, for 3 different time durations $t_{CL0}$, $t_{CL1}$, and $t_{CL2}$. Their values can be found in Table \ref{tab:Comstop} and the $d_{CLi}$-distances are depicted in Figure \ref{fig:parameter_expl}. Notice that the three starting values correspond to long before the request line, just before the request line, and just before entering the intersection. For every combination of $d_{comstop}$ and $t_{comstop}$, varied start position for \vhigh was applied as in Section \ref{sec:caseNormal}. 

\begin{table}[h!]
    \centering
    \begin{tabular}{c|c|c|c}
        $d_{CL}$ (m) & 11 & 31 & 51 \\ \hline
        $t_{comstop}$ (s) & 1 & 2 & 3
    \end{tabular}
    \caption{Values for the communication loss parameters $d_{comstop}$ -- determining how far from the intersection communication was cut for \vlow -- and $t_{comstop}$ -- how long communication was cut. These values were combined to form 9 different test cases.}
    \label{tab:Comstop}
\end{table}

This test should give an indication on how well the systems are able to handle a lack of new information for a limited time period at different distances from the intersection. 


\Subsection{Test Case: Traffic Offenders (Offender)}
\label{sec:caseOffender}

Not all vehicles follow traffic rules and conventions, so a robust system has to prevent traffic accidents even if traffic offenders are present. A traffic offender was simulated by letting \vlow ignore priorities while still sending positional messages and other messages required by the protocol. The offender has its intention set to $go$ throughout the simulation, which implies that the vehicle drives without following the expectation set by the risk estimator or priorities set by the protocol, and does not respond to triggered emergency breaks. The test runs had the same starting conditions as in Section \ref{sec:caseNormal} with varying start distance $d_1$.
The purpose of the test is to evaluate the risk estimator in an environment where not all vehicles are adhering to their safety system, unlike all previous test cases presented here.

\Section{Evaluation terminology}
\label{sec:Evaluation_terms}
We have developed the following terminology for ease of reading:
\begin{itemize}
    \item A particular \textbf{Setup} will be denoted $\mathcal{S}\,_{Type}\,$. E.G. The \textit{RE} setup will be denoted \sre.
    \item A particular \textbf{Test Case} will be denoted $\mathcal{TC}\,_{Type}\,$. E.G. The \textit{Normal} test case will be denoted \tcno.
    \item An \textbf{Evaluation Scenario} will refer to a combination of one test case $\mathcal{TC}$ and one setup $\mathcal{S}$, and will be denoted $\mathcal{ES}\,_{TestCase}^{Setup}\,$. E.G. \tcno paired with \sremn will be denoted \tcremn.
\end{itemize}

\chapter{Results}
\label{sec:results}

This section presents the results obtained from running the test cases described in Section 
\ref{sec:test_cases} and measuring the evaluation criteria defined in Section \ref{sec:Evaluation_criteria}. 

Our initial hypothesis was that \sremn would prove to be a more safe system than \sre, which was confirmed by the results: there were fewer collisions and dangerous situations recorded when using \sremn. This system also reduced the amount of triggered emergency breaks compared to \sre, while still keeping TLPV low. The Risk Estimator managed to detect all dangerous situations in \sre but was prone to give false alarms in some test cases. 

The Risk Estimator's performance in \sremn is hard to evaluate because of the small number of  test runs that resulted in a dangerous situation. Unfortunately, it is possible that characteristics of the simulation environment itself may have contributed to some  results, since no explanation based on characteristics of the safety system is available in some cases. For example, the simulations take some time to start up properly which can affect the cases where one of the vehicles starts close to the intersection.

A variable parameter common to each test case is the \textbf{start distance} for \vhigh, which we denote $d_1$. It represents the \textit{initial spatial separation} between the vehicles when compared to the fixed parameter $d_0$. Instead of plotting the measured entities against $d_1$, another entity is introduced to represent the \textit{initial temporal separation}, $\Delta t$. $\Delta t$ is defined as the \textit{estimated difference in time between when \vlow and \vhigh will each reach the intersection} based on their initial positions $d_0$ and $d_1$, and the presented speed model for ``go'' in Figure \ref{fig:SpeedModel}.


\Section{Request Time}
\label{sec:Results_RT}

\textbf{Time to Grant} ($TTG$) was calculated as the time difference between \textit{when a vehicle started its first request round} and \textit{when the request was fully granted}. We expect low \TTG values for \vhigh (denoted \vhighttg) since it has the highest priority and thereby should not have to wait for $\messageForm{GRANT}$ from \vlow. For \vlow (denoted \vlowttg) we expect a shark-fin pattern in the \TTG-$\Delta t$ plot, with low values for the lowest $\Delta t$, a sudden jump in \TTG when $\Delta t$ is low enough for \vlow to have to wait for \vhigh before crossing the intersection and then a steady decrease in \TTG as \vlow has to wait shorter and shorter times.

Recorded \vlowttg and \vhighttg for \tcnremn and \tccremn can be found in Figure \ref{fig:RT_case4}. The added noise does not seem to affect \TTG if compared to \tcno used for reference, denoted $N:0$, since the plotted points almost coincide for every value of $\Delta t$. \vhighttg is close to $0$ due to its empty membership, implying that it does not have to send out any requests. \vlowttg starts higher than \vhighttg since \vlow reaches the intersection first when $\Delta t <0$, and thus has to wait for a reply from \vhigh.  A slight rise in \vlowttg is visible after $\Delta t=-3$ and the plotted points then drop of linearly until they reach a value close to $0$. This drop corresponds to \vhigh estimating that the time gap between them is now too small for \vlow to safely perform its manoeuvre without \vhigh slowing down. \vlow's requests are thereby rejected until \vhigh leaves the intersection. Increased $\Delta t$ implies that \vhigh starts closer and closer to the intersection and \vlow thereby has to wait for shorter and shorter time periods. A slight periodicity in \vhighttg is distinguishable in the plot, with peaks at $\Delta t = -3, -1$, and $1$. It is unclear what is causing these small periodic changes in \TTG.

Similar patterns are present in the communication loss plots to the right in Figure \ref{fig:RT_case4}. Additional features in these plots includes an initially higher value for \vlowttg, temporary constant \TTG instead of declining linear behaviour and initially declining linear behaviour. The first case occurs for CLD:0, CLP:2, the second case occurs for CLD:2, CLP:2 and CLD:0, CLP:2, and the third case for CLD:1, CLP:2. Case 1 and 3 are probably caused by communication loss during the request rounds. For example, in the third case \vlow is trying to start a request round when its communication is switched off and it stays off for long enough for the vehicles to move close enough to produce a too small time gap. 

A slight increase in \vhighttg occurs after $\Delta t=3$. This is an effect caused by \vhigh starting in a position when it has already passed the request line, $d_1<d_{Init}$, and has to wait for the membership service to update its membership since a valid membership is needed to start a request round. This effect is thereby most probably caused by the slow upstart of the simulation environment and not a feature of the system.

\smn resulted in plots similar to \sremn and the respective plots have been omitted. The results for \TTG thereby live up to our expectations.


\begin{figure}[!ht]
\begin{minipage}{0.55\textwidth}
\hspace*{-0.15in}
\includegraphics[width=0.85\textwidth]{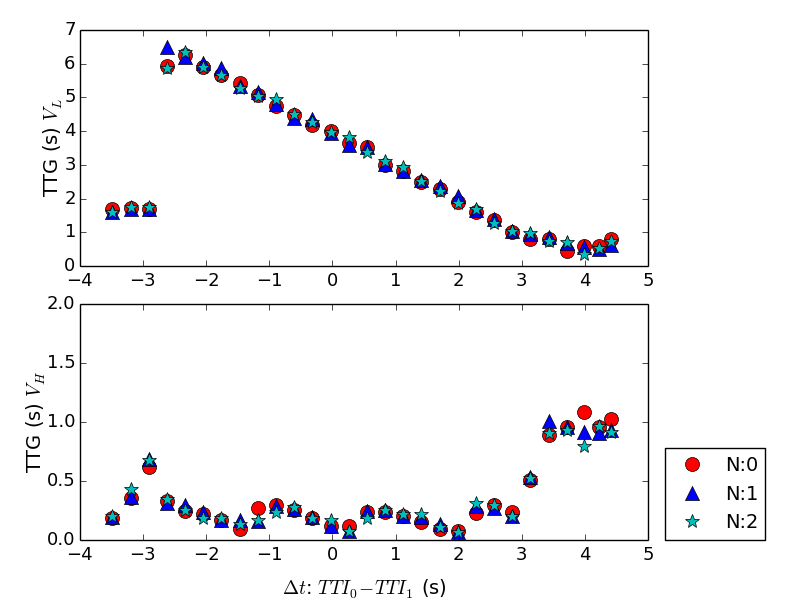}
\end{minipage}%
\begin{minipage}{0.65\textwidth}
\hspace*{-0.75in}
\includegraphics[width=0.85\textwidth]{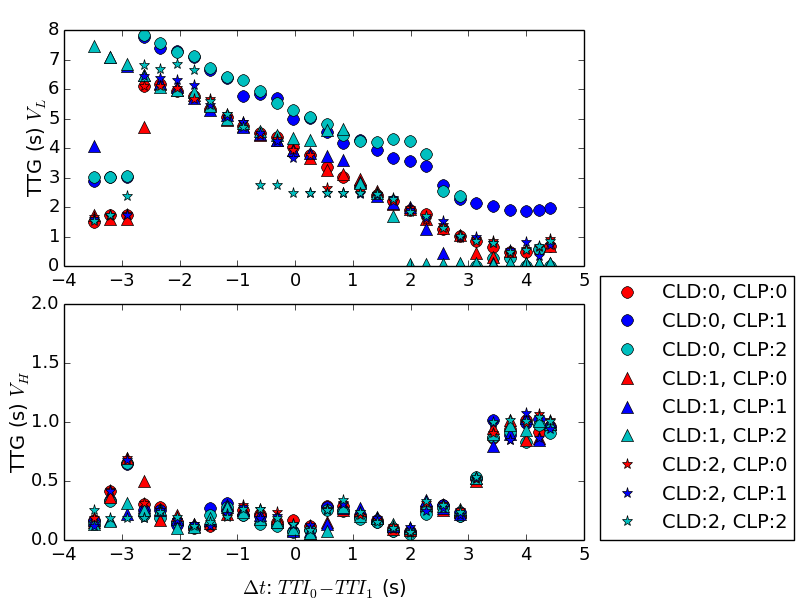}
\end{minipage}
\caption{Time To Grant (\TTG) for \tcnremn (left) and \tccremn (right). $V$ stands for vehicle and $N$ for noise level in the legend, where $N:0$ corresponds to the normal reference setup. \vhighttg -- where \vhigh has higher priority -- is fairly constant and keeps low while \vlowttg approximates a sharkfin pattern. The sudden increase in \TTG for both vehicles near $\Delta t=-3$ indicates a threshold value, after which \vhigh estimates that the gap between the two vehicles is too small and -- as a result -- it should not grant \vlow. }
\label{fig:RT_case4}
\end{figure}


\Section{Priority Violation Effects}
\label{sec:Results_TLPV}

\textbf{Time Lost due to Priority Violation} (TLPV) was estimated for \vhigh by taking the difference between the actual Time To Intersection (TTI) with the calculated ideal TTI when \vhigh follows the speed model with the intention ``go''. We expect values close to 0 for \smn and \sremn for the test cases \tcno and \tcnno since the Manoeuvre Negotiation protocol should preserve the initial priorities. In \tcono and \tccno, however, we expect to see increased TLPV due to emergency break (EB) and \vlow's lost messages.

The measured TLPV for \tcnremn and \tccremn  can be found in in Figure \ref{fig:TLPV_case4}. The different noise levels tested do not seem to have a visible effect on TLPV when compared to the normal case, denoted \textbf{N:0}. The shape of the normal curve can also be distinguished in the result plot for the \tccno when the results for varied \textbf{Communication Loss Period} (CLP) and the point of started communication loss, defined as \textbf{Communication Loss Distance} (CLD), are plotted together. Here we see a clear impact of both CLP and CLD on TLPV. The lower $\Delta t$ values, corresponding to \vlow starting closer to the intersection, are affected by $CLD:2$, which corresponds to communication loss for \vlow just before entering the intersection. For these cases, \vhigh has to wait for \vlow to start communicating again before its request can be granted. In the same way, CLD:1 gives rise to upgoing trends in the middle of the plot and CLD:0 a bit further right. The different slopes for the same CLD is caused by different CLP values where the highest CLP, corresponding to CLP:2, gives the steepest slope. 

\begin{figure}[!ht]
\begin{minipage}{0.55\textwidth}
\hspace*{-0.15in}
\includegraphics[width=0.85\textwidth]{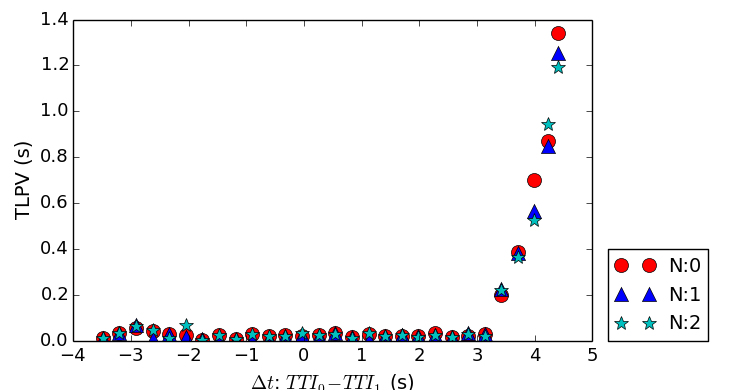}
\end{minipage}%
\begin{minipage}{0.65\textwidth}
\hspace*{-0.75in}
\includegraphics[width=0.85\textwidth]{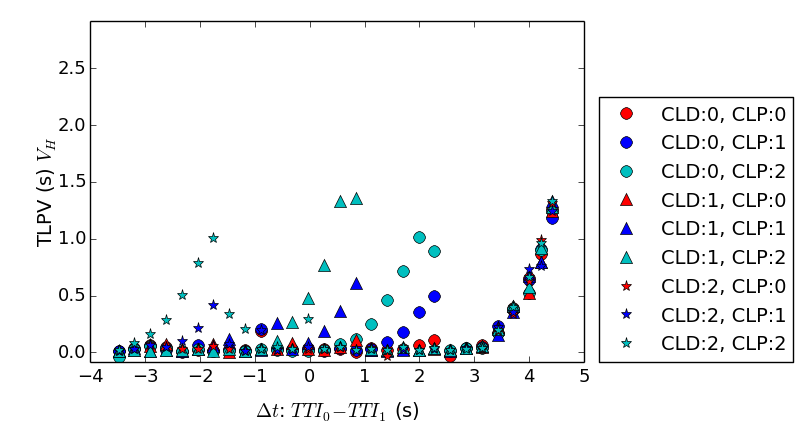}
\end{minipage}
\caption{TLPV for \tcnremn (left) and \tccremn (right). In \tcnremn TLPV is kept below 0.1\,s up for $\Delta t<3$. A steady rise after $\Delta t=3$ is present in both plots and is caused by a startup delay in the simulation. Additional rising trends for different CLD and CLP values are present in \tccremn.}
\label{fig:TLPV_case4}
\end{figure}

Figure \ref{fig:offenderTLPV_c4} displays the measured TLPV for \tcoremn. The TLPV has a peak just before $\Delta t=0$ which is caused by \vhigh's RE triggering an EB due to \vlow's unexpected behaviour.

\begin{figure}[!ht]
    \centering
    \includegraphics[width=0.5\textwidth]{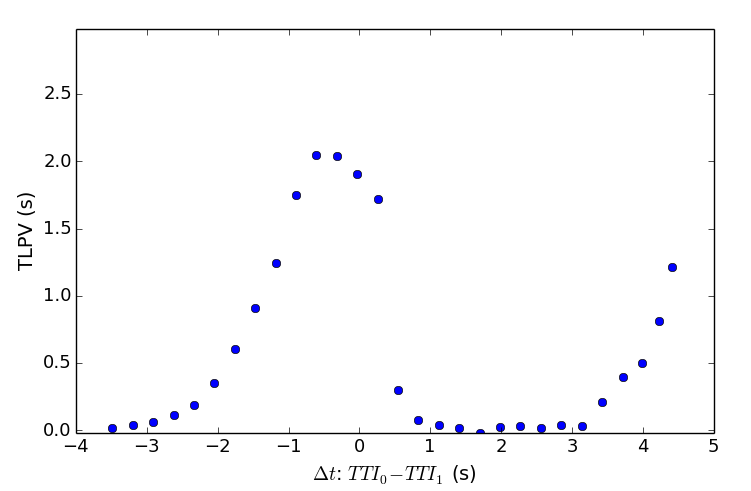}
    \caption{Time lost due to priority violation for \vhigh for \tcoremn. Compared to the \tcnremn in the left plot in Figure~\ref{fig:TLPV_case4} this plot has an additional peak just before $\Delta t=0$, caused by EBs for \vhigh when \vlow disobeys priorities and drives out in front of \vhigh.}
    \label{fig:offenderTLPV_c4}
\end{figure}

The results of test cases in \sre are almost identical to \sremn and their plots are therefore omitted. There is one exception, \tcomn, which is similar to \tcremn since \vhigh is not equipped with a risk estimator (RE) in this test case. 

The TLPV increases rapidly after about $\Delta t=3$ for all setups and test cases. This is caused by the same issue as the one explained in Section \ref{sec:Results_RT}, which causes \vhigh to wait for a valid membership. Waiting implies that the vehicle follows the ``stop'' speed model and therefore drives slower or even stops and waits just before the intersection causing a rise in TLPV. Apart from this, the results here matches our expectations with low TLPV in \tcno and \tcnno and some higher, but still -- for safety reasons -- agreeable, values for \tccno and \tcono. 


\Section{Emergency Break Events}
An Emergency Break event happens if the Risk estimator detects a risk over a set threshold. We expect none or just a few emergency breaks in \sremn since the vehicles actively negotiate their internal priority order and the Risk Estimator is notified of any priority change. In \sre we expect the number of EB triggered by \vhigh to be higher than the number of EB triggered by \vlow since \vhigh in this setup is expected to go throughout the simulation and thereby cannot have an expectation to stop and intention to go -- which is what triggers an EB.

The mean number of emergency break events for the \tcnre and \tcnremn scenarios is plotted in Figure \ref{fig:EB_noise_case2_4}. In \sre, increased noise seems to reduce the number of emergency breaks. This may be due to less clear intention caused by lack of new information, leading to a more evened out distribution of estimated intentions and expectations and less mismatch between the two. \vlow shows a low number of emergency breaks which is expected since it is the lower priority vehicle and in an ideal case should not rise any alarms when the higher priority vehicle approaches the intersection. That emergency breaks are at all recorded for \vlow may be caused by random functions used in the risk estimator and its gap model or due to randomness in the simulation.

In \tcnremn, just a few emergency breaks were recorded and out of these no mean exceeded 1 for any $\Delta t$. This implies that on average less than 1 emergency break per run was recorded per measurement set of 10 runs. The reason why almost no emergency breaks were triggered is because the manoeuvre negotiation protocol used in \sremn enables the vehicles to be aware of the other vehicle's intention and also informs the risk estimator of any priority change. In a best case scenario we would expect no emergency breaks but the few recorded is probably again due to randomness connected to the risk estimator or the simulation.



\begin{figure}[!ht]
\begin{minipage}{0.55\textwidth}
\hspace*{-0.15in}
\includegraphics[width=0.85\textwidth]{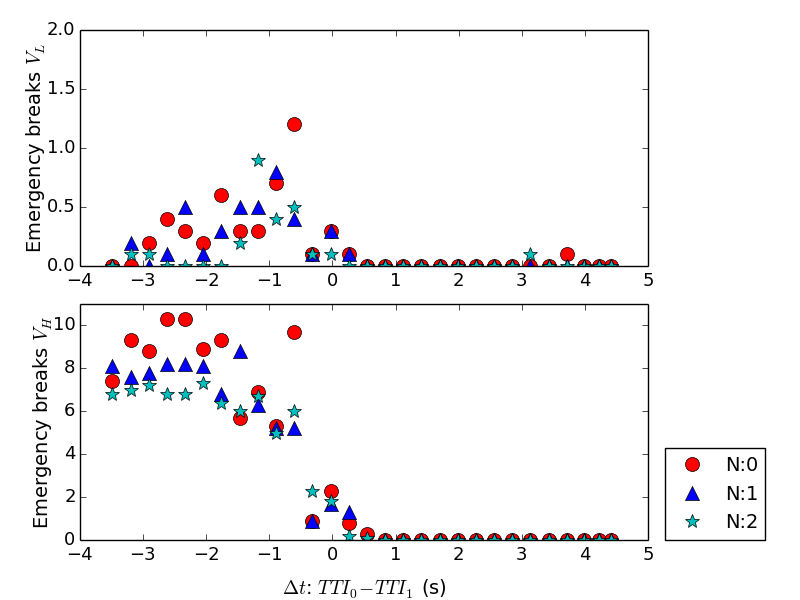}
\end{minipage}%
\begin{minipage}{0.65\textwidth}
\hspace*{-0.75in}
\includegraphics[width=0.85\textwidth]{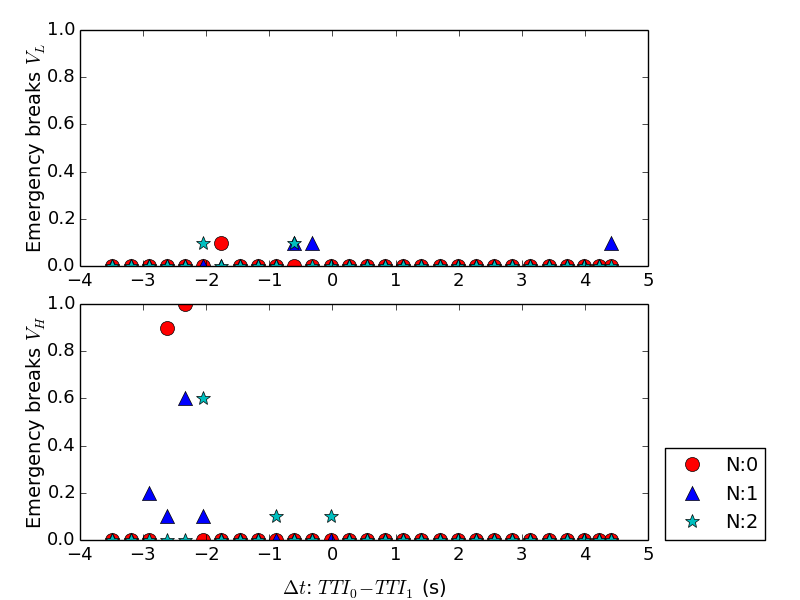}
\end{minipage}
\caption{Mean number of emergency breaks for \tcno (N:0), and \tcnno (N:x) for \vlow and \vhigh seen in the legend as V:0 and V:1. The left plot is the result from \sre and the right plot is from \sremn. Higher noise levels seem to lead to less number of emergency breaks. In \sremn just a few emergency breaks were triggered. Notice that $0.1$ implies $1$ triggered emergency break in $10$ runs.}
\label{fig:EB_noise_case2_4}
\end{figure}

In \tccre a drastic drop in emergency breaks by \vhigh can be seen when \vlow stops communicating in the intersection, labelled as CLD:2 in the left plot in Figure \ref{fig:EB_comloss_case2_4}. The largest drop is achieved for the longest communication loss period, denoted CLP:2. The result from \tccremn is displayed to the right in Figure \ref{fig:EB_comloss_case2_4}. Here, we see an increase in the mean number of emergency breaks when compared to \tcremn in Figure \ref{fig:EB_noise_case2_4}. However, most data points still lies beneath 1 and much lower than \sre.


\begin{figure}[!ht]
\begin{minipage}{0.65\textwidth}
\hspace*{-0.15in}
\includegraphics[width=0.85\textwidth]{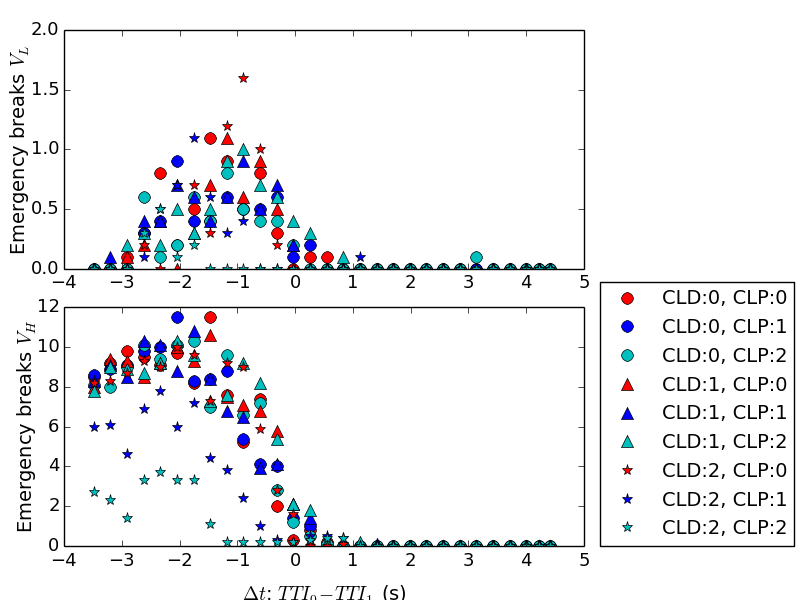}
\end{minipage}%
\begin{minipage}{0.55\textwidth}
\hspace*{-0.75in}
\includegraphics[width=0.85\textwidth]{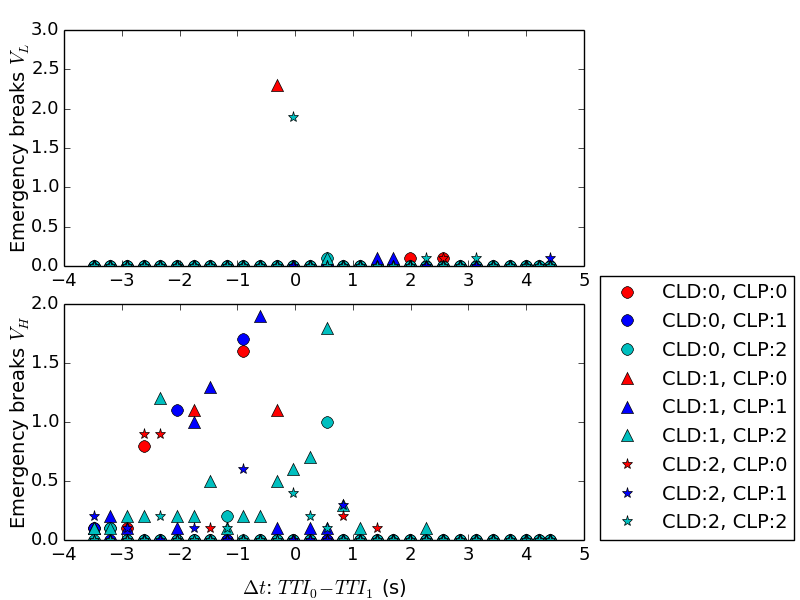}
\end{minipage}
\caption{Mean number of emergency breaks for the \tccre, left, and \tccremn, right, for \vlow and \vhigh for different communication loss periods (CLP) with varying start of the communication loss, CLD. The number of emergency breaks in \tccre is drastically reduced when \vlow stops communicating in the intersection, corresponding to CLD:2. More emergency breaks were recorded in \tccre compared to \tcre, as seen in the left plot of Figure \ref{fig:EB_noise_case2_4}.}
\label{fig:EB_comloss_case2_4}
\end{figure}

\Section{TTCP Upon First Emergency Break}
\label{sec:Results_TTC_EB}

The \textbf{Time To Collision Point} (TTCP) for a vehicle is calculated when the first emergency break (EB) is triggered to give a measure of how far in advance a potential can be detected. We expect that if any EB are triggered then they will occur when the other vehicle reach a certain distance from the intersection since the vehicles strictly follow the speed models. This would result in a declining TTCP since it is measured for the vehicle that triggered the EB.

The results for \tcre, \tcnre and \tccre is plotted in Figure \ref{fig:TTCP_case2}. All three test cases show a similar linear decline of about -1 for \vhigh when $\Delta t<0$ which indicates that the EB is triggered when \vlow is at approximately the same distance from the intersection in each case, approximately 1\,s away from reaching the intersection. A decrease in TTCP is visible for \tccno for increasing communication loss time (CLP) for \vhigh with $CLD:2$, corresponding to communication loss for \vlow when entering the intersection. Both plots show a few false alarms for $\Delta t>0$ for \vlow, and a lot of late alarms (TTCP$<0$) for $\Delta t<0$. \tcono is similar to \tcno and \sremn is similar to \sre but with less data points, so these plots are omitted.

The linear decline of TTCP for \vhigh matches our expectation that the detection would occur when \vhigh reaches a certain distance from the intersection. The late EB triggered by \vlow was not expected and may have been caused by unpredictability or implementation flaw of the risk estimator. 


\begin{figure}[!ht]
\begin{minipage}{0.55\textwidth}
\hspace*{-0.15in}
\includegraphics[width=0.85\textwidth]{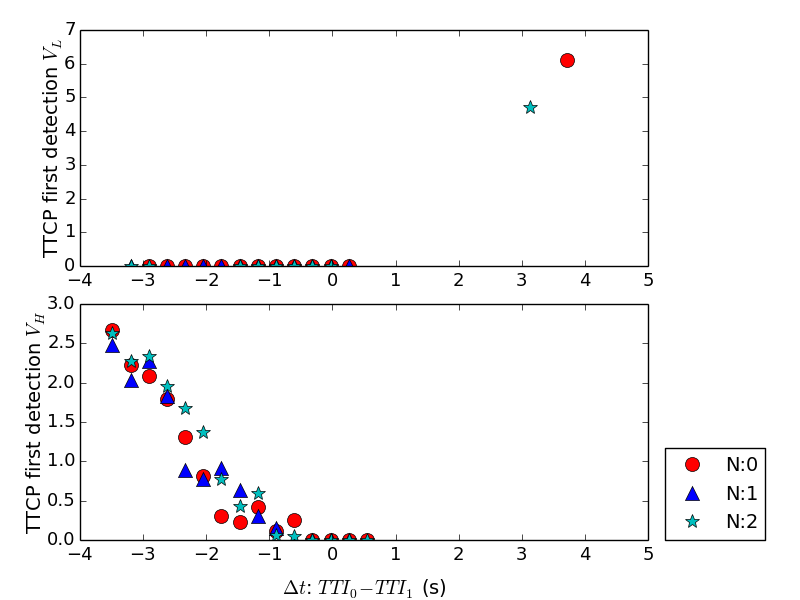}
\end{minipage}%
\begin{minipage}{0.65\textwidth}
\hspace*{-0.75in}
\includegraphics[width=0.85\textwidth]{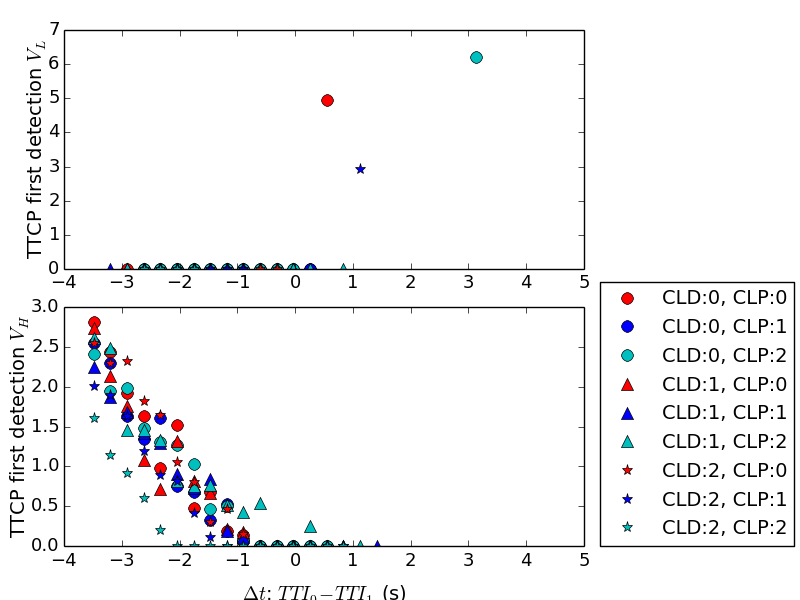}
\end{minipage}
\caption{Time To Collision Point (TTCP) for the earliest EB for \tcre and \tcnre (left) and \tccre (right). Both figures show a linearly declining TTCP with increasing $\Delta t<0$ for \vhigh and a few false alarms (where $\Delta t>0$).}
\label{fig:TTCP_case2}
\end{figure}

\Section{Collisions and Dangerous Situations}

Collisions between the two vehicles was calculated as when both vehicles at the same time were inside the area common to their trajectories if considering the vehicle width. A dangerous situation occurred if their fronts where closer than 4\,m apart, which implies that collisions are a subset of dangerous situations. We expect to see no collisions and at most a few dangerous situations in \sremn and \smn except for in \tcono where \vhigh may not have enough time to detect and stop for the offender (in \sremn). Collisions may on the other hand be present in \sre since the Risk Estimator has a best effort approach to safety.

The number of recorded collisions and dangerous situations for different Evaluation Scenarios can be found in Table \ref{tab:CollisionDanger}. \smn and \sremn yielded lower values than \sre in all cases. \smn resulted in less collisions than \sremn while \sremn resulted in less dangerous situations than \smn. The highest number of collisions and dangerous situations for \sre seems to occur when a communication loss occurs just before \vlow reaches the request line. 

The collisions in the \sremn scenario occurred in \tcoremn, which was expected, but also in \tccremn, which was not expected. Either, the longest period of communication loss for \vlow in \tccremn prevented both the Risk Estimator and Manoeuvre Negotiation protocol from keeping the vehicle safe in this one case, or this collision was caused by an undetected error in during the simulations. \sre resulted in a significantly higher amount of collisions and dangerous situations as expected.


\begin{table}[!ht]
    \centering
    \begin{tabular}{c|c|c|c|c|c|c}
         & Collision-RE & Collision-MN & Collision-RE+MN & Danger-RE & Danger-MN & Danger-RE+MN\\
         \hline
        Normal & 5 & 0 & 0 & 9 & 0 & 0 \\
        \hline
        \hline
        N0 & 5 & 0 & 0 & 8 & 1 & 0 \\
        \hline
        N1 & 1 & 0 & 0 & 8 & 0 & 0\\
        \hline
        \hline
        O & 2 & 1 & 2 & 4 & 4 & 3\\
        \hline
        \hline
        CL0-0 & 5 & 0 & 0 & 8 & 1 & 0\\
        \hline
        CL0-1 & 3 & 0 & 0 & 9 & 1 & 0\\
        \hline
        CL0-2 & 4 & 0 & 0 & 8 & 1 & 0\\
        \hline
        CL1-0 & 2 & 0 & 0 & 9 & 0 & 0\\
        \hline
        CL1-1 & 4 & 0 & 0 & 10 & 0 & 0\\
        \hline
        CL1-2 & 7 & 0 & 0 & 9 & 2 & 0\\
        \hline
        CL2-0 & 3 & 0 & 0 & 7 & 2 & 0\\
        \hline
        CL2-1 & 4 & 0 & 0 & 9 & 0 & 0\\
        \hline
        CL2-2 & 2 & 0 & 1 & 8 & 1 & 1\\
    \end{tabular}
    \caption{Number of test cases that resulted in a collision or a dangerous situation in \sre, \smn, and \sremn for the \tcno (N0), \tcnno (Nx), \tcono (O), and \tccno (CLX-Y). The maximum value is 29 corresponding to at least one collision or dangerous situation in all cases with different start distance $d_1$. \smn and \sremn resulted in much less collisions and dangerous situations than \sre.}
    \label{tab:CollisionDanger}
\end{table}


\Section{Risk Estimator Recall and Precision}

For the Risk Estimator's Emergency Breaks, recall was evaluated as total number of tests where an EB was triggered during a dangerous situation divided by the total number of cases where a dangerous situation occurred. Precision was calculated as the number of true dangerous situations that were detected and triggered an EB divided by the total number of cases were an EB was triggered. We expect a high Recall since the vehicles follow the same motion patterns as the Risk Estimator uses to infer their manoeuvre intention.

Recall and precision values for \sre and \sremn are presented in Table \ref{tab:RecallPrecision}. \vhigh has a recall value of 1.0 in \sre for all test cases, which implies that it was able to detect all dangerous situations despite increased noise, communication loss, and the other vehicle ignoring priorities. However, the lower precision results implies that the Risk Estimator is prone to rise false alarms. Safety is prioritised over false alarms by the Risk Estimator when run by itself. The result for \sremn is less clear since only a few dangerous situations were recorded for this setup, see Table \ref{tab:CollisionDanger}. 

In \sre we got high recall values as expected while more tests will have to be run in order to get a more reliable values for \sremn.

\begin{table}[!ht] 
    \centering
    \begin{tabular}{c|c|c|c|c}
         & Recall-RE & Recall-RE+MN & Precision-RE & Precision-RE+MN  \\
         \hline
        Normal & 0.56/1.0 & -/- & 0.28/0.50 & -/- \\
        \hline
        \hline
        N0 & 0.44/1.0 & -/- & 0.22/0.50 & 0.0/0.0 \\
        \hline
        N1  & 0.38/1.0 & -/- & 0.18/0.47 & -/- \\
        \hline
        \hline
        O & 1.0/1.0 & 0.67/1.0 & 0.19/0.19 & 0.10/0.15\\
        \hline
        \hline
        CL0-0 & 1.0/1.0 & -/- & 0.4/0.4 & -/- \\
        \hline
        CL0-1 & 1.0/1.0 & -/- & 0.38/0.38 & -/- \\
        \hline
        CL0-2 & 0.75/1.0 & -/- & 0.29/0.38 & 0.0/0.0 \\
        \hline
        CL1-0 & 0.78/1.0 & -/- & 0.32/0.41 & -/- \\
        \hline
        CL1-1 & 0.50/1.0 & -/- & 0.24/0.48 & 0.0/0.0 \\
        \hline
        CL1-2 & 0.89/1.0 & -/- & 0.33/0.38 & 0.0/0.0 \\
        \hline
        CL2-0 & 1.0/1.0 & -/- & 0.33/0.33 & 0.0/0.0 \\
        \hline
        CL2-1 & 0.44/1.0 & -/- & 0.21/0.47 & 0.0/0.0 \\
        \hline
        CL2-2 & 0.25/1.0 & 1.0/0.0 & 0.12/0.47 & 0.5/0.0 \\
    \end{tabular}
    \caption{Recall and precision results for \sre and \sremn where values separated with a ``/'' are the values for \vlow and \vhigh respectively. A ``-'' signifies that no value could be computed due to a lack of dangerous situations or emergency breaks. The Risk Estimator performs better on Recall compared to Precision and \vhigh detects all dangerous situations in \sre. }
    \label{tab:RecallPrecision}
\end{table}

\chapter{Discussion}
\label{sec:discussion}


\Section{Preservation of the Original Traffic Rules}

Traffic rules and conventions are designed not only to make driving more safe but also to promote fairness and a high traffic flow. It is therefore important that our combined safety system tries to adhere to the traffic rules and conventions.

It is worth noting that a vehicle $p_i$ following the protocol will only grant another vehicle $p_j$ if $p_i$ predicts that $p_j$ will not interfere with $p_i$'s intended manoeuvre. This is an egoistic approach, since $p_i$ is not prepared to slow down for a lower priority vehicle to, for example, increase the overall throughput. Further more, $p_j$ will only ask $p_i$ if $p_i$ is in $p_j$'s membership -- $p_i$ has higher priority than $p_j$, where lower priority reflects which vehicle that according to the traffic rules should give-way. The combination of the egoistic approach to answering a request and the composition of the memberships therefore implies that the protocol in a failure free traffic situation should preserve priorities set by the traffic rules. This is also shown in the simulations, Figure \ref{sec:Results_TLPV}, where the priority vehicle is shown to slow down for the lower priority vehicle only if communication is lost during the request rounds.









\Section{Suggested extensions}

The Manoeuvre Negotiation Protocol was extended during this project and more extensions are possible. Some of the ones we have identified includes
\begin{itemize}
    \item \textbf{Allowing multiple granted vehicles}. The presented version of the protocol only allows a vehicle to grant one other vehicle at the time. A possible way of extending the protocol to allow multiple held grants without reducing safety is to divide the intersection into multiple critical sections instead of just one by defining conflicting manoeuvres, see Figure \ref{fig:conflicting_dir}. An agent receiving a request can thereby base its decision to grant or deny on if the intended manoeuvres overlap in time \emph{and} space. For example, two vehicles coming from opposite directions that both intend on doing a left turn through an intersection will have conflicting manoeuvres, while if they both intend on doing a right turn then their manoeuvres are not conflicting.
    
    \item \textbf{Extending to more traffic situations}. So far we have only tested the protocol on intersections and it would be interesting to see how well it performs in other traffic situations. It is possible to extend the priority matrix (Section \ref{sec:prioMatrix}) to be valid for roundabouts, merging lanes and possibly general roads to indicate how the traffic priorities are set up and how the memberships for the manoeuvre negotiation should be composed. An extension would also have to be made to the speed models to capture normal driving patterns in these situations.
    
    \item \textbf{Periodically changing the default priorities for fairness and deadlock avoidance}. The default priorities are static in our simulations to reflect the rules in an unsignalled intersection. An extention to signalled intersections could potentially be simulated by letting the default priorities change, indicating a change of priority road. This change would only have to be made by the membership service and would be made present to the vehicles by a changed membership. Changing the default priorities is also a way of increasing fairness and avoiding deadlocks for lower priority vehicles when there is a continuous flow of vehicles on the priority road.
\end{itemize}

\begin{figure}
    \centering
    \includegraphics[width=0.4\textwidth]{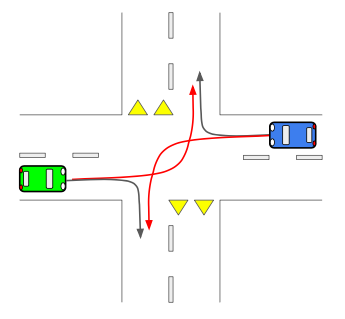}
    \caption{An example of conflicting manoeuvres in red -- both vehicles turning left -- and non-conflicting trajectories in black -- both vehicles turning right.}
    \label{fig:conflicting_dir}
\end{figure}

\Section{Suggested further testing}

\begin{itemize}
    \item \textbf{Varied turns and intersections.} There are more turn combinations than going straight and turning left that may cause collisions in an intersection. It would therefore be beneficial to run more tests with different combinations of manoeuvre intentions. There are also many more intersection types than just 4-way give-way intersections which also could be constructed in our simulation environment and tested.
    
    \item \textbf{Recorded real driving data.} Using real driving data instead of relying on speed models and fixed paths could give a better indication on our system's performance in real traffic scenarios. Recorded data from real vehicles in a 4-way, give-way intersection would then be needed to conduct the tests. The PID-controller on every vehicle in the simulation could be set to follow the recorded path and the recorded speed could function as a speed model.
    
    \item \textbf{Network simulator.} A network simulator could be used to simulate packet delays and message loss in a more realistic way. Packets sent in our setup in ROS are delivered instantly without any delay which would never happen in a real traffic scenario.
    
    \item \textbf{Multiple vehicles.} It could also be worth testing the system's performance when more then 2 vehicles are approaching the intersection, especially if a multi-grant version of the protocol is developed. Adding more vehicles to the simulation would make the intersection more busy and it would be interesting to evaluate the safety and throughput in this scenario as well.

\end{itemize}

\Section{Deadlock avoidance}


Deadlocks are possible in this system since the four conditions for causing deadlocks: mutual exclusion, resource holding, no preemption, and circular wait~\cite{Deadlocks} could be present in the system if the algorithm is not designed properly. Mutual exclusion is used for the whole (or in possible extensions: parts of) the intersection. Resource holding may occur since vehicles requesting for a permission may be required to get a grant from more than one vehicle and vehicles are only allowed to grant one vehicle at the time. No preemption is present if explicit release (section \ref{sec:explicit}) is used since the grant is held until a release message is received or the granted vehicle has left the intersection. Circular wait is possible if more than one vehicle are sending their requests concurrently since vehicles are only allowed to grant one vehicle at any time and grants from all expected respondents are required to get permission to perform the manoeuvre. 
As an example of a deadlock situation: Lets call two agents that concurrently send requests $p_i and p_j$. If $p_i$'s and $p_j$'s sets of requires responses have at least two agents in common, say $p_a$ and $p_b$ then $p_i$ may get a grant from $p_a$ while $p_j$ gets a grant from $p_b$ due to message delays. Agent $p_a$ then have to deny $p_j$ since only one grant can be given at any time, and the same applies to $p_b$'s response to $p_i$'s request. If $p_i$'s request is newer then $p_i$ will release its request and grant $p_j$ instead. But $p_i$ will also release its request since it will receive a deny from $p_b$ and the whole process starts over again.

Mutual exclusion for the intersection is vital for avoiding collisions so efforts should not be put there to remove that deadlock conditions. Getting a grant from every vehicle expected to answer is also a key feature to ensure safety, but it could potentially be possible to grant multiple vehicles and still avoid collisions. However, without multiple grants, what could be avoided is circular wait. If the vehicle with the newest request can be made to back off for a time long enough ($T_A + 2T_D$) for the vehicle with the newest request to get a grant then circular wait will not be possible anymore.

Intersections also have pre-defined priorities which can have an impact on formations of deadlocks. Using time stamps to set priorities of requests enables the avoidance of these deadlocks if the function $\noPriorityViolationDyn$ takes the pre-defined rules into account. A vehicle with a non-empty membership and that has not been fully granted should have the intention to stop and is not allowed to enter the intersection. This vehicle should therefore release its current request if it receives another request with an earlier time stamp.

\chapter{Conclusion}
\label{sec:conclusion}


In this thesis, a Manoeuvre Negotiation protocol intended for increasing the safety for autonomous vehicles have been presented.  A correctness proof was supplied where the requirements set on the protocol was shown to be fulfilled. The requirements stated limitation on the completeness of memberships, when manoeuvre negotiating requests can be initiated, how agents should answer to requests, when a grant should be released, and when an agent is allowed to enter the intersection.

Our simulations further show that the setups with the Manoeuvre Negotiation protocol (MN) used with or without the Risk Estimator (RE) are able to avoid collisions in more cases than when using merely RE. A higher priority vehicle equipped with a RE was shown to be able to detect all risky situation even in the presence additional noise, and communication loss. However, the detection of the risky situation was in many cases not made early enough for the vehicle to be able to break to avoid a collision.
The throughput for the higher priority vehicle when MN was activated was further shown to just marginally deviate from the ideal case which also implies that the protocol preserves the priorities set by the traffic rules. Our combined system thereby states an example of how vehicle-to-vehicle communication can be used alongside existing safety systems to increase safety for autonomous vehicles.

Finally, possibilities in ways to extend to the protocol and  additional testing methods were provided. Possible extensions included allowing vehicles to concurrently grant more than one vehicle at the time and broadening the scope in which the protocol can be used. Further testing is recommended to focus more on varying the driving patterns of the vehicle or even use reordered data from real vehicles to run the vehicles in the simulations.


\begin{thebibliography}{10}

\bibitem{Occupancy}
Matthias Althoff and Silvia Magdici.
\newblock Set-based prediction of traffic participants on arbitrary road
  networks.
\newblock {\em IEEE Transactions on Intelligent Vehicles}, 1(2):187--202, 2016.

\bibitem{IntersectionTTIRDPSDR}
G.~S. Aoude, V.~R. Desaraju, L.~H. Stephens, and J.~P. How.
\newblock Driver behavior classification at intersections and validation on
  large naturalistic data set.
\newblock {\em IEEE Transactions on Intelligent Transportation Systems},
  13(2):724--736, 2012.

\bibitem{AvoidCollision}
Mattias Brännström, Erik Coelingh, and Jonas Sjöberg.
\newblock Model-based threat assessment for avoiding arbitrary vehicle
  collisions.
\newblock {\em IEEE Transactions on Intelligent Transportation Systems},
  11(3):658--669, Sept 2010.

\bibitem{ManoeuvreNegotiation}
Ant\'{o}nio Casimiro and Elad~M.\ Schiller.
\newblock {Membership-based Maneuver Negotiation in Safety-critical Vehicular
  Systems}.
\newblock Technical report, Chalmers University of Technology, Department of
  Computer Science and Engineering, 10 2017.

\bibitem{Deadlocks}
Ed~Coffman, M~J.~Elphick, and Arie Shoshani.
\newblock System deadlocks.
\newblock {\em ACM Computing Surveys}, 3(2):67--78, June 1971.

\bibitem{mastersthesis}
Emelie Ekenstedt.
\newblock Membership-based manoeuvre negotiation in autonomous and
  safety-critical vehicular systems.
\newblock Master's thesis, Department of Computer Science and Engineering,
  Chalmers University of Technology, June, 11 2019.

\bibitem{Scenario_adaptive}
Xinli Geng, Huawei Liang, Biao Yu, Pan Zhao, Liuwei He, and Rulin Huang.
\newblock A scenario-adaptive driving behavior prediction approach to urban
  autonomous driving.
\newblock {\em Applied Sciences}, 7(4):426, 2017.

\bibitem{ParticleFilter}
F.~Gustafsson.
\newblock Particle filter theory and practice with positioning applications.
\newblock {\em IEEE Aerospace and Electronic Systems Magazine}, 25(7):53--82,
  July 2010.

\bibitem{RRT-and-French}
Christos Katrakazas.
\newblock A new methodology for collision risk assessment of autonomous
  vehicles.
\newblock In {\em Transportation Research Board 96th Annual Meeting}, 2017.

\bibitem{IntentionAwareOccupancy}
M.~Koschi and M.~Althoff.
\newblock Interaction-aware occupancy prediction of road vehicles.
\newblock In {\em 2017 IEEE 20th International Conference on Intelligent
  Transportation Systems (ITSC)}, pages 1--8, Oct 2017.

\bibitem{Lefevre2013IntentionAwareRE}
St{\'e}phanie Lef{\`e}vre, Christian Laugier, and Javier Iba{\~n}ez-Guzm{\'a}n.
\newblock {Intention-Aware Risk Estimation for General Traffic Situations, and
  Application to Intersection Safety}.
\newblock Research Report RR-8379, {INRIA}, October 2013.

\bibitem{PrecedingVehicle}
M.~Liebner, F.~Klanner, M.~Baumann, C.~Ruhhammer, and C.~Stiller.
\newblock Velocity-based driver intent inference at urban intersections in the
  presence of preceding vehicles, 2013.

\bibitem{MSM_lanechange}
S.~Noh and K.~An.
\newblock Risk assessment for automatic lane change maneuvers on highways.
\newblock In {\em 2017 IEEE International Conference on Robotics and Automation
  (ICRA)}, pages 247--254, May 2017.

\bibitem{CooperativePerception}
F.~Seeliger, G.~Weidl, D.~Petrich, F.~Naujoks, G.~Breuel, A.~Neukum, and
  K.~Dietmayer.
\newblock Advisory warnings based on cooperative perception.
\newblock In {\em 2014 IEEE Intelligent Vehicles Symposium Proceedings}, pages
  246--252, June 2014.

\bibitem{VirtualTrafficLight}
Ozan~K Tonguz.
\newblock Red light, green light—no light: Tomorrow's communicative cars
  could take turns at intersections.
\newblock {\em IEEE Spectrum}, 55(10):24--29, 2018.

\bibitem{Looming}
J.~Ward, G.~Agamennoni, S.~Worrall, and E.~Nebot.
\newblock Vehicle collision probability calculation for general traffic
  scenarios under uncertainty.
\newblock In {\em 2014 IEEE Intelligent Vehicles Symposium Proceedings}, pages
  986--992, June 2014.

\bibitem{Situational}
Guotao Xie, Xinyu Zhang, Hongbo Gao, Lijun Qian, Jianqiang Wang, and Umit
  Ozguner.
\newblock Situational assessments based on uncertainty-risk awareness in
  complex traffic scenarios.
\newblock {\em Sustainability}, 9(9), 2017.

\bibitem{OptimalCoordination}
Mario Zanon, Sébastien Gros, Henk Wymeersch, and Paolo Falcone.
\newblock An asynchronous algorithm for optimal vehicle coordination at traffic
  intersections this work was supported by copplar (project number 32226302),
  the swedish research council (vr, grant number 2012-4038) and the european
  commission seventh framework (adaptive, grant number 610428).
\newblock {\em IFAC-PapersOnLine}, 50(1):12008--12014, 2017.

\bibitem{LongShort}
A.~Zyner, S.~Worrall, J.~Ward, and E.~Nebot.
\newblock Long short term memory for driver intent prediction.
\newblock In {\em 2017 IEEE Intelligent Vehicles Symposium (IV)}, pages
  1484--1489, June 2017.

\bibitem{RNN}
Alex Zyner, Stewart Worrall, and Eduardo Nebot.
\newblock A recurrent neural network solution for predicting driver intention
  at unsignalized intersections.
\newblock {\em IEEE Robotics and Automation Letters}, 3(3):1759--1764, 2018.

\end{thebibliography}

\appendix

\chapter{An alternative implementation of \texorpdfstring{$\noPriorityViolationStat$}{Lg}}
\label{sec:nPV_Static_alt}

The purpose of $\noPriorityViolationStat$ is to determine if a priority violation can possibly occur between two vehicles when priorities are static. To draw the right conclusion, the algorithm has to consider both normal and extreme driving behaviour, which can be done with the help of occupancy prediction~\cite{Occupancy}. The main idea of occupancy prediction is to, given a start state of an agent, provide an upper limit to the agent's position some time later using constraints on vehicle motion.


Pseudocode for $\noPriorityViolationStat$ is given in \ref{alg:nPVStatic}. The procedure takes four input parameters including the agent states of two vehicles, $p_i$ and $p_j$, the time from which calculations should begin, $t$, and a time step $\delta_t$. The states $p_i$ and $p_j$ are used to compute the initial occupancy area $\mathcal{O}_i^0$ and $\mathcal{O}_j^0$ at the time $t$.

An occupancy area $\mathcal{O}_i^k$ covers every possible position the agent represented by $p_i$ can have in the time interval for which it was computed. Both physical and legal constraints, such as limited engine power and speed limits~\cite{Occupancy}, can be applied to predict $\mathcal{O}_i^k$ from $\mathcal{O}_i^{k-1}$. The pseudocode uses the interface $getOccupancyArea$ to calculate these areas.

The implementation approach for $\noPriorityViolationStat$ considers the intersection of the occupancy areas $\mathcal{O}_i^k \cap \mathcal{O}_j^k$ for consecutive time intervals $\Delta t_k = [t + k \delta_t, t + (k+1) \delta_t]$, where $k = 0,1,2 ...$. Traffic priorities are preserved by letting an agent $p_k$ grant a request from another agent $p_j$ if, and only if, $\mathcal{O}_i^k \cap \mathcal{O}_j^k = \emptyset \forall k \in [0,M] : ((hasLeftIntersection(\mathcal{O}_i^M) \lor hasLeftIntersection(\mathcal{O}_j^M)) \land (\neg hasLeftIntersection(\mathcal{O}_i^{M-1}) \land \neg hasLeftIntersection(\mathcal{O}_j^{M-1}))$, that is, $\Delta t_M$ is the first time interval after which $p_i$ or $p_j$ is guaranteed to have left the intersection. 


Further constraints in the predicted vehicle motions can be applied if multiple vehicles are present in a traffic scene~\cite{IntentionAwareOccupancy}. Let $\{p_i\}_{i=0^N}$ be $N$ vehicles in a queue on a one-lane road where $p_1$ is the leading vehicle. The occupation areas $A_j(p_i) \forall i \in [2,M]$ will then be bounded in the forward direction by $A_j(p_1)$ at any time instant j as long as the road does not expand to a multi-lane road. 

The presented approach should be able to  predict all possible priority violations caused by \emph{predictable} vehicle motion since the approach relies on the physical constraints of the vehicle. We define predictable vehicle motion as all kinds of driving in the absence of rare natural events such as appearing sink holes, rockfall, sudden flooding etc, which can cause the vehicle to move in unexpected ways.

This version of $\noPriorityViolation$ was selected to be used by the membership service to ensure that a membership for an agent $p_i$ contains all other vehicles that could potentially cause a priority violation with $p_i$.
However, this approach is not efficient regarding throughput in normal traffic situations where most vehicles follow normal trajectory and speed patterns. We therefore decided to employ another version of $\noPriorityViolation$ based on behaviour models and intersection occupancy intervals for making the grant or deny decision. This version is presented in Section \ref{sec:nPV_Dynamic}.

\begin{algorithm}[ht!]
	\begin{\VCalgSize}
		\textbf{Structures:}\\
		$AgentState = (ta,x,v,a)$\tcp*[r]{timestamp, position, velocity, acceleration}
		
		\ \\
		\textbf{Interfaces:}\\
	    $getOccupancyArea(t_s, t_f, p)$: \;
	    $hasLeftIntersection(\mathcal{O}(p))$: returns $\true$ if all predicted states of p, $\mathcal{O}(p)$, are outside and moving away from the intersection\; 

        \ \\
		\textbf{procedure} $\noPriorityViolationStat(p_i, p_j, t, \delta_t)$ \Begin{
		    \textbf{let} $t_c = t$\;
		    \textbf{let} $\mathcal{O}_{i} = getOccupancyArea(p_i.ts, t, p_i)$\;
		    \textbf{let} $\mathcal{O}_{j} = getOccupancyArea(p_j.ts, t, p_j)$\;
		    \textbf{while} $\neg(hasLeftIntersection(\mathcal{O}_{i}) \lor hasLeftIntersection(\mathcal{O}_j))$ \textbf{do} \\ 
		    \hspace{1em} \textbf{let} $\mathcal{O}_{i} = getOccupancyArea(t_c,t_c+\delta_t,\mathcal{O}_{i})$\;
		    \hspace{1em} \textbf{let} $\mathcal{O}_{j} = getOccupancyArea(t_c,t_c+\delta_t, \mathcal{O}_{j})$\;
		    \hspace{1em} \lIf{$\mathcal{O}_{i} \cap \mathcal{O}_{j} \neq \emptyset$}{\textbf{return} $\false$}
		    \hspace{1em} \textbf{let} $t_c = t_c+\delta_t$\;
			{\textbf{return} $\true$}\;
		}
	\end{\VCalgSize}
	\caption{$\noPriorityViolationStat$ relying on static priorities.}
	\label{alg:nPVStatic}
\end{algorithm}

\end{document}